\def\Z2{\mathbb Z_2}
\def\zero{{\bf 0}}
\def\u{{\bf u}}
\def\v{{\bf v}}
\def\w{{\bf w}}
\def\N{\mathcal N}
\def\S{\mathcal S}
\def\I{\mathrm i}
\def\unity{\mathbbm 1}
\def\C{\mathcal C}
\def\pr{{\rm prob}}
\def\V{\mathcal V}
\def\Ls{L_{\mathrm s}}
\newcommand{\SympForm}[2]{\langle #1 , #2 \rangle}
\theoremstyle{definition}
\newtheorem{Lemma}{Lemma}
\newtheorem{Definition}[Lemma]{Definition}
\newtheorem{Theorem}[Lemma]{Theorem}
\begin{document}
\title{Mixing and localisation in random time-periodic quantum circuits of Clifford unitaries}

\author{Tom Farshi}
\email[]{t.farshi.17@ucl.ac.uk}
\affiliation{Department of Computer Science, University College London, United Kingdom}
\affiliation{Department of Physics and Astronomy, University College London, United Kingdom}
\author{Daniele Toniolo}
\affiliation{Department of Computer Science, University College London, United Kingdom}
\affiliation{Department of Physics and Astronomy, University College London, United Kingdom}
\author{Carlos E.~Gonz\'alez-Guill\'en} 
\affiliation{ Dept Matem\'atica Aplicada a la Ingenier\'\i a Industrial, Universidad Polit\'ecnica de Madrid, Spain.}
\author{\'Alvaro M.~Alhambra}
\affiliation{ Max-Planck-Institut für Quantenoptik, Garching, Germany}
\affiliation{Perimeter Institute for Theoretical Physics, Canada}

\author{Lluis Masanes} 
\affiliation{Department of Computer Science, University College London, United Kingdom}
\affiliation{London Centre for Nanotechnology, University College London, United Kingdom}

\date\today 

\begin{abstract}\noindent
How much does local and time-periodic dynamics resemble a random unitary? 
In the present work we address this question by using the Clifford formalism from quantum computation.
We analyse a Floquet model with disorder, characterised by a family of local, time-periodic, random quantum circuits in one spatial dimension.
We observe that the evolution operator enjoys an extra symmetry at times that are a half-integer multiple of the period.
With this we prove that after the scrambling time, namely when any initial perturbation has propagated throughout the system, the evolution operator cannot be distinguished from a (Haar) random unitary when all qubits are measured with Pauli operators.
This indistinguishability decreases as time goes on, which is in high contrast to the more studied case of (time-dependent) random circuits.
We also prove that the evolution of Pauli operators displays a form of mixing. 
These results require the dimension of the local subsystem to be large. In the opposite regime our system displays a novel form of localisation, 
produced by the appearance of effective one-sided walls, which prevent perturbations from crossing the wall in one direction but not the other.
\end{abstract}

\maketitle
\tableofcontents

\section{Introduction}

The distinction between chaotic and integrable quantum dynamics \cite{LesHouches_1989} plays a central role in many areas of physics, like the study of equilibration \cite{Short_2012}, thermalisation \cite{Rigol_2008}, and related topics like the eigenstate thermalisation hypothesis \cite{D'Alessio_review_2016, Deutsch_2013}, quantum scars \cite{Turner_2018, Bernevig_2021}, and the generalised Gibbs ensemble \cite{Essler_review_2016}.
This distinction is also important in the characterisation of many-body localisation \cite{Imbrie_2016}, the holographic correspondence between gravity and conformal field theory \cite{Maldacena_1999}, and in arguments concerning the black-hole information paradox \cite{Hayden_2007, Maldacena_Susskind_2013}.
Despite all this, the precise definitions of quantum chaos and integrability are still being debated \cite{Prosen_2007, Caux_2011, Yuzbashyan_2016}.
However, it is well established that the dynamics of quantum chaotic systems shares important features with random unitaries \cite{Haake_book}. These are the unitaries obtained with high probability when sampling from the unitary group of the total Hilbert space of the many-body system according to the uniform distribution (Haar measure \cite{Simon_book_repr}).

In order to find signatures of quantum chaos in physically relevant systems, it is a common practice to identify in them aspects of random unitaries. Some of these are: the presence of eigenvalue repulsion in the Hamiltonian \cite{Prosen_2018_1, Mondal_2019}, fast decay of out-of-time order correlators \cite{Maldacena_Shenker_Stanford, Kitaev_2019, Chalker_PRX_2019}, entanglement spreading \cite{Prosen_2019}, operator entanglement \cite{Bertini_2019}, entanglement spectrum \cite{Chamon_2014, Zhou_2019}, and Loschmidt echo \cite{Yan_2019}.
In this work we take a more operational approach and analyse setups in which the evolution operator of a system is physically indistinguishable from a random unitary.
We quantify this indistinguishability with a variant of the quantum-information notion of unitary $2$-design \cite{Eisert_2007}.
A set of unitaries $\mathcal U \subset {\rm SU}(2^n)$ forms a $2$-design if, despite having access to $2$ copies of a given unitary $U$, we cannot discriminate between the case where $U$ is sampled from $\mathcal U$ or from ${\rm SU}(2^n)$.
Because of this, unitaries sampled from $\mathcal U$ are called pseudo-random.
In our weaker variant of $2$-design we restrict the class of measurements available for this discrimination process to multi-qubit Pauli operators.
%A set of unitaries $\mathcal U \subseteq {\rm SU}(d)$ forms a $k$-design if the average action of $U\in \mathcal U$ on $k$ copies of the Hilbert space $\mathbb C^d$ is identical to the average action of all unitaries in ${\rm SU}(d)$, that is
%\begin{equation}\label{eq:design}
%  \sum_{U\in\, \mathcal U} \frac 1 {|\mathcal U|}\,
%  U^{\otimes k} \rho\, U^{\otimes k\, \dagger}
%  =\int_{{\rm SU}(d)} \hspace{-7mm} dU\,
%  U^{\otimes k} \rho\, U^{\otimes k\, \dagger}\ ,
%\end{equation}
%for any state $\rho$ acting on $(\mathbb C^d)^{\otimes k}$. 
%The Haar measure $dU$ can be understood as the uniform distribution over ${\rm SU}(d)$ because it is invariant under right and left multiplication \cite{Simon_book_repr}.
%The higher $k$ is the more similar the two probability measures are (note that a $k$-design is also a $(k-1)$-design).
To define our set $\mathcal U$ we consider a model with (spatial) disorder, where each element of $\mathcal U$ is the evolution operator $W(t)$ at a fixed time $t$ generated by a particular configuration of the disorder (see Figure~\ref{fig:CircuitFigure} for $W(2)$).
%, that is given sampling the two-site Clifford $ U_x $ that are identically independent distributed. 

%, height=55mm

\begin{figure}
%    \centering
    \hspace{-3mm}
    \includegraphics[width=152mm, height=51mm]
    {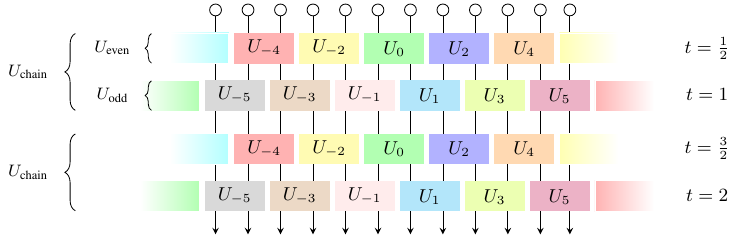}
    %\vspace{-5mm}
    \caption{{\bf Time-periodic local dynamics.} This figure illustrates the physical model analysed in this work. The circles on top represent  lattice sites, each consisting of $N$ qubits. Coloured blocks represent two-site unitaries, and different colours stand for independently and identically distributed Clifford unitaries, representing the spatial disorder. After the first two half time-steps the dynamics repeats itself.}
    \label{fig:CircuitFigure}
\end{figure}

%If the ensemble of evolution operators $\mathcal U$ satisfies \eqref{eq:design} then any quantity involving the average over the disorder of any polynomial of degree $k$ of the evolution operator $W(t)$ can be calculated by performing the much simpler average over ${\rm SU}(d)$.
%Which implies that the model inherits all properties of random unitaries that can be defined via polynomials of order $k$ or less. For example: entanglement spreading and scrambling of quantum information ($k=2$) \cite{Page_1993, Hayden_2007}, small value of out-of-time-order correlators ($k=4$) \cite{Huang_2019}, Loschmidt echo ($k=1$) \cite{Yan_2019}, deep spectral form factor and eigenvalue repulsion ($k=1$) \cite{Chalker_2018, Prosen_2018_2}.

In this work we consider a spin chain with $L$ sites and periodic boundary conditions, where each site contains $N$ modes or qubits.
The first dynamical period consists of two half-steps. 
In the first half-step each even site interacts with its right neighbour with a random Clifford unitary (for the definition of the Clifford group see Appendix~\ref{app:CliffordPhaseSpace} or \cite{Koenig_2014}) and in the second half-step each odd site interacts with its right neighbour with a random Clifford unitary.
These $L$ Clifford unitaries are independent and uniformly sampled from the $2N$-qubit Clifford group.
The subsequent periods of the dynamics are repetitions of the first period, as illustrated in Figure \ref{fig:CircuitFigure}.
If we denote by $U_x$ the above-mentioned unitary action on sites $x$ and $x+1$ (modulo $L$ due to periodic boundary conditions) then the evolution operator after an {\em integer} time $t$ is 
\begin{align}\label{eq:integer}
  W(t) & = \big[ (U_1 \otimes U_3 \otimes  \cdots \otimes U_{L-1}) (U_0 \otimes U_2 \otimes \cdots \otimes U_{L-2}) \big]^{t} 
  \\ \nonumber & =
  (U_{\text{odd}}U_{\text{even}})^{t} = (U_{\text{chain}})^{t}\ ,
\end{align}
and after a {\em half-integer} time $t$ is 
\begin{align}\label{eq:halfinteger}
  W(t) 
  = U_{\text{even}}\, (U_{\text{chain}})^{t-1/2} 
 \ .
\end{align}
This evolution operator can also be generated by a time-periodic Hamiltonian $H(t)$ with nearest-neighbour interactions
\begin{equation}
  W(t) = \mathcal T e^{-\I \int_0^t d\tau H(\tau)}\ ,
\end{equation}
where $\mathcal T$ is the time-ordering operator.
This type of dynamics is called Floquet. Floquet dynamics in relation with the phenomenon of quantum chaos has been studied, among others, by Prosen and coauthors \cite{Prosen_2018_1, Prosen_2018_2, Prosen_2019, Bertini_2019}, with the review \cite{Prosen_2007}. A general review on quantum Floquet systems is reference \cite{Bukov_review_2015}. In the quantum information community the term QCA, quantum cellular automaton, commonly denotes such periodic systems \cite{Schlingemann_2008}, a review is \cite{Farrelly_2019}. The authors of \cite{Sunderhauf_2018} considered a QCA, with the same structure as us but with Haar-distributed unitaries gates instead of Cliffords.
(Another Floquet-Clifford model has been studied in \cite{Chandran_2015}.)
It is important to stress that this time-periodic model is very different from the much more studied time-dependent ``random circuits" \cite{Harrow_2009, Harrow_2018, Brandao_2016, Brandao_2016a, Hunter_Jones_2019, Nakata_2017, Brandao_2019} depicted in Figure~\ref{fig:RandomCircuitFigure}. Time-periodic circuits are more difficult to analyse but more relevant to physics; since they enjoy a (discrete) time translation symmetry.
\begin{figure}
    \centering
    \includegraphics[width=142mm]
    {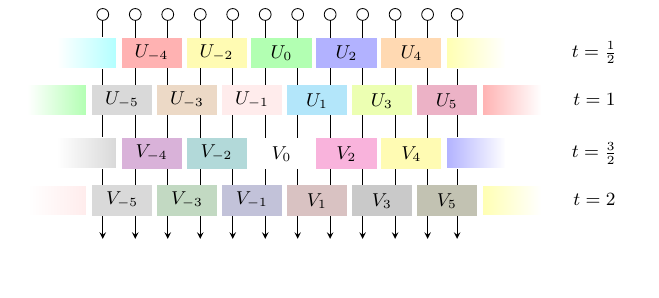}
    \vspace{-12mm}
    \caption{{\bf Time-dependent local dynamics.} In contrast to Figure~\ref{fig:CircuitFigure} the pictured circuit is not periodic in time: different time steps are independently sampled.}
    \label{fig:RandomCircuitFigure}
\end{figure}

We show that the ensemble of evolution operators at half-integer time \eqref{eq:halfinteger} has a larger symmetry than that at integer times \eqref{eq:integer}.
This allows us to prove approximate Pauli mixing \cite{Webb_2016}: each Pauli operator evolving with the random dynamics \eqref{eq:halfinteger} reaches any other Pauli operator inside its light cone with approximately equal probability.
In the integer-time case this only holds for a restricted class of initial operators, which includes local ones.
We also prove that at any half-integer time after the scrambling time $t_{\rm scr}$, the ensemble of evolution operators \eqref{eq:halfinteger} cannot be operationally distinguished from Haar-random unitaries (in the sense specified above).
We define the scrambling time $t_{\rm scr}$ as the smallest time allowing for any local perturbation to reach the entire system (in our model $t_{\rm scr} = L/4$).
In all these results, the degree of approximation increases with $N$ and decreases with time $t$.

%The integer case \eqref{eq:integer} can be understood as time-independent dynamics for discrete time. It is proven in \cite{Huang_2019} that time-independent systems cannot generate designs...... 

Besides many-body physics, our results are relevant to the field of quantum information. The authors of  \cite{Emerson_2003} design a protocol to generate pseudo-random unitaries. Many quantum information tasks make use of unitary designs (entanglement distillation~\cite{Bennett_1996a}, quantum error correction \cite{Abeyesinghe_2009, Brandao_2016a}, randomised benchmarking \cite{Magesan_2011}, quantum process tomography \cite{Scott_2008}, quantum state decoupling~\cite{Brown_2015} and data-hiding \cite{DiVincenzo_2002}). In most current implementations of quantum information processing qubits are measured in a fixed basis, a particular case of our Pauli measurements. Hence we expect that our variant of 2-designs restricted to Pauli measurements will be useful in some of these applications, in particular on architectures where a time-periodic drive is more feasible than a time-dependent drive. It is worth mentioning that Google's quantum supremacy demonstration \cite{Arute_2019} is based on the statistics of multi-qubit Pauli measurements after pseudo-random unitary dynamics; and that their random circuit consists of time-dependent single-qubit gates and time-periodic two-qubit gates.

In the model under consideration, the number of modes per site $N$ is a free parameter that controls the behaviour of the system.
In the large-$N$ regime ($N \gg \log L$) we obtain the above-mentioned indistinguishability between the evolution operator \eqref{eq:halfinteger} and a Haar-random unitary, which increases with $N$. 
(We recall that large-$N$ is the relevant regime in holographic quantum gravity.) 
In the opposite regime ($N \ll \log L$) our model displays a novel form of localisation 
produced by the appearance of effective one-sided walls that prevent perturbations from crossing the wall in one direction but not the other.   
Interestingly, this localised phase seems to challenge the existing classification. On one hand, our model is not a system of free or interacting particles, so it does not fit in the framework of Anderson localisation. On the other hand, the evolution of each local operator is strictly confined to a finite region, so it does not behave as many-body localised. See \cite{Imbrie_2016} for the description of the differences among Anderson and many-body localisation and \cite{Nandkishore_2015} for a recent physicists' review on localisation phenomena.

This model also challenges the classification of integrable and chaotic quantum systems.
On one hand, it has a phase-space description of the dynamics like that of quasi-free bosons and fermions, and it can be efficiently simulated on a classical computer \cite{Aaronson_2004, Gottesman_1998}. See reference \cite{Koenig_2014} for an algorithmic classification of the elements of the Clifford group and appendix \ref{app:CliffordPhaseSpace} for the description of the Clifford group phase space. 
On the other hand, this model does not have anything close to local (or low-weight) integrals of motions, and it behaves like Haar-random dynamics in a way that quasi-free systems do not.
%Because quasi-free unitaries commute with the number operator (bosons) or the parity operator (fermions).
Therefore, we believe that Clifford dynamics is valuable for mapping the landscape of many-body phenomena.
It is also important to recall that we live in the age of synthetic quantum matter, see the review \cite{Bloch_2012}, and models similar to ours have actually been implemented on quantum simulators, like in the recent Google's experiment \cite{Arute_2019}.
%Hence, ours is not only a toy model, but a potential form of ``design quantum matter".
%(Clifford dynamics has already been studied in \cite{Chandran_2015}....)

%{\color{blue} We propose that the comparison of an evolution operator $W(t)$ with a Haar unitary $U$ has to be made at the scrambling time $t=t_{\rm scr}$. Beyond this time one has to consider multiples of the scrambling time, that is, compare $W(k\, t_{\rm scr})$ with $U^k$.Results in \cite{Brandao} \cite{Brandao_2019} suggest that GUE Hamiltonians resemble Haar unitaries in this sense.}

In the following section we present our results on mixing (sections \ref{sec:ergo} and \ref{sec:ergo1/2}), pseudo-random unitaries (Section \ref{sec:2Design}) and strong localisation (Section \ref{sec:StrongLocalisation}).
In order to do so we introduce a few mathematical notions beforehand (Section \ref{sec:prel}). The proofs of the theorems presented in the Section \ref{results} are presented in sections \ref{app:LocalDynamics_PauliMixing}, \ref{app:PauliMixing_SemiInteger}, \ref{app:PauliMixing_Integer}, \ref{app:Ergodicity_AritraryInitial}, \ref{app:ApproxDesignSemiTime}, \ref{app:CliffordLocalisation}, together with several other lemmas. 
In Section \ref{sec:scrambling time} we discuss the physical significance of the scrambling time.
In Section \ref{sec:int/chao} we discuss the difficulties with classifying our model as integrable or chaotic.
We compare time-periodic and time-independent circuits in Section \ref{sec:t dep}.
In Section \ref{sec:conc} we provide the conclusions of our work.
The appendices include an introduction to the Clifford formalism, appendix \ref{app:CliffordPhaseSpace}, and some additional lemmas, appendix \ref{app:sympl_order} and \ref{app:AdditionalLemmas}.

\section{Results}\label{results}

\subsection{Preliminaries}\label{sec:prel}

An $n$-qubit {\em Pauli operator} is a tensor product of Pauli sigma matrices and one-qubit identities times a global phase $\lambda \in \{1,\I, -1, -\I\}$.
In what follows we ignore this global phase $\lambda$, so each Pauli operator is represented by a binary vector $\u = (q_1, p_1, q_2, p_2, \ldots, q_n, p_n ) \in \{0,1\}^{2n}$ as 
\begin{align}
  \label{eqb:Pauli element}
  \sigma_\u
  =
  \bigotimes_{i=1}^n 
  (\sigma_x^{q_i} \sigma_z^{p_i})\ .
\end{align}
Ignoring the global phase $\lambda$ allows us to write the product of Pauli operators as the simple rule $\sigma_\u \sigma_{\u'} = \lambda\, \sigma_{\u +\u'}$, where addition in the vector space $\{0,1\}^{2n}$ is modulo 2.
This defines the Pauli group, which is the discrete analog of the Weyl group, or the displacement operators used in quantum optics.

The $n$-qubit {\em Clifford} group $\C_n \subseteq {\rm SU}(2^n)$ is the set of unitaries $U$ which map each Pauli operator onto another Pauli operator $U\sigma_\u U^\dagger = \lambda\, \sigma_{\u'}$.
Each Clifford unitary $U$ can be represented by a $2n\times 2n$ symplectic matrix $S$ with entries in $\{0,1\}$ such that its action on Pauli operators can be calculated in phase space 
\begin{equation}
  U\sigma_\u U^\dagger 
  = \lambda\, \sigma_{S\u}\ ,
\end{equation}
where the matrix product $S\u$ is defined modulo 2. 
We call the binary vectors $\u \in \{0,1\}^{2n}$ the {\em phase space} representation of the Pauli operator $\sigma_\u$ because of its analogy with quasi-free bosons, where dynamics is also linear and symplectic.
%
%Using the fact that Paulies form an orthonormal basis for the full space of operators, we see that the matrix $S$ also encodes the action of $U$ on non-Pauli operators.
%
A detailed introduction to the discrete phase space and Clifford and Pauli groups is provided in Appendix~\ref{app:CliffordPhaseSpace}, see also the references \cite{Koenig_2014,Aaronson_2004,Gottesman_1998}.
Note that Clifford unitaries are easy to implement in several quantum computation and simulation architectures.
%Each Clifford unitary can be generated by composing the following one- and two-qubit gates: Hadamard, phase-gate and control-NOT. 
 
Our model is an $L$-site lattice with even $L$ and periodic boundary conditions. 
The corresponding phase space can be written as
\begin{equation}
  \label{eqb:V chain}
  \mathcal V_{\rm chain} 
  =
  \bigoplus_{x\in \mathbb Z_L} \mathcal V_x
  \ ,
\end{equation}
where $\mathcal V_x \cong \Z2^{2N}$ is the phase space of site $x\in \mathbb Z_L$, which represents $N$ qubits.
A local Pauli operator $\sigma_\u$ at site $x$ is represented by a phase-space vector contained in the corresponding subspace $\u \in \mathcal V_{x} \subseteq \mathcal V_{\rm chain}$.
The identity operator corresponds to the zero vector.
(See Section \ref{app:DesciptionOfModel} for a collated description of the model and its phase space description). In the following we will denote $ S(t) $ the symplectic matrix acting on the space $ \mathcal V_{\rm chain} $ associated with the evolution operator $ W(t) $ as defined by equations \eqref{eq:integer} and \eqref{eq:halfinteger}.

\subsection{Approximate Pauli mixing}\label{sec:ergo}

A set of unitaries $\mathcal U$ is Pauli mixing if a uniformly sampled unitary $U\in \mathcal U$ maps any non-identity Pauli operator $\sigma_\u$ to any other $\sigma_{\u'} = U\sigma_\u U^\dagger$ with uniform distribution \cite{Webb_2016}.
Let us see that our model displays an approximate form of this property.

%We say that a dynamical system is mixing if the evolution $W(t) A W(t)^\dagger$ of any initial operator $A$ explores the whole space of operators.
%Most elements of this space are not Pauli operators, but large linear combination of them.
%Therefore our model cannot be mixing, because in it, if an initial operator is Pauli then its evolution remains Pauli at all times.
%However, our model displays an approximate form of Pauli mixing when we average over the disorder. 
%(See \cite{Zhuang_2019} for a different approach to ergodicity within Clifford dynamics.) Let us describe this feature.

Each sequence of two-site Clifford unitaries $U_0, \ldots, U_{L-1}$ defines an evolution operator $W(t)$ via equations (\ref{eq:integer}-\ref{eq:halfinteger}), which maps each Pauli operator $\sigma_\u$ to another Pauli operator $\sigma_{\u'} = \lambda W(t) \sigma_\u W(t)^\dagger$.
This deterministic map $\u \mapsto \u'$ becomes probabilistic when we let $U_0, \ldots, U_{L-1}$ be random.
In this case, the probability that $\u$ evolves onto $\u'$ after a time $t$ is
\begin{equation}\label{Prob:final}
  P_t (\u'|\u) = \mathop{\mathbb E}_{\{U_x\}}
  \left| 2^{-NL} \tr(\sigma_{\u'} W(t) \sigma_{\u} W(t)^\dagger) \right|\ ,
\end{equation}
where we use the orthogonality of Paulies $\tr(\sigma_{\u'}\sigma_{\u}) = 2^{NL} \delta_{\u' \u}$.
The locality of the dynamics (see Figure~\ref{fig:CircuitFigure}) implies that only operators inside the light cone of the initial operator $\sigma_\u$ have non-zero probability \eqref{Prob:final}.
For example, if the initial operator $\sigma_\u$ is supported at the origin $x=0$, then after a time $t$, the evolved operator $\sigma_{\u'}$ must be fully supported inside the light cone $-2t+1 \leq x\leq 2t$. This means that the corresponding phase space vector $\u'$ is in the causal subspace 
\begin{equation}\label{eq:cc}
  \u' \in 
  \bigoplus_{x \in [-2t+1, 2t]} \mathcal V_x\ .
\end{equation}
The time $t$ at which the causal subspace becomes the whole system is the scrambling time $t_{\rm scr} = L/4$.
Section~\ref{sec:scrambling time} discusses the physical significance of this time scale.

When the distribution~\eqref{Prob:final} is approximately uniform inside the light cone we say that the random dynamics displays approximate Pauli mixing.
Let $Q_{t}(\u')$ denote the uniform distribution over all non-zero vectors $\u'$ in the causal subspace \eqref{eq:cc}, therefore after the scrambling time $t\geq t_{\rm scr}$, $Q_{t}(\u')$ is the uniform distribution over all non-zero vectors in the total phase space $\mathcal V_{\rm chain}$.
The following theorem from Section \ref{app:PauliMixing_Integer} proves approximate Pauli mixing for initially local operators.

\vspace{6mm}
\noindent
{\bf Theorem} \ref{lemma:lightcone} (Approximate Pauli mixing). If the initial Pauli operator $\sigma_\u$ is supported at site $x=0$ then the probability distribution~\eqref{Prob:final} for its evolution $\sigma_{\u'}$ is close to uniform inside the light cone, that is, 
\begin{equation}\label{res:ergod_intro}
  \sum_{\u'} 
  \left| P_t(\u'|\u) - Q_t(\u') \right|
  \ \leq\ 130\times t^2\, 2^{-N}\ ,
\end{equation}
for any integer or half-integer time $t \in [1/2 , 2 t_{\rm scr} ]$.
An analogous statement holds for any other initial location $x\neq 0$.

\vspace{6mm}

\noindent
The above bound is useful in the large-$N$ limit ($N\gg \log t$).
In the opposite regime ($N\ll \log L$) mixing cannot take place, since the system displays a strong form of localisation, in which local operators are mapped onto quasi-local operators.
This phenomenon is illustrated in Figure~\ref{fig:localisation} and detailed in Section~\ref{sec:StrongLocalisation}.

The error \eqref{res:ergod_intro} increases with time due to time correlations and dynamical recurrences (see Section~\ref{sec:scrambling time}).
Hence, as time goes on the character of the system is less mixing, which is the opposite of what happens in time-dependent dynamics (see Section~\ref{sec:t dep}).
Also note that at integer times $t$ our model can be considered to be time-independent (instead of time-periodic) with discrete time.

The above mixing result only applies to local initial operators. Next, we present a different result that applies to a large class of non-local initial operators.
However, due to the complexity of the problem, we only analyse their evolution inside a region $\mathcal R = \{1, \ldots, \Ls\} \subset \mathbb Z_L$.
%The following is Lemma \ref{lem:IntegerTime_SubsystemDistance} from Appendix \ref{app:Ergodicity_AritraryInitial}.

\vspace{6mm}
\noindent
{\bf Theorem} \ref{lem:IntegerTime_SubsystemDistance}. 
Consider an initial vector $\u^0 \in \V_{\rm chain}$ with non-zero support in all lattice sites ($\u_x^0 \neq\zero$ for all $x\in \mathbb Z_L$).
Consider the evolved vector $\u^t = S(t) \u^0$ inside a region $x\in \{1, \ldots, \Ls\} \subseteq \mathbb Z_L$ where $\Ls$ is even and the time is $t \leq \frac{L - \Ls}{4}$.
If $\u^t_{[1,Ls]}$ is the projection of $\u^t$ in the subspace $\bigoplus_{x=1}^{\Ls} \V_x$ then
\begin{equation}
\sum_{\v \in \Z2^{2N\Ls}} \left| \pr\{\v=\u^t_{[1,Ls]}\} -\frac 1 {2^{2 N \Ls}}\right|
  \leq 
32\, t\, 2^{-N} ( 2 \Ls  +3^{\frac{\Ls}{2} + 1} ) +  4 L 2^{-2N} \ .
\end{equation}

% Consider an initial vector $\u^0 \in \V_{\rm chain}$ with non-zero support in all lattice sites $x\in \mathbb Z_L$.
% Denote by $\u^t_{\mathcal R}$ the projection onto the subspace $\V_{\mathcal R} =  \bigoplus_{x\in \mathcal R} \V_x$ of the evolved vector $\u^t = S(t) \u^0$ in the region $\mathcal R = \{1, \ldots, \Ls\}$ with even size $\Ls$.
% When averaging over circuits, this projection is approximately uniform, \begin{equation}
% \sum_{\v \in \V_{\mathcal R}} \Big| \pr\{\v=\u^t_{\mathcal R}\} -\frac 1 {|\V_{\mathcal R}|}\Big|
%   \leq 
% 34\times t\, 2^{c|\mathcal R|-N} \ ,
% \end{equation}
% at times $t \in [1, \frac{L - |\mathcal R|}{4}]$, where we define the constant $c= \log_2\sqrt 3$. 

\vspace{6mm}

\subsection{Pauli mixing at half-integer time}\label{sec:ergo1/2}

The evolution operator of our model $W(t)$ has an extra symmetry at half-integer time $t$ (see Section \ref{app:Twirling_PauliInvariance}).
This allows us to prove a mixing result that is stronger than those of the previous section. 
Specifically, the following theorem (from Section \ref{app:PauliMixing_SemiInteger}) applies to any initial Pauli operator instead of only local ones.

\vspace{6mm}
\noindent
{\bf Theorem} \ref{lem:tHalfDistance}. 
Let $\sigma_{\u'} = \lambda W(t) \sigma_\u W(t)^\dagger$ be the evolution of any initial Pauli operator $\sigma_\u \neq \unity$.
At any half-integer time $t$ larger than the scrambling time, in the interval $t \in [ t_{\rm scr} , 2  t_{\rm scr} ]$ the probability distribution~\eqref{Prob:final} for the evolved operator $\sigma_{\u'}$ is close to uniform, that is,
\begin{equation}\label{res:ergo}
  \sum_{\u'} 
  \left| P_t(\u'|\u) - Q_t(\u') \right|
  \ \leq\ 33 \times t\, L\, 2^{-N}\ .
\end{equation}

\vspace{6mm}
\noindent
The fact that mixing is more prominent at half-integer multiples of the period is not restricted to Clifford dynamics, since it applies to a large class of periodic random quantum circuit or Floquet dynamics with disorder.
In particular, it holds in any circuit where the two-site random interaction $U_x$ includes one-site random gates $V_x$ that are a 1-design.
That is, when the random variable $U_x$ follows the same statistics than the random variable $U'_x = U_x (V_x \otimes V_{x+1})$. 
This fact could be useful for implementing pseudo-random unitaries in quantum circuits with a periodic driving. 
%In the next section we show that the ergodicity of our model makes the evolution operator $W(t)$ hard to distinguish from a Haar-random unitary in certain physical setups.

\subsection{Pseudo-random unitaries}
\label{sec:2Design}

In this section we prove a consequence of the previous result: the evolution operator $W(t)$ at half-integer times $t$ is hard to physically distinguish from a Haar-random unitary $U \in {\rm SU}(2^{NL})$ when the available measurements are Pauli operators.
More precisely, imagine that it is given a unitary transformation $V$ which has been sampled from either the set of evolution operators $\{W(t)\}$ or the full unitary group ${\rm SU}(2^{NL})$.
The task is to choose a state $\rho$, process it with the given transformation $\rho \mapsto V\rho V^{\dagger}$, measure the result with a Pauli operator $\sigma_\u$, and guess whether $V$ has been sampled from the set of evolution operators $\{W(t)\}$ or from the full unitary group ${\rm SU}(2^{NL})$.
In order to sharpen this discrimination procedure, two uses of the transformation $V$ are permitted, which allows for feeding each of them with half of an entangled state $\rho$ (describing two copies of the system).
The following result tells us that, in the large-$N$ limit, the optimal guessing probability for the above task is almost as good as a random guess. (Recall that a random guess gives $p_{\rm guess} = 1/2$). 
The proof is given in Section \ref{app:ApproxDesignSemiTime}.

\vspace{6mm}
\noindent
{\bf Theorem} \ref{thm:approx2Design}. 
Consider the task of discriminating between two copies of $W(t)$ and two copies of a Haar-random unitary $U$ with measurements restricted to Pauli operators, when $t \in [ t_{\rm scr} , 2  t_{\rm scr} ]$ is half-integer. The success probability for correctly guessing the given pair of unitaries satisfies

\begin{align} 
  \nonumber
  p_{\rm guess}
  =\ &\frac 1 2 + \frac 1 4 
  \ \max_{\rho, \u, \v}\ 
  \tr\!\left(\!\sigma_\u \otimes \sigma_\v\! \left[
  \mathop{\mathbb E}_{W(t)}
  W(t)^{\otimes 2} \rho\, W(t)^{\otimes 2 \dagger}
  -
  \int_{{\rm SU}(d)} \hspace{-7mm} dU\,
  U^{\otimes 2} \rho\, U^{\otimes 2 \dagger}
  \right]\right)
  \\ \label{eq:P meas body_intro}
  \leq\ & 
  1/2 +9\, t L 2^{-N}  \ .
\end{align}
%$ \sigma_\u $ denote the Pauli operators.

\vspace{6mm}

\noindent
The proof that the optimal guessing probability is given by formula \eqref{eq:P meas body_intro} can be found in \cite{nielsen_chuang_2010}.
If in Theorem \ref{thm:approx2Design}, measurements were not restricted then $W(t)$ would be an $(8\, t L 2^{-N})$-approximate unitary 2-design.
The precise definition of approximate 2-design allows for using an ancillary system in the discrimination process \cite{Eisert_2007}.
However, we have not included this ancillary system in Theorem \ref{thm:approx2Design} because it does not provide any advantage. 

%It is important to stress that, despite the fact that indistinguishability property \eqref{eq:P meas body} applies to the average over the set of unitaries $\{W(t)\}$, this can also be extended to most of the individual elements of the set. That is, most evolution operators $W(t)$ resemble Haar-random unitaries, although we do not know which among them do so.

\subsection{Strong localisation ($N\ll \log L$)}
\label{sec:StrongLocalisation}

The model under consideration has the property that certain combinations of gates in consecutive sites (e.g.~$U_x, U_{x+1},\ldots, U_{x+l}$) generate right- or left-sided walls. These are defined as follows: a right-sided wall at site $x$ stops the growth towards the right of any operator that arrives at $x$ from the left, but it does not necessarily stop the growth towards the left of any operator that arrives at $x$ from the right.
The analogous thing happens for left-sided walls (see Figure~\ref{fig:localisation}).

These gate configurations have a non-zero probability, hence, they will appear in a sufficiently long chain with a typical circuit. Below we provide bounds to this probability. The inverse of this probability is the average distance between walls, which can be understood as the \emph{localisation length scale}, and it quantifies the width of the lightcones displayed in Figure~\ref{fig:localisation}.

\begin{figure}
    \hspace{12mm}
    \includegraphics[width=120mm]
    {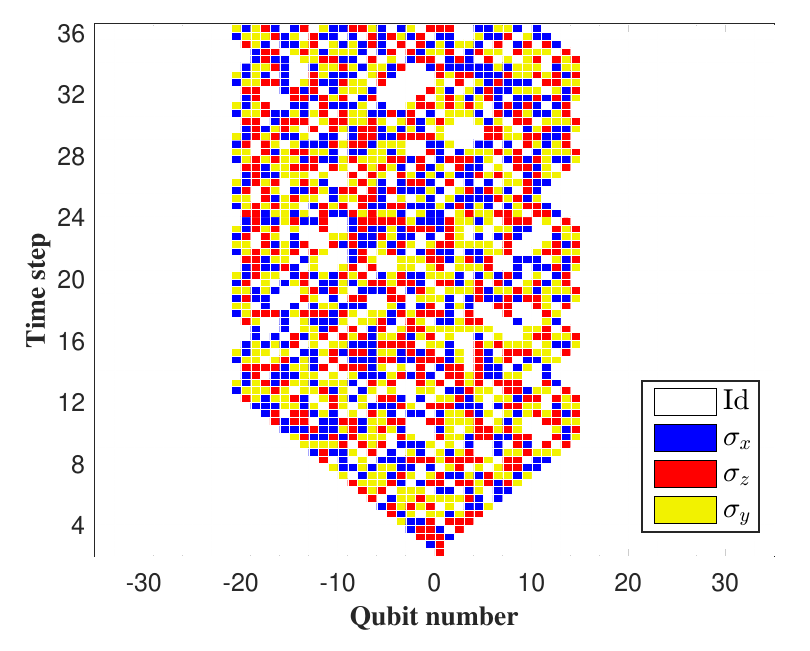}
    \vspace{-2mm}
    \caption{{\bf Strong localisation.} This figure displays the Heisenberg evolution of the initial operator $\sigma_z$ at site $x=1$.
    Each lattice site consists of one qubit ($N=1$) with first-neighbour interactions. 
    After a phase of linear growth the lateral wings collide with left- and right-sided walls with penetration length $l=1$, that confine the evolution for all times. This confinement affects all (not necessarily local or Pauli) operators between the two walls. Inside the confined region evolution seems to be mixing.}
    \label{fig:localisation}
\end{figure}

Each one-sided wall has some penetration length $l\geq 1$ into the forbidden region.
Suppose that a realisation of $U_{\rm chain}$ contains a right-sided wall at site $x=0$ with penetration length $l$. Then any operator with support on the sites $x\leq 0$ (and identity on $x>0$) is mapped by $(U_{\rm chain})^t$ to an operator with support on $x\leq l$ contained in a specific subspace within the interval $x\in [1,l]$ such that entering into region $x>l$ is impossible for all $t \geq 1$.
(The restriction to this subspace within the forbidden region can be seen in Figure \ref{fig:localisation} (with $l=1$) by the fact that the right-most points are either yellow followed by red, or white followed by white. And the left-most points are either blue followed by yellow, or white followed by white.)
An initial operator with support on the interval $x\in [1,l]$ which does not have the specific structure mentioned above can pass through and reach the side $x>l$.

Now let us characterize the pairs of gates $U_0, U_1$ (which act on sites $\{0,1\}$ and $\{1,2\}$ respectively) that generate a right-sided wall at $x=0$ with penetration length $l\leq 1$. 
Let $S_0, S_1$ be the phase-space representation of $U_0, U_1$. 
Next we use the fact that in phase space subsystems decompose with the direct sum (not the tensor product) rule, which allows to decompose $S_0, S_1$ in $2N$-dimensional blocks
\begin{equation}
  \label{eqb:S_x}
  S_x 
  =
  \left( \begin{array}{cc}
     A_x & B_x \\
     C_x & D_x 
  \end{array}\right) \ .
\end{equation}
The flow of information caused by $ S_x$ is easily seen by the action of $ S_x $ on the vector $ (\u_x, \u_{x+1})^T$:
\begin{equation}\label{eq:flow}
 \left( \begin{array}{cc}
     A_x & B_x \\
     C_x & D_x 
  \end{array}\right) \
  \left( \begin{array}{cc}
     \u_x \\
     \u_{x+1} 
  \end{array}\right) \ = 
  \left( \begin{array}{cc}
     A_x \u_x + B_x \u_{x+1} \\
     C_x \u_x + D_x \u_{x+1} 
  \end{array}\right) \
\end{equation}
Block $A_0$ ($D_0$) represents the local dynamics at site $x=0$ ($x=1$) in the first half step.
Block $C_0$ represents the flow from $x=0$ to $x=1$ in the first half step, and block $C_1$ represents the flow from $x=1$ to $x=2$ in the second half step. 

% \begin{figure}
%     \hspace{16mm}
%     \includegraphics[width=130mm]
%     {flow.pdf}
%     \vspace{-10mm}
%     \caption{{\bf Information flow.} The flow of information in phase space according to eq.~\eqref{eq:flow} is illustrated in the particular case $ \u_1^0= \u_3^0=0$, where $\u_x^t$ denotes the projection of the phase-space vector $\u^t = S(t)\u^0$ onto the subspace $\V_x$ of site $x$. For graphical convenience  we write the inputs $ (\u_x, \u_{x+1}) $ to each symplectic block matrix $S_x$  inside the block. Note that the input at time $t = \frac{1}{2}$ at site $ x = 1$ ($ x = 2 $) is equal to the output given by $ S_0 $ ($ S_2 $) at the site $ x=1$ ($ x=2 $).}
%     \label{fig:flow}
% \end{figure}

\begin{figure}
    \hspace{16mm}
    \includegraphics[width=80mm]
    {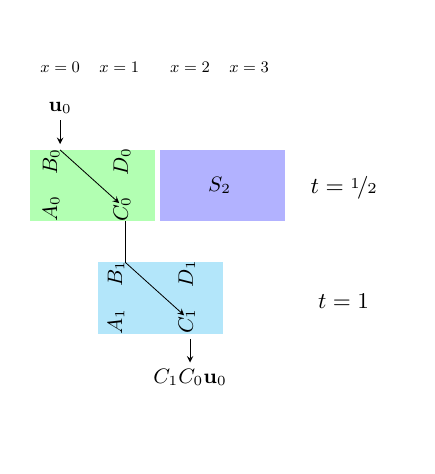}
    \vspace{0mm}
    \caption{The flow of information starting from $x=0$ at $t=0$ shows that with $C_1C_0=0$ there is no information reaching $ x=2 $ at $ t=1 $.}
    \label{fig:loc_1}
\end{figure}

\begin{figure}
    \hspace{16mm}
    \includegraphics[width=110mm]
    {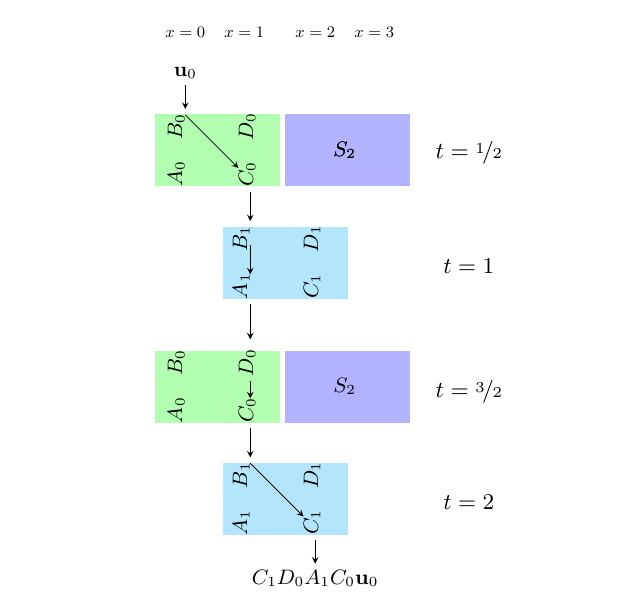}
    \vspace{0mm}
    \caption{As illustrated in Figure \ref{fig:loc_1}, the condition  $C_1C_0=0$ stops the information flow from $ x=0 $ at $ t=0$ to $ x=2 $ at $ t=1$. To prevent this from happening also at $ t=2$ we demand that $C_1 D_0 A_1 C_0 =0 $.}
    \label{fig:loc_2}
\end{figure}

Imposing that nothing arrives at $x=2$ after the first whole step amounts to $C_1 C_0 = 0$. This is illustrated in Figure \ref{fig:loc_1}.
Imposing that nothing arrives at $x=2$ after the first two whole steps amounts to 
\begin{equation}\label{twocond}
  C_1 D_0 A_1 C_0 =0 
  \ \mbox{ and }\  
  C_1 C_0 = 0\ .
\end{equation}
This is illustrated in Figure \ref{fig:loc_2}.
Finally, imposing that nothing arrives at $x=2$ after any number $(t+1)$ of whole steps amounts to 
\begin{equation}\label{t cond}
 C_1 (D_0 A_1)^t C_0 =0\, , 
\end{equation}
for all integers $t\geq 0$.
However, it is proven in Lemma \ref{less_integers} (Section \ref{app:CliffordLocalisation}) that this infinite family of conditions \eqref{t cond} is implied by the cases $t=0,1,\ldots, (2^{4N}-1)$. 
And for the simplest case $N=1$, Theorem \ref{lem:N1loc} shows that all conditions \eqref{t cond} follow from the two conditions \eqref{twocond}.

Equations \eqref{twocond} and \eqref{t cond} can be understood as characterising a pattern of destructive interference due to disorder which causes localisation. 

The following pair of Clifford unitearies $U_0, U_1$
\begin{equation}
U_0 = \frac{1}{\sqrt{2}}
\begin{pmatrix}
i & 0 & 0 & -i \\
0& i & -i & 0  \\
1 & 0 &  0 & 1 \\
0 & 1 & 1 & 0  \\
\end{pmatrix}, \ \ \ \ 
U_{1} =
\frac{1}{2}
\begin{pmatrix}
1 & -1 & 1 & 1 \\
-1 & 1 & 1 & 1 \\
1 & 1 &1 & -1 \\
1 & 1 & - 1 & 1 \\
\end{pmatrix} ,
\end{equation}
has phase-space representation 
\begin{equation}
S_0 =
\begin{pmatrix}
1 & 1 & 0 & 1 \\
0& 1 & 0 & 1  \\
1 & 0 &  1 & 0 \\
0 & 0 & 0 & 1  \\
\end{pmatrix}, \ \ \ \ 
S_{1} =
\begin{pmatrix}
1 & 0 & 0 & 1 \\
0 & 1 & 0 & 0 \\
0 & 1 &1 & 0 \\
0 & 0 & 0 & 1 \\
\end{pmatrix} .
\end{equation}
It can be checked that this pair of matrices satisfies conditions \eqref{twocond}, which implies \eqref{t cond}.

The following theorem provides the exact value of the probability for the appearance of a one-sided wall with penetration length $l=1$, in the case $N=1$.

\vspace{6mm}
\noindent
{\bf Theorem} \ref{lem:N1loc}.
For $N=1$ the conditions 
\begin{equation}
 C_{x+1} \left( D_x A_{x+1} \right)^k  C_x = 0 ,  
\end{equation}
for $k \in \{0,1,2,\ldots \}$ are implied by the two conditions 
\begin{equation} %
C_{x+1}  C_x = 0\quad \text{  and }\quad C_{x+1}  D_x A_{x+1}  C_x = 0 \ .
\end{equation}
Furthermore the probability of this is given exactly by
\begin{equation} \label{loc_ex_N=1}
\pr \{ C_{x+1} C_x = 0 ,  C_{x+1} D_x A_{x+1} C_x = 0 \}  = 0.12 \  \  , 
\end{equation}
which includes trivial localisation.

\vspace{6mm}
\noindent
The probability given in equation \eqref{loc_ex_N=1} is obtained numerically. 
It is worth mentioning that this model also displays walls with zero penetration length ($l=0$), which are necessarily two-sided.
These walls happen when a two-site gate $U_x$ is of product form $U_x = V_x \otimes V_{x+1}$.
This prevents the interaction between the two sides of the gate, and hence, it produces a trivial type of localisation.
The following theorem (proven in Section \ref{app:CliffordLocalisation}) shows that the probability of these trivial walls is very small.

\vspace{6mm}
\noindent
{\bf Theorem} \ref{eq:ProbE=0}
The probability that a Clifford unitary $U\in \C_{2N}$ is of product form is 
\begin{equation}
\frac 1 2\, 2^{-4N^2} \leq  \pr \{ U\ {\rm is \ product} \} \leq 2^{-4N^2} \ .
\end{equation}

\vspace{6mm}
\noindent 
We expect that $(l=0)$-walls are much less likely, except in the case $  N=1 $, than $(l\geq 1)$-walls. This would allow for a regime of $(L,N)$ where the system displays non-trivial localisation.

\subsection{Absence of localisation ($N \gg \log L$)}

The following theorem provides an upper bound for the probability that one-sided walls appear at a particular location. 
This upper bound implies that when $N \gg \log L$ a typical circuit has no localisation.

\vspace{6mm}
\noindent
{\bf Theorem} \ref{lem:RightBlockSuffCond}.
The conditions
\begin{equation} \label{loc_level_t}
 C_{x+1} \left( D_x A_{x+1} \right)^k  C_x = 0 
 \ \ \mbox{ for all }\ \  
 k \in \{0,1,2, \ldots \}\ ,    
\end{equation}
are sufficient to prevent all right-wards propagation past position $x$ at any time. The probability that this family of constrains holds is upper-bounded by
\begin{align} 
  & \pr \{ C_{x+1} \left( D_x A_{x+1} \right)^k C_x = 0 ,\ \forall k \in \mathbb{N}  \}  
  \nonumber \\ 
  \leq\ &\pr \{ C_{x+1} C_x = 0 \} \ \leq\  \frac{2N + 1}{(1-2^{-2N})^{2N}}\, 2^{2N - 2N^2}.  \label{loc bound}
\end{align}
By symmetry, left-sided walls have the same probabilities.

\vspace{6mm}
\noindent
%Equation \eqref{loc bound} implies that the distance between walls increases very fast with $N$ (the size of the local Hilbert %space).
If the system is finite ($L<\infty$), a sufficiently large $N$ will eliminate the presence of localisation in most realisations of the dynamics $U_{\rm chain}$. This fact is crucial in the mixing results of the previous sections.
% Unfortunately, we have only been able to calculate the wall probability with the assumption that the penetration length is $l\leq 1$.
% However, 
Our previous results showing the mixing property in the regime $N\gg \log L$, suggest that in this regime the probability that the whole system has a wall of any type vanishes.

\section{Description of the model}
\label{app:DesciptionOfModel}

In this section we further specify the model analysed in this work.

\subsection{Locality, time-periodicity and disorder}

%\begin{figure}
%    \centering
%    \includegraphics[width = 14cm]{CircuitModelFigure.png}
%    \caption{An illustration in circuit form of the physical model analyzed. The two differently coloured unitaries represent the even and odd time steps, which by repetition generate the time-independent dynamics.}
%    \label{fig:CircuitFigure}
%\end{figure}

Consider a spin chain with an even number $L$ of sites and periodic boundary conditions.
Each site is labeled by $x\in \mathbb Z_L$ and contains $N$ qubits (Clifford modes), so the Hilbert space of each site has dimension $2^N$.
The dynamics of the chain is discrete in time, and hence, it is characterised by a unitary $U_{\rm chain}$, not a Hamiltonian.
Locality is imposed by the fact that $U_{\rm chain}$ is generated by first-neighbour interactions in the following way
\begin{equation}
  \label{eq:cal U}
  U_{\rm chain} 
  = 
  \left( \mbox{$\bigotimes_{x {\rm\, odd}}$} U_x \right)
  \left( \mbox{$\bigotimes_{x {\rm\, even}}$} U_x \right)
\end{equation}
where the unitary $U_x$ only acts on sites $x$ and $x+1$ ($\bmod L$ is understood).
The expression \eqref{eq:cal U} tells us that each time step decomposes in two half steps: in the first half each even site interacts with its right neighbor, and in the second half each even site interacts with its left neighbor.
This is illustrated in Figure \ref{fig:CircuitFigure}.

We define the evolution operator at integer and half-integer times $t\in \mathbb Z/2$ in the following way
\begin{equation}\label{def_W}
  W(t) = \left\{ \begin{array}{ll}
    (U_{\rm chain})^t &\ \mbox{integer $t$}\\
    ( \mbox{$\bigotimes_{x {\rm\, even}}$} U_x)(U_{\rm chain})^{t-1/2} &\ \mbox{half-integer $t$}
  \end{array}\right. \ .
\end{equation}
We understand that $t$ is half-integer when $t-1/2\in \mathbb Z$.

Translation invariance amounts to imposing that all $U_x$ with even $x$ are identical, and all $U_x$ with odd $x$ are identical too.
However, in this work we are interested in disordered systems, where the translation invariance is broken.
In fact, here we break the translation invariance in the strongest possible form, since each two-site unitary $U_x$ is independently sampled from the uniform distribution over the Clifford group.

\subsection{Phase-space description}

The phase space of the whole chain is
\begin{equation}
  \label{eq:V chain}
  \mathcal V_{\rm chain} 
  =
  \mbox{$\bigoplus_{x}$} \mathcal V_x
  \cong
  \Z2^{2NL}\ ,
\end{equation}
where $\mathcal V_x \cong \Z2^{2N}$ is the phase space of site $x$.
The phase-space representation of $U_x$ is the symplectic matrix $S_x \in \S_{2N}$, where $S_x$ acts on the subspace $\mathcal V_x \oplus \mathcal V_{x+1}$. Using this direct-sum decomposition we can write
\begin{equation}
  \label{eq:S_x}
  S_x 
  =
  \left( \begin{array}{cc}
     A_x & B_x \\
     C_x & D_x 
  \end{array}\right) \ ,
\end{equation}
where $A_x, B_x, C_x, D_x$ are $2N \times 2N$ matrices, with $ A_x:\mathcal V_{x} \rightarrow \mathcal V_x $, $ B_x:\mathcal V_{x+1} \rightarrow \mathcal V_x $, $ C_x:\mathcal V_{x} \rightarrow \mathcal V_{x+1} $, $ D_x:\mathcal V_{x+1} \rightarrow \mathcal V_{x+1} $.
The phase-space representation of $U_{\rm chain}$ given in \eqref{eq:cal U} is
\begin{equation}
  \label{eq:S chain}
  S_{\rm chain} 
  =
  \left( \mbox{$\bigoplus_{x\, {\rm odd}}$} S_x \right)
  \left( \mbox{$\bigoplus_{x\, {\rm even}}$} S_x \right)\ .
\end{equation}
Note that the tensor product becomes a direct sum, in analogy with the quantum optics formalism.
Using the single-site decomposition \eqref{eq:V chain} and \eqref{eq:S_x} we can write the two half steps in \eqref{eq:S chain} as 
\begin{align}
  \bigoplus_{x\, {\rm even}} \mathcal S_x 
  &=
  \left( \begin{array}{ccccccc}
    A_{0} & B_{0} &  &  &  &  &  \\
    C_{0} & D_{0} &  &  &  &  &  \\
     &  & A_2 & B_2 & & & \\
     &  & C_2 & D_2 & & & \\
     &  & & & \ddots & & \\
     &  & & & & A_{L-2} & B_{L-2}\\
     &  & & & & C_{L-2} & D_{L-2} \\
  \end{array}\right),
  \\
  \bigoplus_{x\, {\rm odd}} \mathcal S_x 
  &=
  \left( \begin{array}{ccccccc}
     D_{L-1} &  &  &  &  &  & C_{L-1} \\
     & A_1 & B_1 & & & &\\
     & C_1 & D_1 & & & &\\
     & & & \ddots & & &\\
     & & & & A_{L-3} & B_{L-3} &\\
     & & & & C_{L-3} & D_{L-3} &\\
     B_{L-1} &  &  &  &  &  & A_{L-1} \\
  \end{array}\right) ,
\end{align}
where the blank spaces represent blocks with zeros.
The ``phase space evolution operator" is then
\begin{equation}
  S(t) = \left\{ \begin{array}{ll}
    (S_{\rm chain})^t &\ \mbox{integer $t$}\\
    ( \mbox{$\bigoplus_{x {\rm\, even}}$} S_x)(S_{\rm chain})^{t-1/2} &\ \mbox{half-integer $t$}
  \end{array}\right. \ .
\end{equation}

\section{Random symplectic matrices}
\label{app:RandomCliffordDynamics}

In this section we discuss results relating to (uniformly) random symplectic matrices.
These results will then be used in the Section \ref{app:LocalDynamics_PauliMixing} to show that the random local circuit model we consider approximately satisfies the requirement of Pauli mixing: it maps any initial Pauli operator to the uniform distribution over all Pauli operators. 

An equivalent way to write the symplectic condition $S^T J S =J$ is that the columns of the matrix $S = (\u_1, \v_1, \u_2, \v_2, \ldots, \u_n, \v_n )$ satisfy
\begin{align}
  \label{simp const 1}
  \langle \u_i, \u_j\rangle = \langle \v_i, \v_j \rangle 
  &= 0\ ,
  \\ \label{simp const 2}
  \langle \u_i, \v_j \rangle &= \delta_{ij}\ .
\end{align}
We recall the notation $ \langle {\bf r} , {\bf s} \rangle \equiv {\bf r}^TJ {\bf s} $, for all $ {\bf r},{\bf s} \in \mathbb Z_2^{2n}$.
Using this, we can uniformly sample from $\S_n$ by sequentially generating the columns of $S$.

\begin{Lemma}\label{lemma:algorithm}
  The following algorithm allows to uniformly sample from the symplectic group $\S_n$.
\begin{enumerate}
  \item Generate $\u_1$ by picking any of the $(2^{2n}-1)$ non-zero vectors in $\mathbb Z_2^{2n}$.

  \item Generate $\v_1$ by picking any of the $2^{2n-1}$ vectors satisfying $\langle \u_1, \v_1 \rangle =1$.
  
  \item Generate $\u_2$ by picking any of the $(2^{2n-2}-1)$ non-zero vectors satisfying $\langle \u_1, \u_2 \rangle = \langle \v_1, \u_2 \rangle = 0$.

  \item Generate $\v_2$ by picking any of the $2^{2n-3}$ vectors satisfying $\langle \u_1, \v_2 \rangle = \langle \v_1, \v_2 \rangle = 0$ and $\langle \u_2, \v_2 \rangle =1$.
  
  \item Continue generating $\u_3, \v_3, \ldots, \u_n, \v_n$ in analogous fashion, completing the matrix $S = (\u_1, \v_1, \u_2, \v_2, \ldots, \u_n, \v_n )$.

\end{enumerate}
\end{Lemma}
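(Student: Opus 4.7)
The plan is to establish two facts: first, that the listed constraints at each step correctly identify all vectors which extend the currently chosen columns to a symplectic basis, and second, that every element of $\S_n$ arises from the procedure with the same probability. Both will follow from a single inductive invariant about symplectic subspaces.

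Concretely, I would proceed by induction on the step, maintaining the invariant that after the $k$-th pair $(\u_k,\v_k)$ has been generated the vectors $\u_1,\v_1,\ldots,\u_k,\v_k$ are linearly independent and the restriction of $\langle\cdot,\cdot\rangle$ to their span $W_k$ is non-degenerate (i.e.\ they form a symplectic basis of $W_k$). The key consequence, which I would prove first as a lemma, is that in this situation the symplectic complement $W_k^\perp = \{\w : \langle \w,x\rangle = 0 \text{ for all } x\in W_k\}$ has dimension exactly $2n-2k$ and satisfies $W_k\cap W_k^\perp = \{0\}$. The dimension count comes from observing that the $2k$ linear functionals $\langle \u_i,\cdot\rangle,\langle \v_i,\cdot\rangle$ are linearly independent: they arise from applying the nonsingular matrix $J$ to the linearly independent columns $\u_i,\v_i$, so their kernel intersection has codimension $2k$. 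Transversality $W_k\cap W_k^\perp=\{0\}$ follows from the non-degeneracy of the form on $W_k$.

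With this lemma in hand, the counts drop out immediately. At step $2k+1$, the admissible vectors for $\u_{k+1}$ are exactly the nonzero elements of $W_k^\perp$, giving $2^{2n-2k}-1$ options; any such choice is automatically linearly independent from $\u_1,\v_1,\ldots,\u_k,\v_k$ because $W_k\cap W_k^\perp=\{0\}$. At step $2k+2$, the vector $\v_{k+1}$ must lie in $W_k^\perp$ and satisfy $\langle \u_{k+1},\v_{k+1}\rangle=1$. Because the form on $W_k^\perp$ is non-degenerate and $\u_{k+1}\neq 0$ lies there, the functional $\v\mapsto\langle \u_{k+1},\v\rangle$ on $W_k^\perp$ is surjective onto $\Z2$, so its preimage of $1$ is an affine hyperplane of size $2^{2n-2k-1}$. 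Finally the pair $(\u_{k+1},\v_{k+1})$ extends the symplectic basis (one checks $\v_{k+1}\notin\mathrm{span}(\u_1,\v_1,\ldots,\u_k,\v_k,\u_{k+1})$ by using $\langle \u_{k+1},\v_{k+1}\rangle=1$), closing the induction. Since $2^{2n-2k}-1\geq 1$ as long as $k<n$, the algorithm never aborts.

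Uniformity then follows by a counting argument: the choices in different steps are independent and uniform over sets of the claimed sizes, so every output sequence arises with probability $\prod_{k=1}^{n}\bigl[(2^{2n-2k+2}-1)\cdot 2^{2n-2k+1}\bigr]^{-1}$, and the map from column sequences to symplectic matrices is a bijection onto $\S_n$ (the sequence is literally the list of columns, hence recoverable from $S$). Multiplying the counts reproduces the known cardinality $|\S_n|=\prod_{k=1}^{n}(2^{2k}-1)\,2^{2k-1}$, so every $S\in\S_n$ is hit exactly once with the same probability $1/|\S_n|$. The one place care is needed, and what I regard as the crux, is the inductive proof that the symplectic-basis invariant is preserved; everything else is bookkeeping with finite-field dimension counts.
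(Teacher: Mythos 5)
Your proof is correct and develops in full the argument the paper only sketches: the paper states the algorithm and, in place of a proof, remarks only that $\langle\u,\v\rangle=1$ forces $\u,\v\neq\zero$, leaving the counts to be read off the column constraints (\ref{simp const 1})--(\ref{simp const 2}). Your inductive invariant (the chosen columns form a symplectic basis of their span $W_k$) together with the lemma $\dim W_k^\perp=2n-2k$ and $W_k\cap W_k^\perp=\{0\}$ is exactly the right scaffolding, and it produces the stated option counts at every step, so this is the same approach carried out rigorously rather than a different route.

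One small caveat: your surjectivity argument, ``multiplying the counts reproduces the known cardinality $|\S_n|$,'' is circular in the paper's logical order, since the very next result (Lemma~\ref{lemma:order S}) \emph{derives} $|\S_n|$ from this lemma. Surjectivity is better argued directly: for any $S\in\S_n$, its columns satisfy (\ref{simp const 1})--(\ref{simp const 2}), which are precisely the admissibility conditions imposed at each step of the algorithm, and each $\u_i,\v_i$ is nonzero because $\langle\u_i,\v_i\rangle=1$; hence the column list of $S$ is a valid run. Combined with injectivity (the output sequence is literally the column list of $S$) and the uniform, independent choices at each step, this gives uniformity over $\S_n$ without invoking $|\S_n|$.
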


\begin{proof}
We first look at the number of vectors $ (\u_1, \v_1, \u_2, \v_2, \ldots, \u_n, \v_n ) $, as stated above, that ensures $ S $ symplectic.

$\v_1$ has $ 2n $ components, since there is one constraint, the number of independent components is $ 2n-1 $, each component belongs to $ \mathbb Z_2 $, therefore the number of vectors $ \v_1 $ equals $2^{2n-1}$. Notice that since $ \u_1 $ is non vanishing, a vanishing $ \v_1 $ cannot solve $\langle \u_1, \v_1 \rangle =1$.

$\u_2$ must satisfy two constraints, the number of independent components is $ 2n-2 $, then there are $2^{2n-2}$ solutions, this includes also the case that $ \u_2 $ is vanishing, but since $ S $ must be full rank we need to exclude it, therefore there are $2^{2n-2}-1$ admissible $ \u_2 $ vectors. And so on.

We now proof that the distribution of simplectic matrices $ S $ generated with the algorithm is uniform. Let us see that the number of symplectic matrices whose first column is the non-zero vector $\u_1$ is independent of $\u_1$.

\begin{enumerate}
\item The number of vectors $\v_1$ satisfying $\langle \u_1, \v_1 \rangle =1$ is independent of which non-zero $\u_1$ we choose.
\item The number of vectors $\u_2$ satisfying $\langle \u_1, \u_2 \rangle =1$ and $\langle \u_2, \v_1 \rangle =0$ is independent of the pair $\u_1,\v_1$ (being both non-zero and $\langle \u_1, \v_1 \rangle =1$) that we choose. 
\item And analogously for $\v_2, \u_3$, and so on.
\end{enumerate}

This shows that the number of matrices having a fixed first column $\u_1$ is independent of $\u_1$. Therefore, all first columns $\u_1$ need to have the same probability. Using the steps 1,2,3 in a similar fashion we can analogously conclude that all second columns $\v_1$ need to have the same probability. And analogously, all vectors for column $ k $ (compatible with columns $1,2,...,k-1$) need to have the same probability. This shows the uniformity provided by the sampling algorithm of Lemma 6.

\end{proof}

\noindent
To obtain the above numbers, we use the fact that when $\langle \u, \v \rangle =1$ both, $\u$ and $\v$, are non-zero.
From these same numbers the next result follows.

\begin{Lemma}\label{lemma:order S}
  The order of the symplectic group is
\begin{equation}
  \label{eq:order S}
  |\S_n| = 
  (2^{2n} -1) 2^{2n-1} (2^{2n-2}-1) 2^{2n-3} \cdots (2^2 -1)2^1\ ,
\end{equation}
  and it satisfies
  \begin{equation} \label{eq:bound_order_S}
    a(n) \,2^{2n^2+n}
    \ \leq\  
    |\S_n| 
    \ \leq\ 
    b(n) \, 2^{2n^2+n}\
  \end{equation}
  with $0.64 < a(n) <b(n)<0.78$.
\end{Lemma}
The proof of \eqref{eq:order S} is a classic result to be found for example in \cite{Artin_1957}, it also directly follows from Lemma \ref{lemma:algorithm}. The proof of equation \eqref{eq:bound_order_S} is in the appendix \ref{app:sympl_order}.

Finally, the next Lemma shows that uniformly distributed symplectic matrices have random outputs.
\begin{Lemma}[Uniform output]
\label{lem:uniformity}
  If $S\in \S_n$ is uniformly distributed, then for any pair of non-zero vectors $\u, \u' \in \Z2^{2n}$ we have
\begin{equation}
  \label{uniform v1}
  \pr \{\u' = S\u\} 
  =
  (2^{2n}-1)^{-1}\ .
\end{equation}
\end{Lemma}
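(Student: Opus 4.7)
\medskip
\noindent\textbf{Proof proposal.} The plan is to reduce the claim to the transitivity of the symplectic group on the set of non-zero vectors in $\Z2^{2n}$, combined with the right-invariance of the uniform (counting) measure on the finite group $\S_n$.

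First I would observe that, since every $S\in\S_n$ is invertible, the image $S\u$ of a non-zero vector is non-zero. Hence the random variable $S\u$ is supported on the set of $2^{2n}-1$ non-zero vectors, and it suffices to show that its distribution is uniform there, i.e.\ does not depend on the choice of target $\u'$.

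Next I would invoke Lemma~\ref{lemma:algorithm}, which generates a uniformly random $S=(\u_1,\v_1,\u_2,\v_2,\ldots)$ by first picking $\u_1$ uniformly among the $2^{2n}-1$ non-zero vectors and then completing the remaining columns under the symplectic constraints \eqref{simp const 1}--\eqref{simp const 2}. Writing $\mathbf e_1$ for the first standard basis vector, this says precisely that $S\mathbf e_1=\u_1$ is uniform on non-zero vectors, proving the lemma in the special case $\u=\mathbf e_1$.

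To handle a general non-zero $\u$, I would use transitivity of $\S_n$ on non-zero vectors: by a second application of the sampling algorithm (with the initial choice $\u_1=\u$ instead of uniform) we can exhibit a deterministic $T\in\S_n$ with $T\mathbf e_1=\u$. Then
\begin{equation}
\pr\{S\u=\u'\}=\pr\{ST\mathbf e_1=\u'\}=\pr\{S'\mathbf e_1=\u'\}=\frac{1}{2^{2n}-1},
\end{equation}
where in the middle step I use right-invariance of the uniform distribution on the finite group $\S_n$ (the map $S\mapsto ST$ is a bijection of $\S_n$ onto itself, so $S'=ST$ is uniform whenever $S$ is), and in the last step the special case proved above.

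I do not anticipate a real obstacle here; the only thing to be careful about is to note explicitly that the sampling algorithm of Lemma~\ref{lemma:algorithm} works with any prescribed non-zero first column $\u_1$, which is what gives transitivity. Once transitivity and right-invariance are in place, the uniformity of the output is immediate from a single-line calculation.
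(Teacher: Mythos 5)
Your proof is correct and takes essentially the same route as the paper: reduce to the special case $\u=\mathbf e_1$ via the sampling algorithm of Lemma~\ref{lemma:algorithm}, then transport to general $\u$ using transitivity of $\S_n$ on non-zero vectors and invariance of the uniform distribution on the finite group under right multiplication. The paper phrases the reduction with an $S_0$ satisfying $S_0\u=\mathbf e_1$ and observes $SS_0$ is uniform, which is the same idea as your $T$ with $T\mathbf e_1=\u$.
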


\begin{proof}
Let us first consider the case $\u = (1,0,\ldots, 0)^T$.
If we follow the algorithm of Lemma \ref{lemma:algorithm}, then the image of $(1,0,\ldots, 0)^T$ is uniformly distributed over the $(2^{2n}-1)$ non-zero vectors, and hence, it follows \eqref{uniform v1}. 
To show \eqref{uniform v1} for any given $\u$, take any $S_0 \in \S_n$ such that $S_0 \u = (1,0, \ldots, 0)^T$, and note that, if $S$ is uniformly distributed then so is $S S_0$. 
\end{proof}

\subsection{Rank of sub-matrices of $S$}

\begin{Lemma} \label{single rank decay}
Any given $S\in \S_{2n}$ can be written in block form 
\begin{equation}
  \label{eq:S decomp}
  S
  =
  \left( \begin{array}{cc}
     A & B \\
     C & D 
  \end{array}\right) ,
\end{equation}
according to the local decomposition $\Z2^{4n} = \Z2^{2n} \oplus \Z2^{2n}$. If $S$ is uniformly distributed this then induces a distribution on the sub-matrices $A, B, C, D$.
For each of them ($E= A, B, C, D$) the induced distribution satisfies
\begin{equation} \label{eq:upper_bound_rank_C}
  \pr \big\{ {\rm rank}\, E \leq 2n-k \big\} 
  \ \leq\  
  \min\{2^k,4\} 
  \frac {2^{- k^2}} {(1-2^{-2n})^k}
  \approx
  4 \times 2^{- k^2} \ .
\end{equation}
\end{Lemma}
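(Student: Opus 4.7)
The plan is to reduce all four blocks to a single case by symmetry, reformulate the rank condition as the existence of a certain $k$-dimensional subspace, and then apply a union bound whose terms are evaluated using the sampling algorithm of Lemma~\ref{lemma:algorithm}.

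First I would observe that the swap matrix $T = \left(\begin{smallmatrix} 0 & I_{2n} \\ I_{2n} & 0 \end{smallmatrix}\right)$ is symplectic, because in the block decomposition $J = J_{2n}\oplus J_{2n}$ the two $J_{2n}$ factors are identical. Left- and right-multiplication of $S$ by $T$ therefore preserve uniformity on $\mathcal S_{2n}$ while permuting the four blocks $(A,B,C,D)$; hence all four have the same marginal distribution and it suffices to bound $\Pr\{\mathrm{rank}\,C \le 2n-k\}$. Writing $V = \Z2^{2n}\oplus 0$ and letting $\pi$ project onto the second half, we have $C = \pi\, S|_V$, so $\ker C = V \cap S^{-1}V$. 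Consequently $\mathrm{rank}\,C \le 2n-k$ is equivalent to the existence of a $k$-dimensional subspace $W \subseteq V$ with $SW \subseteq V$, and a union bound gives
\begin{equation*}
  \Pr\{\mathrm{rank}\,C \le 2n-k\} \;\le\; \sum_{\substack{W\subseteq V\\ \dim W=k}}\Pr\{SW\subseteq V\}.
\end{equation*}

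Next I would compute $\Pr\{SW\subseteq V\}$ for each fixed $W$. Any symplectic change of basis on $V$ can be extended by the identity on the symplectic complement $V^\perp = 0\oplus\Z2^{2n}$ to a symplectic transformation of $\Z2^{4n}$ and absorbed into $S$, so the probability depends on $W$ only through the symplectic type of $J|_W$ (isotropic, hyperbolic, or mixed). For each type I would pick an adapted basis $\w_1,\dots,\w_k$ and sample $S\w_1,\dots,S\w_k$ sequentially using Lemma~\ref{lemma:algorithm}: at the $i$-th step the next column is uniform over a constraint set of size of order $2^{4n-2i}$, of which only a subset of order $2^{2n-2i}$ lies in $V$. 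Multiplying the conditional probabilities yields $\Pr\{SW\subseteq V\}\le 2^{-2nk}/(1-2^{-2n})^k$, with the same leading $2^{-2nk}$ for every type. Multiplying by $\binom{2n}{k}_2 \le 2^{k(2n-k)}\prod_{i=1}^k(1-2^{-i})^{-1}$ and using $\prod_{i\ge 1}(1-2^{-i})^{-1}<4$ then yields the claimed $4\cdot 2^{-k^2}/(1-2^{-2n})^k$. The sharper $2^k$ prefactor in $\min\{2^k,4\}$ would come from a first-moment estimate for small $k$: Lemma~\ref{lem:uniformity} gives $\mathbb E[|\ker C\setminus\{0\}|] = (2^{2n}-1)^2/(2^{4n}-1)$ exactly, which for $k=1$ produces the tighter $2^k$ constant.

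The main obstacle is making the per-subspace probability uniform in the symplectic type of $W$: the algorithm's conditional probabilities differ slightly between the isotropic case (all $\langle\w_i,\w_j\rangle$ vanish) and the hyperbolic case (where interleaved unconstrained $\v_i$ columns enter the bookkeeping), so I would need to verify that every type contributes at the same leading order $2^{-2nk}$ and that mixed types follow without separate casework. Absorbing the Euler-type correction $\prod_{i=1}^k(1-2^{-i})^{-1}$ from the Gaussian binomial into the constant $4$, while keeping the step-wise correction $(1-2^{-2n})^{-k}$ separate rather than looser forms like $(1-2^{-(2n+2)})^{-k}$, is a second careful accounting step required to land exactly on the stated bound.
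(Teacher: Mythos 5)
Your overall architecture — swap-matrix symmetry to reduce to a single block, rewriting $\operatorname{rank} C \le 2n-k$ as the existence of a $k$-dimensional $W\subseteq V$ with $SW\subseteq V$, a union bound over such $W$, and an estimate of $\binom{2n}{k}_2$ absorbed into $\min\{2^k,4\}$ — coincides with the paper's. The genuine divergence, and the place where a gap remains, is your treatment of $\Pr\{SW\subseteq V\}$. You propose to reduce to a canonical $W$ by a symplectic change of basis on $V$ (classified by the symplectic type of $J|_W$) and then run the column-sampling algorithm on an adapted basis. The paper instead leaves $S$ untouched and puts the \emph{basis of $W$} into column-echelon form, so that the $i$-th constraint $C\v_i=\zero$ fixes the bottom $2n$ entries of the $r_i$-th column of $S$ with $r_1<\cdots<r_k$; each such constraint then contributes a factor $\le 2^{-2n}/(1-2^{-2n})$ to the count, uniformly and with no dependence on the symplectic geometry of $W$. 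This is precisely the casework your route cannot avoid: in your scheme, once you condition on $S\w_1,\ldots,S\w_{i-1}\in V$, the conditional probability that the next adapted column lands in $V$ depends on the intermediate $\v$-columns of the algorithm (in particular on whether $\ker(\langle \v_j,\cdot\rangle|_V)$ happens to coincide with earlier kernels), and it is not pointwise $\le 2^{-2n}/(1-2^{-2n})$ but only so after averaging over those columns. Your stated per-step count ``$2^{4n-2i}$, of which $2^{2n-2i}$ lie in $V$'' is also off by indexing and does not reflect this conditioning. So while your observation that $\Pr\{SW\subseteq V\}$ depends on $W$ only through its symplectic type is correct (via Witt's theorem extended by the identity on $V^\perp$), you would still need to do the explicit, averaged, per-type estimate for isotropic, hyperbolic, and mixed $W$; that is exactly the ``main obstacle'' you flag, and the paper's echelon-form device is the shortcut that makes it disappear. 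Your alternative derivation of the $\min\{2^k,4\}$ prefactor via a first-moment estimate of $|\ker C\setminus\{0\}|$ for $k=1$ is fine and slightly tighter than necessary, but the paper gets the same factor purely from sharpening the subspace-counting bound at $k=0,1$, so nothing is gained there.
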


\begin{proof}
We proceed by studying the rank of $C$ and later generalizing the results to $A, B, D$. 
Equation \eqref{eq:upper_bound_rank_C} is trivial for $ k=0 $, so in what follows we assume $ k \ge 1 $.
Let us start by counting the number of matrices $S\in \S_{2n}$ with a sub-matrix $C$ satisfying $C\u = \zero$ for a given (arbitrary) non-zero vector $\u \in \Z2^{2n}$.
Let $r$ denote the position of the last ``1" in $\u$, so that it can be written as: 
\begin{equation}
  \label{eq:form u}
  \u = \big(
  \underbrace{\u^1, \ldots, \u^{r-1}}_{r-1},
  1, \underbrace{0, \ldots, 0}_{2n-r} 
  \big)^T\ , 
\end{equation}
where $\u^1, \ldots, \u^{r-1} \in \{0,1\}$.
Then, the constraint $C\u = \zero$ can be written as
\begin{equation}   \label{eq:Cu=0}
\begin{cases}
  C_{i,1} = 0,  \hspace{0.5cm} \textrm{if} \, r = 1,  \hspace{0.5cm} \textrm{with} \, 1 \le i \le 2n\\
  C_{i,r} = \sum_{j=1}^{r-1} C_{i,j}\, \u^j, \hspace{0.5cm} \textrm{if} \, r > 1,  \hspace{0.5cm} \textrm{with} \, 1 \le i \le 2n
\end{cases}
\end{equation}
where $C_{i,j}$ are the components of $C$. \eqref{eq:Cu=0} reads as a constraint on the $ r$-th column of the matrix $ C $.

Next, we follow the algorithm introduced in Lemma \ref{lemma:algorithm} for generating a matrix $S\in \S_{2n}$ column by column, from left to right, and in addition to the symplectic constraints we include \eqref{eq:Cu=0}.
Constraint \eqref{eq:Cu=0} can be imposed by ignoring it during the generation of columns $1, \ldots, r-1$, completely fixing the rows $2n< i\leq 4n$ of the $r$ column, that corresponds to the $ r$-th column of the matrix $ C $, and again ignoring it during the generation of columns $r+1, \ldots, 4n$.
By counting as in Lemma \ref{lemma:order S} we obtain that the number of matrices $S\in \S_{2n}$ satisfying $C\u = \zero$ follows 
\begin{align}    
  \label{eq:number}
 & |\{ S\in \S_{2n} : C\u = \zero \}| \\
 & \le 
 \begin{cases}
 (2^{4n} - 1) (2^{4n-1}) \cdots (2^{4n-(r-2)} -1) 2^{2n -(r-1)} (2^{4n-r} -1) \cdots 2^1, \hspace{3mm} \textrm{r even} \\
 (2^{4n} - 1) (2^{4n-1}) \cdots 2^{4n-(r-2)}  2^{2n -(r-1)} 2^{4n-r}  \cdots 2^1, \hspace{3mm} \textrm{r odd}
 \end{cases}
 \nonumber
\end{align}.

% Equation \eqref{eq:number} in the case $ r=1$ is an equality and reads:
% \begin{align}    
%  |\{ S\in \S_{2n} : C\u = \zero \}| = (2^{2n} - 1) (2^{4n-1}) (2^{4n-2}-1)\cdots  2^1 \nonumber
% \end{align}

Equation \eqref{eq:number} is an inequality because, for some values of the first $r-1$ columns of $S$ and the $r$-th column of $C$, it is impossible to complete the $r$-th column of $A$ satisfying the symplectic constraints (\ref{simp const 1}-\ref{simp const 2}). 

The probability that a random $S$ satisfies $C\u = \zero$ is 
\begin{equation}
    \pr \{ C \u = \zero\} 
    =  
    \frac{|\{ S\in \S_{2n} : C\u = \zero \}|}
    {|\S_{2n}|} \ .
\end{equation}
By noting that all factors in \eqref{eq:number} are the same as in \eqref{eq:order S} except for the factor at position $r$, we obtain
\begin{align}
    \nonumber
    \pr \{ C \u = \zero\} 
    &\leq
    \frac{2^{2n -(r-1)}}{2^{4n -(r-1)} - \alpha'}
    \leq
    \frac{2^{2n -(r-1)}}{2^{4n -(r-1)} -1}
    \\ \label{fixed u} &=
    \frac{2^{-2n}}{1 - 2^{(r-1)-4n}}
    \leq
    \frac{2^{-2n}}{1 - 2^{-2n}} \ ,
\end{align}
where $\alpha'=1$ if $r$ is odd and $\alpha'=0$ otherwise.
The last inequality above follows from $r\leq 2n$. The bound \eqref{fixed u} is correct also for $ r=1 $.
The fact that bound \eqref{fixed u} is independent of $r$ is crucial for the rest of the proof.

Next, we generalize bound \eqref{fixed u} to the case where $C \u_i = 0$ for $k$ given linearly-independent vectors $\u_i \in \{\u_1, \dots ,\u_k\}$.
To do this, we take the $2n \times k$ matrix $[\u_1, \dots ,\u_k]$ and perform Gauss-Jordan elimination, operating on the columns, to obtain a matrix $[\v_1, \dots ,\v_k]$ having column-echelon form. $ \{ C \u_1 = \zero, \ldots, C \u_k = \zero\} $ is equivalent to $ \{ C \v_1 = \zero, \ldots, C \v_k = \zero\} $, in fact only two operations are performed on the set $ \{ \u_1, \ldots, \u_k \} $ to obtain the set $ \{ \v_1, \ldots, \v_k \} $: changing the order of the vectors $ \{\u_1, \ldots, \u_k \} $, replacing a vector $ \u_j $ with the sum of $ \u_j $ with another vector $ \u_l $.
If we denote by $r_i$ the position of the last ``1" of $\v_i$, then column-echelon form amounts to $r_1 < r_2 < \cdots < r_k$.
Now we proceed as above to generate each column of $S$ satisfying the symplectic and the $C\v_i =\zero$ constraints.
This gives
\begin{align}
    \pr \{ C \u_1 = \zero, \ldots, C \u_k = \zero\} 
    \leq
    \frac{2^{2n -(r_1-1)}}{2^{4n -(r_1-1)} - \alpha'_1}
    \frac{2^{2n -(r_2-1)}}{2^{4n -(r_2-1)} - \alpha'_2}
    \cdots
    \frac{2^{2n -(r_k-1)}}{2^{4n -(r_k-1)} - \alpha'_k}
    \ ,
\end{align}
where $\alpha'_i \in \{0,1\}$.
Similarly as in \eqref{fixed u} we obtain
\begin{align}
    \label{fixed uk}
    \pr \{ C \u_1 = \zero, \ldots, C \u_k = \zero\} 
    \leq
    \frac{2^{-2nk}} {(1-2^{-2n})^k}
    \ .
\end{align}
If we multiply the above bound by the number $\N_k^{2n}$ of $k$-dimensional subspaces of $\Z2^{2n}$ (see appendix \ref{app:AdditionalLemmas}), then we obtain
\begin{align} \label{eq:upper_b}
    \nonumber
    \pr \{ \text{rank} (C) \leq 2n - k \} 
    &= 
    \N_k^{2n} \ 
    \pr \{ C \u_1 = \zero, \ldots, C \u_k = \zero\}
    \\ \nonumber &\leq
    \min\{2^k,4\}\, \frac{2^{2nk}}{2^{k^{2}}}\, 
    \frac{2^{-2nk}}{(1-2^{-2n})^k}  \ ,
    \\ &=
    \min\{2^k,4\}\, \frac {2^{-k^{2}}} {(1-2^{-2n})^k}  \ ,
\end{align}
where in the last inequality we used Lemma \ref{lemma:upperbound subspace}. 
%Additionally by Lemma \ref{lemma:upperbound subspace}, for $k \leq 3$ we can replace 4 with $(1+k)$ for a slightly better bound.
Using Lemma \ref{S block permutation} (appendix \ref{app:AdditionalLemmas}), the above argument applies to any of the four sub-matrices $A, B, C, D$. 
The proof of equation \eqref{eq:upper_bound_rank_C} is then completed.
\end{proof}

\subsection{Rank of product of sub-matrices}

\begin{Lemma}
\label{le:diagonal}
  Let the random matrices $S_1, S_2, \ldots, S_r \in \S_{2n}$ be independent and uniformly distributed, which induces a distribution for the sub-matrices
\begin{equation}
  \label{eq:S_i decomp}
  S_i
  =
  \left( \begin{array}{cc}
     A_i & B_i \\
     C_i & D_i 
  \end{array}\right) .
\end{equation}
For any choice $E_i \in \{A_i, B_i, C_i, D_i\}$ for each $i\in \{1, \ldots, r\}$, we have
\begin{align}
  \label{eq: bound rank CCCCC}
  \pr\big\{ {\rm rank} (E_r \cdots E_1)\leq 2n-k\} 
  &\leq 
  \ 
  \frac {2^k} {(1-2^{-2n})^k} 
   \binom {k+r-1} {k} 
  2^{-\frac 1 2 k^2}
    \ .
\end{align}
\end{Lemma}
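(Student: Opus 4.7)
The plan is to exploit the submultiplicativity of rank together with the independence of $S_1, \ldots, S_r$. Define the intermediate ranks $\rho_i = {\rm rank}(E_i \cdots E_1)$ with $\rho_0 = 2n$, and the non-negative decrements $\delta_i = \rho_{i-1} - \rho_i$. Since $2n - {\rm rank}(E_r \cdots E_1) = \sum_i \delta_i$, the event $\{{\rm rank}(E_r \cdots E_1) \leq 2n - k\}$ coincides with $\{\sum_i \delta_i \geq k\}$. Whenever the latter holds, one can greedily extract non-negative integers $a_i \leq \delta_i$ summing to $k$, so a union bound yields
\begin{equation*}
\pr\{{\rm rank}(E_r \cdots E_1) \leq 2n - k\} \;\leq\; \sum_{\substack{a_1,\ldots,a_r \geq 0 \\ a_1+\cdots+a_r = k}} \pr\{\delta_1 \geq a_1, \ldots, \delta_r \geq a_r\},
\end{equation*}
where the number of tuples in the sum is $\binom{k+r-1}{k}$ by the standard stars-and-bars count.

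The next step is to bound each joint probability by conditioning step by step. Conditional on $E_1, \ldots, E_{i-1}$, the event $\delta_i \geq a_i$ says that $E_i$ vanishes on some $a_i$-dimensional subspace $W$ of the image of $E_{i-1} \cdots E_1$, a space of dimension $\rho_{i-1}$. Since $S_i$ is independent of the history, a union bound over the $\N_{a_i}^{\rho_{i-1}}$ candidate subspaces $W$, combined with the single-subspace estimate \eqref{fixed uk} from the proof of Lemma \ref{single rank decay}, gives
\begin{equation*}
\pr\{\delta_i \geq a_i \mid E_1, \ldots, E_{i-1}\} \;\leq\; \N_{a_i}^{\rho_{i-1}}\, \frac{2^{-2n a_i}}{(1-2^{-2n})^{a_i}}.
\end{equation*}
Bounding $\N_{a_i}^{\rho_{i-1}} \leq \min\{2^{a_i}, 4\}\, 2^{a_i(\rho_{i-1}-a_i)}$ via Lemma \ref{lemma:upperbound subspace}, and using $\rho_{i-1} \leq 2n - k_{i-1}$ with $k_j = a_1 + \cdots + a_j$, this simplifies to $\min\{2^{a_i}, 4\}\, 2^{-a_i(k_{i-1}+a_i)}/(1-2^{-2n})^{a_i}$.

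To finish, I would multiply these conditional bounds over $i$ and invoke the telescoping identity $\sum_i a_i(k_{i-1} + a_i) = \tfrac{1}{2}(k^2 + \sum_i a_i^2) \geq k^2/2$, which follows from $k_i^2 - k_{i-1}^2 = 2 a_i k_{i-1} + a_i^2$ upon summing from $i=1$ to $r$. Discarding the extra factor $2^{-\sum_i a_i^2/2} \leq 1$ and using $\min\{2^{a_i}, 4\} \leq 2^{a_i}$, the joint bound becomes
\begin{equation*}
\pr\{\delta_1 \geq a_1, \ldots, \delta_r \geq a_r\} \;\leq\; 2^{-k^2/2}\, \frac{2^k}{(1-2^{-2n})^k},
\end{equation*}
independent of the particular tuple $(a_1,\ldots,a_r)$. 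Multiplying by the number of tuples recovers \eqref{eq: bound rank CCCCC}. The main obstacle is recognising the telescoping that converts the accumulating exponents $a_i k_{i-1}$ into the target quadratic exponent $k^2/2$; once this identity is in hand, the surrounding ingredients (independence of the $S_i$, the single-subspace bound from Lemma \ref{single rank decay}, the Gaussian-binomial count, and stars and bars) combine with no further difficulty, and the discarded slack $2^{-\sum_i a_i^2/2}$ merely reflects that the bound is not tight when a single factor $E_i$ already destroys most of the rank.
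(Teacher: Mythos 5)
Your proof is correct and follows essentially the same route as the paper's: both reduce the rank-drop event to a sum over compositions $(a_1,\ldots,a_r)$ of $k$, bound each composition using the single-subspace estimate and subspace count from the proof of Lemma~\ref{single rank decay}, and then control the resulting quadratic exponent. The only differences are presentational. The paper first proves an intermediate bound on ${\rm rank}(CF)$ for fixed $F$ and then iterates, summing over ``intermediate kernel sizes,'' whereas you make the union bound $\{\sum_i \delta_i \ge k\} \subseteq \bigcup_{a_1+\cdots+a_r=k}\{\delta_i\ge a_i\ \forall i\}$ explicit and then peel off factors via the chain rule of conditional probability; this is slightly cleaner since it makes the role of independence and the conditioning on $\rho_{i-1}\le 2n-k_{i-1}$ transparent. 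For the quadratic lower bound, the paper uses the symmetry $k^2=\sum_{i,j}k_ik_j\le 2\sum_i\sum_{j\le i}k_ik_j$, while you use the telescoping $k_i^2-k_{i-1}^2=2a_ik_{i-1}+a_i^2$; these are of course the same identity, and you even note the discarded slack $2^{-\sum_i a_i^2/2}$. No gap.
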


\begin{proof}
Before analyzing the rank of the product of $r$ independent random matrices $C_r \cdots C_1$, we start by a much simpler problem. 
Analyzing the rank of the product $CF$ where $C$ follows the usual $C$-distribution and $F$ is a fixed $2n \times 2n$ matrix with ${\rm rank}(F)= 2n-k_1$.
Noting that the input space of $C$ has dimension $2n-k_1$, from \eqref{eq:upper_b}, with $k_2 \equiv k-k_1 \geq 0$,   we obtain
\begin{align}
    \nonumber
    \pr \big\{ {\rm rank} (C F) \leq 2n-k \big\} 
    &\leq \N_{k_2}^{2n-k_1} 
    \pr\{C\u_1 =\zero, \ldots, C\u_{k_2}=\zero\}
    \\ \nonumber &\leq
    \min\{2^{k_2},4\}\, 
    \frac{2^{(2n-k_1)k_2}}{2^{k_2^2}}\, 
    \frac{2^{-2n k_2}}{(1-2^{-2n})^{k_2}}
    \\ \label{eq:CF rank} &\leq   
    \frac{2^{k_2 -k_1 k_2 -k_2^2}}{(1-2^{-2n})^{k_2}}
    \ ,
\end{align}

Proceeding in a similar fashion, we can analyze the product of two independent $C$-matrices.
To do so, we multiply two factors \eqref{eq:CF rank} and sum over all possible intermediate kernel sizes $k_1$, obtaing
\begin{align}
  \label{C22}
  \pr \big\{ {\rm rank} (C_2 C_1) \leq 2n-k \big\} 
  &\leq 
  \sum_{k_1=0}^k
  \frac{2^{k_2 -k_2 k_1 -k_2^2}}{(1-2^{-2n})^{k_2}}
  \ 
  \frac{2^{k_1 -k_1^2}}{(1-2^{-2n})^{k_1}}
  \\ &= 
  \sum_{k_1=0}^k
  \frac{2^{k -k_2 k_1 -k_1^2 -k_2^2}}{(1-2^{-2n})^k}
  \nonumber ,
\end{align}
where again $k_2= k-k_1$.

Equation \eqref{C22} works as follows: the matrix $ F $ in \eqref{eq:CF rank} has fix rank equal to $ 2n-k_1 $. That's the dimension of the input space of $ C $. In \eqref{C22} the input space of $ C_1 $ is the full space $ \Z2^{2n} $ that has dimension $ 2n $, therefore the factor $ \frac{2^{k_1 -k_1^2}}{(1-2^{-2n})^{k_1}} $ in \eqref{C22} equals the upper bound in \eqref{eq:CF rank} that is $ \frac{2^{k_2 -k_1 k_2 -k_2^2}}{(1-2^{-2n})^{k_2}} $  with $ k_1 = 0 $  and then with $ k_2 $ replaced by $ k_1 $. 
Moreover the input space of $ C_2 $ has dimension $ 2n-k_1$ that is like in equation \eqref{eq:CF rank}, that explains the first factor in \eqref{C22}.

Analogously, we can bound the rank of a product of $r$ independent random $C$-matrices as
\begin{align}
  \nonumber
  \pr\big\{ {\rm rank} (C_r \cdots C_1) \leq 2n-k \big\}  
  &\leq 
  \sum_{\{k_i\}} \prod_{i=1}^r
  \frac{2^{k_i -k_i \sum_{j=1}^i k_j}}{(1-2^{-2n})^{k_i}}
  \\ \label{eq: sum over paths} &= 
  \frac {2^k} {(1-2^{-2n})^k}
  \sum_{\{k_i\}} 
  2^{-\sum_{i=1}^r k_i \sum_{j=1}^i k_j}
  \ ,
\end{align}
where the sum $\sum_{\{k_i\}}$ runs over all sets of $r$ non-negative integers $\{k_1, \ldots, k_r\}$ such that $\sum_{i=1}^r k_i =k$.
These are all ways of sharing out $k$ units among $r$ distinguishable parts.
The number of all these sets equals:
%can be bounded (see lemma \ref{lem:BinomialUpper} in appendix \ref{app:AdditionalLemmas}) as
\begin{align}
  \label{eq:numpaths}
  \sum_{\{k_i\}} 1
  =
  \binom {k+r-1} {r-1} 
  =
  \binom {k+r-1} {k} 
  \ .
\end{align}

% \begin{align}
%   \label{eq:numpaths}
%   \sum_{\{k_i\}} 1
%   =
%   \binom {k+r-1} {r-1} 
%   =
%   \binom {k+r-1} {k} 
%   \leq 
%   \left(1+ r \right)^k 
%   \leq (2r)^k 
%   \ .
% \end{align}

Finally, for any set $\{k_1, \ldots, k_r\}$ we have
\begin{align}
  \nonumber
  k^2 
  &= 
  \sum_{i=1}^r \sum_{j=1}^r k_i k_j
  \leq 
  \sum_{i=1}^r \sum_{j=1}^i k_i k_j
  +
  \sum_{i=1}^r \sum_{j=i}^r k_i k_j
  \\ \label{eq:pathindep} &= 
  2\sum_{i=1}^r \sum_{j=1}^i k_i k_j
  \ .
\end{align}
Substituting \eqref{eq:numpaths} and \eqref{eq:pathindep} back in \eqref{eq: sum over paths} we obtain
\begin{align}
  \pr\big\{ {\rm rank} (C_r \cdots C_1) \leq 2n-k \big\}  
  \ \leq \ 
  \frac {2^k} {(1-2^{-2n})^k} 
   \binom {k+r-1} {k} 
  2^{-\frac 1 2 k^2}
  \ .
\end{align}
Once again, by using Lemma \ref{S block permutation}, this proof applies to all products of  sub-matrices $E\in \{A,B,C,D\}$.
\end{proof}

\begin{Lemma}\label{le:CCCCu=0}
  If the random variables $S_1, S_2, \ldots, S_r \in \S_{2n}$ and $\u \in \Z2^{2n}$ are independent and uniformly distributed it follows that
\begin{equation}
  \pr \big\{ E_r \cdots E_1 \u = \zero \big\} 
  \leq 
  8\, r\, 2^{-n}\ .
\end{equation}
being $ E_j \in \{A_j,B_j,C_j,D_j\} $ the subblocks of the symplectic matrices $S_1, ..., S_r$.
\end{Lemma}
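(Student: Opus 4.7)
The plan is to reduce the problem to a union bound over time steps, using the one-step estimate that already appears in the proof of Lemma~\ref{single rank decay}. Define the random trajectory $\u_0 = \u$ and $\u_i = E_i \u_{i-1}$ for $i = 1, \ldots, r$, so the event of interest is $\{\u_r = \zero\}$. I would use the inclusion
\[
  \{\u_r = \zero\}
  \ \subseteq\
  \{\u_0 = \zero\}
  \cup
  \bigcup_{i=1}^r \{\u_i = \zero,\ \u_{i-1} \neq \zero\} \ ,
\]
which holds because if $\u_0 \neq \zero$ and $\u_r = \zero$ then there is a first index at which the sequence vanishes, and at that index the previous vector is nonzero.

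The first event has probability $\pr\{\u_0 = \zero\} = 2^{-2n}$ since $\u$ is uniform on $\Z2^{2n}$. For each of the remaining $r$ events, I would condition on $(\u_0, E_1, \ldots, E_{i-1})$: this data determines $\u_{i-1}$ and turns $\{\u_{i-1} \neq \zero\}$ into a deterministic condition, while $E_i$ remains independent of it and marginally distributed as a sub-block of a uniform element of $\S_{2n}$. The fixed-vector bound \eqref{fixed u} from the proof of Lemma~\ref{single rank decay} (together with the extension to any of $A_i, B_i, C_i, D_i$ already noted there) gives, for every nonzero fixed $\v \in \Z2^{2n}$,
\[
  \pr\{E_i \v = \zero\} \ \leq\ \frac{2^{-2n}}{1 - 2^{-2n}} \ .
\]
Taking expectations over $\u_{i-1}$ and summing over $i$ yields
\[
  \pr\{\u_r = \zero\} \ \leq\ 2^{-2n} + r\cdot\frac{2^{-2n}}{1 - 2^{-2n}} \ ,
\]
which for $n \geq 1$ is at most $3r\cdot 2^{-2n}$ and hence comfortably at most $8r\cdot 2^{-n}$. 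The case $n=0$ is trivial because the stated bound already exceeds $1$.

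The main obstacle is conceptual rather than computational: one must track the conditional-independence structure carefully in order to legitimately invoke \eqref{fixed u}. The vector $\u_{i-1}$ depends on the earlier random variables $\u_0, E_1, \ldots, E_{i-1}$, so the single-step estimate only applies after one verifies that, conditional on all of these, the sub-block $E_i$ is still distributed exactly as a sub-block of a uniformly random $S_i \in \S_{2n}$, independent of the conditioning. Once this is in place the rest is a routine union bound, and the resulting estimate is in fact noticeably tighter than the $8r\cdot 2^{-n}$ being claimed, so there is ample slack in the constants.
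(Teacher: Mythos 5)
Your proof is correct and takes a genuinely different route from the paper's, and it is both shorter and sharper. The paper splits the event $\{E_r\cdots E_1\u=\zero\}$ according to the rank of the product, bounds the low-rank branch via Lemma~\ref{le:diagonal}, and optimises over the rank-deficit parameter $k$; the optimal $k$ is of order $n$, and the residual kernel-size term $2^{k-2n}$ then forces the final bound to decay only as $2^{-n}$. Your argument bypasses the rank analysis entirely: you locate the first index at which the trajectory $\u_i=E_i\u_{i-1}$ vanishes and apply the one-step fixed-vector estimate \eqref{fixed u} (valid for any of the blocks $A,B,C,D$ by Lemma~\ref{S block permutation}) at that step. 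The conditional-independence point you flag is indeed the crux, and it holds cleanly here because $\u_{i-1}$ is a deterministic function of $(\u,S_1,\ldots,S_{i-1})$, which is independent of $S_i$. The resulting union bound, $\pr\{\u_r=\zero\}\leq 2^{-2n}+r\,2^{-2n}/(1-2^{-2n})\leq 3r\,2^{-2n}$ for $n,r\geq 1$ (and trivially $\leq 1$ for $n=0$), decays as $2^{-2n}$ rather than $2^{-n}$ and so implies the stated $8r\,2^{-n}$ with substantial slack. For this particular statement your approach is the simpler and tighter one, and it does not need Lemma~\ref{le:diagonal} at all.
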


\begin{proof}
If $M$ is a fixed $2n \times 2n$ matrix with ${\rm rank} M = 2n-k$ and $\u \in \Z2^{2n}$ is uniformly distributed, then
\begin{equation}
  \pr \big\{ M \u = \zero \big\} 
  =
  \frac {2^k} {2^{2n}}\ .
\end{equation}
Also, if ${\rm rank} M > 2n-k$ then
\begin{equation}\label{eq:prob zero}
  \pr \big\{ M \u = \zero \big\} 
  \leq
  \frac {2^{k-1}} {2^{2n}}\ .
\end{equation}
This inequality is useful for the following bound
\begin{align}
  \nonumber
  & \hspace{-10mm}
  \pr \big\{ C_r \cdots C_1 \u = \zero \big\} 
  \\ \nonumber & = 
  \pr \big\{ C_r \cdots C_1 \u = \zero \mbox{ and rank} (C_r \cdots C_1) >2n-k \big\} 
  \\ \nonumber & +
  \pr \big\{ C_r \cdots C_1 \u = \zero \mbox{ and rank} (C_r \cdots C_1) \leq 2n-k \big\} 
  \\ \nonumber & \leq
  \pr \big\{ C_r \cdots C_1 \u = \zero\, \big|\,  \mbox{rank} (C_r \cdots C_1) >2n-k \big\} 
  \\ \nonumber & +
  \pr\big\{ {\rm rank} (C_r \cdots C_1) \leq 2n-k \big\} 
  \\ &\leq
  2^{k-1-2n} + 
  \frac {2^k} {(1-2^{-2n})^k} 
   (1+ r)^k
  2^{-\frac 1 2 k^2}\ ,
\end{align}
where the last inequality uses \eqref{eq:prob zero} and Lemma \ref{le:diagonal} (and additional lemma \ref{lem:BinomialUpper} in the appendix \ref{app:AdditionalLemmas}).

Using
\begin{align} \label{root_of_e} \nonumber
 \frac{1}{(1-2^{-2n})^k} & \le \frac{1}{(1-2^{-2n})^{2n}} = \left(1+\frac{1}{2^{2n}-1}\right)^{2n} = \left(1+\frac{1}{2\left(2^{2n-1}-\frac{1}{2}\right)}\right)^{2n} \\
 & \le  \left(1+\frac{1}{4n}\right)^{2n} \le \sqrt{e} < 2 \ , 
\end{align}
we obtain
\begin{align}
  \label{prob CCCu=0}
  \pr \big\{ C_r \cdots C_1 \u = \zero \big\} 
  &\leq 
  2^{k-2n} + 
  2 \left(4r\right)^k
  2^{-\frac 1 2 k^2}
  =
  \epsilon\ ,
\end{align}
where the last equality defines $\epsilon$.
Note that the left-hand side above is independent of $k$. Hence, for each value of $k$ we have a different upper bound. We are interested in the tightest one of them. Therefore, we need to find a value of $k\in [1,2n]$ that makes the upper bound \eqref{prob CCCu=0} have a  small enough value.
This can be done by equating each of the two terms to $\epsilon/2$ as
\begin{equation}
  \label{eq:2 exponents}
  2^{k-2n} 
  = 
  2\left(4 r\right)^k
  2^{-\frac 1 2 k^2}
  =
  \frac \epsilon 2 \ .
\end{equation}
Isolating $k$ from the first and second terms gives
\begin{align}
  k &= 2n -\log_2\frac 2 \epsilon\ ,
  \\
  k &= \log_2 4r 
  +\sqrt{\log_2^2 4r +\log_2\frac 2 \epsilon + 1}
  \ ,
\end{align}
where we only keep the positive solution.
Equating the above two identities for $k$ we obtain
\begin{align}
  \nonumber
  n
  &= \frac{1}{2} \left(
  \log_2 4r +\log_2\frac 2 \epsilon
  +\sqrt{\log_2^2 4r + \log_2\frac 2 \epsilon + 1} \right)
  \\ \nonumber &\leq  \frac{1}{2}\left( 
  \log_2 4r +\log_2\frac 2 \epsilon
  +\log_2 4r +\sqrt{\log_2\frac 2 \epsilon + 1}
  \right)
  \\ &\leq  
  \log_2 4r +\log_2\frac 2 \epsilon
  \ ,
\end{align}
which implies
\begin{equation}
  \epsilon \leq 8\, r\, 2^{-n}\ .
\end{equation}
Substituting this into \eqref{prob CCCu=0} we finish the proof of this lemma.
\end{proof}

%-----------------------------------------------------------------------------------------------------------

%                       	LOCAL DYNAMICS IS PAULI MIXING

%-----------------------------------------------------------------------------------------------------------

\section{Local dynamics is Pauli mixing}
\label{app:LocalDynamics_PauliMixing}
In this section, using the results from the Section \ref{app:RandomCliffordDynamics}, we will prove that in the regime $N \gg \log L$ the random dynamics of the model that we are considering maps any Pauli operator to any other Pauli operator with approximately uniform probability.

The time evolution of an initial vector $\u^0 \in \V_{\rm chain}$ at time $t$ is denoted by $\u^t = S(t) \u^0$.
If the initial vector is supported only at the origin $\u^0 \in \V_0$ then, as time $t$ increases, the evolved vector $\u^t$ is supported on the lightcone 
\begin{equation}\label{eq:lightcone}
  x \in 
  \{-(2t-1), -(2t-2), \ldots, 2t\} 
  \subseteq \mathbb Z_L\ .
\end{equation}

This leads to the definition of scrambling time: the length of the chain, $L$, is taken to be an integer multiple of 4, the system goes from $-L/2$ to $L/2$ with periodic boundary conditions. The scrambling time is the smallest time such that a perturbation supported at $x=0$ at $t=0$ evolves spreading its support to $\{-L/2+1,...,L/2\}$, therefore: 
\begin{equation}
 t_{\rm scr} \equiv \frac{L}{4}.
\end{equation}
The definition equally applies to the evolution of a vector $\u^t = S(t) \u^0$ as above.
% Due to the periodic boundary conditions, the left and right fronts of the lightcone touch each other when $-(2t-1) = 2t+1 \bmod L$. 
% The smallest solution of this equation is the scrambling time
% \begin{equation}
%   t_{\rm scr}
%   = 
%   \left\{ \begin{array}{ll}
%     L/4 & \mbox{ if $L/2$ is even}
%     \\
%      (L/2+1)/2 & \mbox{ if $L/2$ is odd}
%   \end{array} \right.\ ,
% \end{equation}
% which is the minimum time $t$ required for any non-zero initial vector $\u^0$ to spread over the whole system.

Finally, we denote the projection of $\u$ on the local subspace $\V_x$ by $\u_x$.

\begin{Lemma}\label{lem:local operator}
  Consider a vector $\u^0 $ supported at the origin $ \V_0$ and its time evolution $\u^t$ for any $t\in \{1/2, 1, 3/2, \ldots, 2 t_{\rm scr}\}$. 
  The projection of $\u^t$ at the rightmost site of the lightcone $x=2t$ follows the probability distribution
\begin{equation}
  \label{eq_P(ux)}
  P(\u^t_{2 t}) = \left\{
  \begin{array}{cc}
    \frac {1-q_t} {2^{2N}-1} & \mbox{ if } \u^t_{2 t} \neq \zero
    \\
    q_t & \mbox{ if } \u^t_{2 t} = \zero
  \end{array}
  \right.\ ,
\end{equation}
where $q_t\leq 2\, t\, 2^{-2N}$.
The projection onto the second rightmost site $\u^t_{2 t-1}$ also obeys distribution \eqref{eq_P(ux)}.
\end{Lemma}

\begin{proof}
After half a time step the evolved vector $\u^{1/2}$ is supported on sites $x\in \{0, 1\}$ and it is determined by
\begin{equation}
  \u^{1/2}_0 \oplus \u^{1/2}_1
  = 
  S_0 (\u^0_0 \oplus \zero)\ .
\end{equation}
Lemma~\ref{lem:uniformity} tells us that the vector $\u^{1/2}_0 \oplus \u^{1/2}_1$ is uniformly distributed over all non-zero vectors in $\V_0 \oplus \V_1$.
This implies that the vector $\u^{1/2}_0$ (and the same for $\u^{1/2}_1$) satisfies
\begin{equation}
  \label{eq:probzero1}
  \pr\{\u^{1/2}_0 = \zero \} 
  = \frac {2^{2N}-1} {2^{4N}-1}
  \leq 2^{-2N} \ ,
\end{equation}
and has probability distribution of the form \eqref{eq_P(ux)} with $t=1/2$.

In the next time step we have
\begin{equation}
  \u^{1}_1 \oplus \u^{1}_2
  = 
  S_1 (\u^{1/2}_1 \oplus \zero)\ .
\end{equation}
Hence, if $\u^{1/2}_1 =\zero$ then $\u^{1}_1 = \u^{1}_2 =\zero$.
Also, applying again Lemma~\ref{lem:uniformity} we see that, if $\u^{1/2}_1 \neq \zero$, then $\u^{1}_1 \oplus \u^{1}_2$ is uniformly distributed over all non-zero values.
Putting these things together we conclude that $\u^{1}_1$ (and the same for $\u^{1}_2$) satisfies
\begin{align}
  \nonumber
  \pr\{\u^1_1 = \zero\} 
  &= 
  \pr\{\u_1^{1/2} = \zero\} 
  + 
  \pr\{\u_1^{1/2} \neq\zero\}\, 
  \pr\{\u^1_x = \zero | \u^{1/2}_1 \neq \zero\}
  \\ \nonumber &\leq 
  \pr\{\u_1^{1/2} = \zero\} 
  + 
  \pr\{\u_1^{1/2} \neq\zero\}\, 2^{-2N}
  \\ \label{eq:probzero2} &\leq 
  2\times 2^{-2N}
  \ ,
\end{align}
and has probability distribution of the form \eqref{eq_P(ux)} with $t=1$.

We can proceed as above, applying Lemma~\ref{lem:uniformity} to each evolution step
\begin{equation}
  \label{eq:recursive}
  \u^t_{2t-1} \oplus \u^t_{2t}
  = 
  S_{2t-1} (\u^{t-1/2}_{2t-1} \oplus \zero)\ ,
\end{equation}
for $t=1/2, 1, 3/2, 2, \ldots$
This gives us the recursive equation
\begin{align}
  \nonumber
  \pr\{\u_{2t}^t =\zero\}
  &=
  \pr\{\u_{2t-1}^{t-1/2} =\zero\} 
  +
  \pr\{\u_{2t-1}^{t-1/2} \neq\zero\}\,
  \pr\{\u_{2t}^t =\zero | \u_{2t-1}^{t-1/2} \neq\zero\}
  \\ \label{eq:rec rel}&\leq
  2t\times 2^{-2N}\ .
\end{align}
And the same for $\u^t_{2t-1}$.
Also, Lemma~\ref{lem:uniformity} implies that $\u^t_{2t-1}$ and $\u^t_{2t}$ follow the probability distribution \eqref{eq_P(ux)} for all $t=1/2, 1, 3/2, 2, \ldots, 2 t_{\rm scr}$.

For $t> 2t_{\rm scr}$ the recursion relation \eqref{eq:recursive} includes repeated matrices $S_x$. Hence the argument is no longer valid.
\end{proof}

\begin{Lemma}\label{lem:nonzero pure}
  If the initial vector $\u^0 \in \V_{\rm chain}$ is supported on all lattice sites ($\u_x^0 \neq\zero$ for all $x$) then the projection of its evolution $\u^t$ onto any site $x\in \mathbb Z_L$ satisfies
\begin{align}
  \label{eq:nonzero pure}
  \pr\!\left\{\u^t_x \neq \zero \right\}
  \geq 
  1- 16\, t\, 2^{-N} ,
\end{align}
for all $t\in \{1/2, 1, 3/2, \ldots, 2 t_{\rm scr}\}$.
\end{Lemma}

\begin{proof}
To prove this lemma we proceed similarly as in Lemma~\ref{lem:local operator}.
However, here, the recursive equation~\eqref{eq:recursive} need not have a $\zero$-input in the right system
\begin{equation}
  \label{eq:recursive 2}
  \u^t_{2t-1} \oplus \u^t_{2t}
  = 
  S_{2t-1} (\u^{t-1/2}_{2t-1} \oplus \u^{t-1/2}_{2t})\ .
\end{equation}
This difference in the premises does not change  conclusion \eqref{eq:probzero1}, due to the fact that bound \eqref{uniform v1} is independent of $\u_1^0$ being zero or not.
This gives \eqref{eq_P(ux)} for $t=1/2$. 
Also, using 
\begin{align}
  \nonumber
  \pr\{\u^1_2 = \zero\} 
  &= 
  \pr\{\u_1^{1/2} \oplus \u_2^{1/2} = \zero\} 
  + 
  \pr\{\u_1^{1/2} \oplus \u_2^{1/2}\neq\zero \mbox{ and } \u^1_2 = \zero \}
  \\ \nonumber &\leq 
  \pr\{\u_1^{1/2} = \zero\} 
  + 
  \pr\{\u^1_x = \zero \, \big|\,  \u_1^{1/2} \oplus \u_2^{1/2} \neq\zero\}
  \\ \label{eq:probzero3} &\leq 
   2^{-2N+1}
  \ ,
\end{align}
we obtain the same probability distribution as in \eqref{eq_P(ux)} for $t=1$, but under different premises. 
However, here there is a very delicate point.
As can be seen in Figure~\ref{fig:evolution}, the vector $\u^1_2$ is partly determined by $S_2$, and hence, it is not independent from $S_2$.
Crucially, the bound \eqref{eq:probzero2} for $\u^1_2$ holds regardless of the right input $\u^{1/2}_2$, and hence,  is independent of $S_2$.
This fact can be summarized with the following bound
\begin{equation}
  \label{eq:P(u2)}
  P(\u^1_2| S_2) = \left\{
  \begin{array}{cc}
    \frac {1-q_1} {2^{2N}-1} & \mbox{ if } \u^1_2 \neq \zero
    \\
    q_1 & \mbox{ if } \u^1_2 = \zero
  \end{array}
  \right.\ ,
\end{equation}
for any $S_2$, where $q_1\leq 2\, 2^{-2N}$.
That is, the correlation between $\u^1_2$ and $S_2$ can only happen through small variations of $q_1$.

\begin{figure}
    \hspace{2mm}
    \includegraphics[width = 16cm]{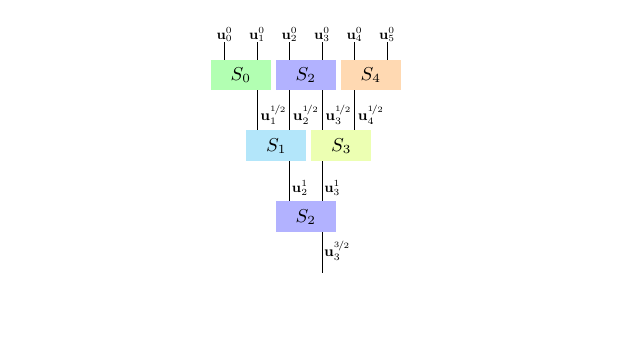}
    \vspace{-7mm}
    \caption{This figure shows that the causal past of $\u^1_2$ is partly determined by $S_2$. 
    Hence, at $t=3/2$, the input $\u^1_2$ of $S_2$ is not independent of $S_2$.
    This makes the exact probability distribution of $\u_3^{3/2}$ very complicated.
To overcome this problem we exploit the fact that $S_1$ only appears once in the past of $\u_3^{3/2}$. 
This allows to map the randomness of $S_1$ to $\u_3^{3/2}$ for most of the values of the other gates $(S_1, S_2, S_3, S_4)$. More concretely, we can apply Lemma~\ref{le:CCCCu=0} to the case $r=1$, $E_1 = C_2$ and $\u = \u_2^{1}$, resulting in that $\u^{3/2}_3$ is approximately uniform. 
%    However, we prove that $\u_2^1$ is approximatedly independent of $S_2$.
    }
    \label{fig:evolution}
\end{figure}

For $t> 1$, the inputs in \eqref{eq:recursive 2} are not independent of the matrix $S_{2t-1}$, as illustrated in Figure~\ref{fig:evolution}, and hence, Lemma~\ref{lem:uniformity} cannot be applied.
If we restrict equation \eqref{eq:recursive 2} to the rightmost output ($x=2t$) then we obtain
\begin{align}
  \nonumber
  \u^t_{2t}
  &= 
  C_{2t-1}\u^{t-1/2}_{2t-1} + D_{2t-1}\u^{t-1/2}_{2t}
  \\ \label{eq:recursive 3}
  &= 
  C_{2t-1}\u^{t-1/2}_{2t-1} + \v^{t-1/2}\ ,
\end{align}
where the vector $\v^{t-1/2} = D_{2t-1}\u^{t-1/2}_{2t} \in \Z2^{2N}$ is not independent of $C_{2t-1}$.
Expanding this recursive relation we obtain
\begin{align}
  \nonumber
  \u^t_{2t}
  &= 
  C_{2t-1} C_{2t-2} \u^{t-1}_{2t-2} 
  +C_{2t-1} \v^{t-1} +\v^{t-1/2}
  \\ \label{ut wt} &= 
  C_{2t-1} \cdots C_2 \u^{1}_2 
  +{\bf w}^t\ ,
\end{align}
where the random vector
\begin{equation}
\label{eq:random wt}
  {\bf w}^t
  =
  C_{2t-1} \cdots C_{3} \v^{1}
  +\cdots 
  +C_{2t-1} C_{2t-2} \v^{t-3/2}
  +C_{2t-1}\v^{t-1}
  +\v^{t-1/2}
\end{equation}
is not independent of the matrices $C_{2t-1}, \ldots, C_2$.
Crucially, the bound \eqref{eq:P(u2)} for the distribution of $\u_2^1$ is independent of all these matrices.

Let us introduce the uniformly distributed random variable $\u \in\Z2^{2N}$, which is independent of all gates $S_x$.
According to \eqref{eq:P(u2)}, the random variable $\u_2^1$ is close to uniform, hence, it has small statistical distance with $\u$,
\begin{align}
  \nonumber
  {\rm d}(\u_2^1, \u)
  &=\sum_{\u_2^1} \left| P(\u_2^1) -2^{-2N} \right| 
  \\ \nonumber &= \left| q_1 -2^{-2N} \right| 
  +\left( 2^{2N}-1\right) \left|\frac {1-q_1} {2^{2N}-1} -2^{-2N} \right|
  \\ 
  \nonumber
  &=
  2\left| q_1 -2^{-2N} \right|
  \leq 2\, 2^{-2N}\ .
\end{align}
For any event $\mathcal E \subseteq \Z2^{2N}$ we have that
\begin{align}
  \nonumber
  \pr\{\u_2^1 \in \mathcal E\}
  &=
  \sum_{\u_2^2 \in \mathcal E} P(\u_2^1)
  \leq
  \sum_{\u_2^2 \in \mathcal E} 
  \left(2^{-2N} 
  +\left|P(\u_2^1) -2^{-2N}\right| \right)
  \\ \nonumber &\leq 
  \pr\{\u \in \mathcal E\}
  + {\rm d}(\u_2^1, \u)
  \\ \label{eq:E18} &\leq 2\, 2^{-2N}
  +\pr\{\u \in \mathcal E\}\ .
\end{align}
Next, we apply this bound to the particular event $\mathcal E$ defined by $\u^t_{2t} = \textbf 0$, which can be written as $C_{2t-1} \cdots C_2 \u_2^1 = {\bf w}^t$ by using \eqref{ut wt}.
Putting all this together we obtain
\begin{align}
  \label{eq:prob u=0}
  \pr\{\u^t_{2t} =\zero\}
  \leq 2\, 2^{-2N} +
  \pr\{ C_{2t-1} \cdots C_2 \u = {\bf w}^t \}\ ,
\end{align}
where the random variable $\u \in \Z2^{2N}$ is uniformly distributed and independent of ${\bf w}^t$ and $C_i$ for all $i\in \{2t-1, \ldots, 2\}$.
This has the advantage that now we can invoke Lemma~\ref{le:CCCCu=0}.
Lets start by rewriting
\begin{equation}
  \pr\{ C_{2t-1} \cdots C_2 \u =\mathbf w^t \}
  =
  \mathop{\mathbb E}_{C_i, {\bf w}^t} 
  \mathop{\mathbb E}_{\, \u}
  \delta \!\left( C_{2t-1} \cdots C_2 \u  - {\bf w}^t, \textbf{0} \right)\ , \nonumber  
\end{equation}
and consider the average $\mathop{\mathbb E}_\u \delta\!\left( C_{2t-1} \cdots C_2 \u - {\bf w}^t, \zero \right)$ for a fixed value of the variables ${\bf w}^t$ and $C_i$.
If the vector ${\bf w}^t$ is not in the range of the matrix  $(C_{2t-1} \cdots C_2)$ then the average is zero.
If the vector ${\bf w}^t$ is in the range of the matrix  $(C_{2t-1} \cdots C_2)$ then there is a vector $\tilde {\bf w}$ such that ${\bf w}^t = (C_{2t-1} \cdots C_2) \tilde {\bf w}$.
Then we can write the average as
\begin{align}
  \mathop{\mathbb E}_{\, \u}
  \delta\!\left( C_{2t-1} \cdots C_2 \u 
  -{\bf w}^t , \textbf{0} \right)
  &=
  \mathop{\mathbb E}_{\, \u}
  \delta\!\left( C_{2t-1} \cdots C_2 (\u +\tilde {\bf w}), \zero \right) \nonumber
  \\ &=
  \mathop{\mathbb E}_{\, \u}
  \delta\!\left( C_{2t-1} \cdots C_2 \u , \zero \right)\ , \nonumber
\end{align}
where the last equality follows from the fact that the random variable $\u +\tilde {\bf w}$ is uniform and independent of $C_i$, likewise $\u$.
Combining together the two cases for ${\bf w}^t$ we can write
\begin{align}
  \nonumber
  \pr\{ C_{2t-1} \cdots C_2 \u ={\bf w}^t \}
  &\leq 
  \mathop{\mathbb E}_{C_i} 
  \mathop{\mathbb E}_{\, \u}
  \delta\!\left( C_{2t-1} \cdots C_2 \u 
  ,\zero \right)
  \\ \nonumber &= 
  \pr\{ C_{2t-1} \cdots C_2 \u =\zero \}
  \\ \label{eq:cccu=w} &\leq 
  8 (2t-2) 2^{-N}\ ,
\end{align}
where the last step follows from Lemma~\ref{le:CCCCu=0}.
Substituting this back into \eqref{eq:prob u=0} we obtain
\begin{align}
  \label{eq:prob u=0 3}
  \pr\{\u^t_{2t} =\zero\}
  \leq 
  2\, 2^{-2N} +16\, (t-1) 2^{-N}
  \leq 
  16\, t\, 2^{-N}\ .
\end{align}

If we repeat all the steps of this proof since \eqref{eq:recursive 3} substituting $\u^t_{2t}$ for $\u^t_{2t-1}$, then we arrive at  
\begin{align*}
  \pr\{\u^t_{2t-1} =\zero\}
  \leq 2\, 2^{-2N} +
  \pr\{ A_{2t-1} C_{2t-2} \cdots C_2 \u = {\bf w}^t \}\ ,
\end{align*}
instead of \eqref{eq:prob u=0}.
But Lemma~\ref{le:CCCCu=0} also applies in this case, giving the bound 
\begin{align*}
  \pr\{ A_{2t-1} C_{2t-2}\cdots C_2 \u =\zero \}
  \leq 
  8 (2t-2) 2^{-N}\ ,
\end{align*}
which implies 
\begin{align*}
%  \label{eq:prob u=0 3}
  \pr\{\u^t_{2t-1} =\zero\}
  \leq 
  16\, t\, 2^{-N}\ .
\end{align*}
Also, since the premises of this lemma are invariant under translations in the chain $\mathbb Z_L$, then the conclusions hold for all $x \in \mathbb Z_L$.
\end{proof}

In order to prove the next theorem it is important to note the following remark. 
The bound \eqref{eq:prob u=0 3} requires that either $\u^0_0 \neq \zero$ or $\u^1_0 \neq \zero$, but does not require $\u^0_x \neq \zero$ for $x>1$.

\begin{Lemma}\label{lemma:allnonzero}
  After the scrambling time $t\in [t_{\rm scr}, 2 t_{\rm scr}]$, with $ t $ integer or half-integer, the evolved vector $\u^t = S(t) \u^0$ is non-zero at each lattice site with probability
\begin{align}
  \label{eq:allnonzero}
  \pr\big\{\u^t_x \neq \zero ,\forall\, x \in \mathbb Z _L\big\}
  \geq 
  1- 16\, t\, L\, 2^{-N} ,
\end{align}
for any initial non-zero vector $\u^0 \in \V_{\rm chain}$.
\end{Lemma}

\begin{proof}
Let $\mathcal F(\u^0) \subseteq \mathbb Z_L \times \mathbb N$ be the set of spacetime points consisting of the causal future of the sites $x' \in \mathbb Z_L$ where the initial vector $\u^0$ has support ($\u^0_{x'} \neq \zero$).
For example, if the initial vector is supported in the origin of the chain $\u^0\in \mathcal V_0$ then the causal future is given by the light cone \eqref{eq:lightcone}.

The main objective in this proof is to bound the probability of $\u_x^t \neq \zero$ for any fixed site $x\in \mathbb Z_L$ and time $t\in [t_{\rm scr}, 2 t_{\rm scr}]$.
For the sake of simplicity, let us start by considering the case of $x$ odd and $t$ integer. In this case, the left-most spacetime points in the causal past of $(x,t)$ that are also contained in $\mathcal F(\u^0)$ are
\begin{equation}\label{eq:first seq}
  (x-1,t-1/2), \ldots, (x-n, t-n/2), \ldots, (x_{\rm e}, t_{\rm e})\ .
\end{equation}
We have that either $t_{\rm e} =0$ or $t_{\rm e} > 0$.
%, and in this second case we must have $(x_{\rm e}-1, t_{\rm e}-1) \notin \mathcal F(\u^0)$.
In the first case ($t_{\rm e} =0$) we have that $\u^0$ has support on $x_{\rm e}$ or $x_{\rm e}+1$. 
And we can prove
\begin{equation}\label{eq:basic bound}
  \pr\{ \u_x^t = \zero\} \leq 16\, t\, 2^{-N}\ ,  
\end{equation}
by applying the same procedure as in Lemma~\ref{lem:nonzero pure}. 
Note that the possibility that $\u^0_{x'} =\zero$ for $x'> x_{\rm e}+1$ does not affect the argument (see last paragraph in the proof of Lemma~\ref{lem:nonzero pure}).

In the second case ($t_{\rm e} > 0$), the sequence \eqref{eq:first seq} can be continued by including the following points from $\mathcal F(\u^0)$,
\begin{equation}\label{eq:second seq}
  (x_{\rm e}, t_{\rm e}-1/2),\ldots, (x_{\rm e}+n, t_{\rm e}-1/2-n/2),\ldots, (x_0-1,1/2),
  \left\{\begin{array}{ll}
    (x_0-1,0)
    \\
    (x_0,0)
  \end{array}\right.\ ,  
\end{equation}
where the last element is chosen so that it belongs to $\mathcal F(\u^0)$.
If $\u^0$ has support on both $(x_0-1,0)$ and $(x_0,0)$ then the choice is arbitrary.
Here, for the sake of concreteness, we assume that $\u_{x_0}^0 \neq \zero$ and take $(x_0,0)$ as the last point of the sequence.
The subindex e stands for ``elbow", because it labels the point where the sequence \eqref{eq:first seq} changes direction to \eqref{eq:second seq} (see Figure~\ref{fig:elbow}).

\begin{figure}
  \includegraphics[width = 14cm]{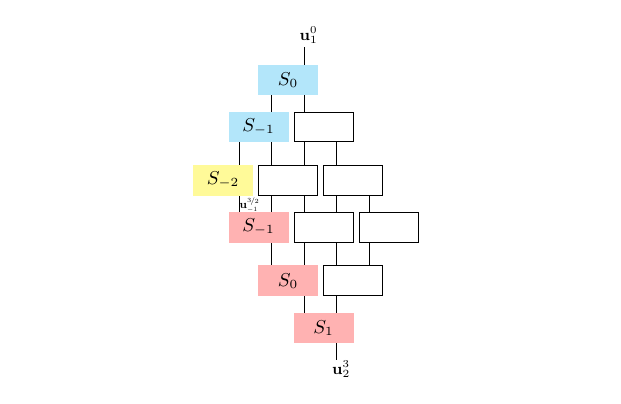}
  \vspace{-1cm}
  \caption{This figure represents the evolution of an initially local operator $\u^0 \in \V_1$ at site $x=1$. 
  The figure only displays gates $S_x$ that are in the intersection of the causal future of the initial location $x=1$ and the causal past of the chosen point $\u_2^3$. 
  The probability of $\u_2^3 =\zero$ is bounded by analysing the sequence of coloured gates, which has an ``elbow" at location $(x_{\rm e}, t_{\rm e}) = (-1,3/2)$.
  The analysis of blue gates uses Lemma~\ref{lem:local operator}, and that of red gates uses Lemma~\ref{lem:nonzero pure}.
  The key feature of the bound is that the yellow gate $S_{-2}$ only appears once.}
  \label{fig:elbow}
\end{figure}

Now we can write our chosen vector $\u_x^t$ as
\begin{align}
  \nonumber
  & \u_{x_{\rm e}}^{t_{\rm e}-1/2} 
  = 
  B_{x_{\rm e}} \cdots B_{x_0-2} B_{x_0-1} 
  \u_{x_0}^0\ , 
  \\ \nonumber
  & \u_{x_{\rm e}}^{t_{\rm e}} 
  = 
  D_{x_{\rm e}-1} 
  \u_{x_{\rm e}}^{t_{\rm e}-1/2}\ ,
  \\ \nonumber
  & \u_x^t 
  = 
  C_{x-1} \cdots C_{x_{\rm e}+1} C_{x_{\rm e}}
  \u_{x_{\rm e}}^{t_{\rm e}} + \w \ , 
\end{align}
where the random vector $\w$ is correlated with $B_{x_{\rm e}}, \ldots, B_{x_0-1}$ and $C_{x-1}, \ldots, C_{x_{\rm e}}$ but not with $D_{x_{\rm e}-1}$.
Vector $\w$ is analogous to $\w^t$, defined in \eqref{eq:random wt}.
Note also that the random matrices $B_{x_{\rm e}}, \ldots, B_{x_0-1}$ are not independent from $C_{x-1}, \ldots, C_{x_{\rm e}}$, but that $D_{x_{\rm e}-1}$ is independent from all the rest.
(Figure~\ref{fig:elbow} contains an example where the gates associated to $B_{x_{\rm e}}, \ldots, B_{x_0-1}$ are in blue, those of $C_{x-1}, \ldots, C_{x_{\rm e}}$ in red, and that of $D_{x_{\rm e}-1}$ in yellow.)

Now we can start constructing our bound as
\begin{align}
  \nonumber
  \pr\{\u^t_{x} =\zero\}
  &=
  \pr\{\u^t_{x} =\zero \mbox{ and } \u_{x_{\rm e}}^{t_{\rm e}-1/2}=\zero\}
  + \pr\{\u^t_{x} =\zero \mbox{ and } \u_{x_{\rm e}}^{t_{\rm e}-1/2}\neq \zero\} 
  \\ \label{eq:s1} &\leq
  \pr\{\u_{x_{\rm e}}^{t_{\rm e}-1/2}=\zero\}
  + \pr\{\u^t_{x} =\zero \mbox{ and } \u_{x_{\rm e}}^{t_{\rm e}-1/2}\neq \zero\} 
\end{align}
The first term can be bounded with the recursive equation \eqref{eq:rec rel} as
\begin{equation}
  \pr\{\u_{x_{\rm e}}^{t_{\rm e}-1/2}=\zero\}
  \leq 2 (t_{\rm e}-1/2) 2^{-2N}\ . \nonumber
\end{equation}
The second term can be bounded by using the independence of $D_{x_{\rm e}-1}$, the fact that $\u_{x_{\rm e}}^{t_{\rm e}-1/2}$ is not zero, and proceeding in a manner similar to \eqref{eq:E18} and \eqref{eq:prob u=0}.
Therefore, we again introduce the uniformly distributed random vector $\u \in\Z2^{2N}$, which is independent of all gates $S_{x_{\rm e}}, S_{x_{\rm e}+1}, \ldots, S_{x-1}$.
The statistical distance between $\u_{x_{\rm e}}^{t_{\rm e}}$ and $\u$, conditioned on $\u_{x_{\rm e}}^{t_{\rm e}-1/2}\neq \zero$, is 
\begin{align}
  \nonumber
  {\rm d}(\u_{x_{\rm e}}^{t_{\rm e}},\u) 
  =&\ 
  \sum_{\u_{x_{\rm e}}^{t_{\rm e}}}
  \left| P(\u_{x_{\rm e}}^{t_{\rm e}} |
  \u_{x_{\rm e}}^{t_{\rm e}-1/2}\neq \zero)
  -2^{-2N}\right|
  \\ \nonumber = 
  &\left(2^{2N}-1\right) 
  \left|\frac {2^{2N}}{2^{4N}-1} -2^{-2N}\right| 
  + \left|\frac {2^{2N}-1}{2^{4N}-1} -2^{-2N}\right| 
  \\ \label{eq:sdist} \leq &\ 
  2^{1-4N}\ ,
\end{align}
where we have used Lemma \ref{lem:uniformity}.
Proceeding in a manner similar to \eqref{eq:E18} and \eqref{eq:prob u=0} we obtain
\begin{align}
  \nonumber
  &\pr\{\u^t_{x} =\zero \mbox{ and } 
  \u_{x_{\rm e}}^{t_{\rm e}-1/2}\neq \zero\}
  \\ \nonumber =\ &
  \pr\{ C_{x-1} \cdots C_{x_{\rm e}} \u_{x_{\rm e}}^{t_{\rm e}} = \w \mbox{ and } 
  \u_{x_{\rm e}}^{t_{\rm e}-1/2}\neq \zero\}
  \\ \nonumber \leq\ &
  {\rm d}(\u_{x_{\rm e}}^{t_{\rm e}},\u) 
  +
  \pr\{ C_{x-1} \cdots C_{x_{\rm e}}\u =\w \mbox{ and } 
  \u_{x_{\rm e}}^{t_{\rm e}-1/2}\neq \zero\}
  \\ \nonumber \leq\ &
  2^{1-4N} +
  \pr\{ C_{x-1} \cdots C_{x_{\rm e}}\u =\w \}\ .
\end{align}
The bound \eqref{eq:cccu=w} exploits the fact that that $\w$ and $\u$ are independent, giving
\begin{equation} 
  \pr\{ C_{x-1} \cdots C_{x_{\rm e}}\u =\w \}
  \leq 8\, (x-x_{\rm e})\, 2^{-N}
  = 16\, (t-t_{\rm e})\, 2^{-N}\ . \nonumber
\end{equation}
Putting all things together we obtain 
\begin{align}\label{eq:s2}
  \pr\{\u^t_{x} =\zero\} 
  \leq
  2(t_{\rm e}-1/2) 2^{-2N} + 2^{1-4N} +
  16\, (t-t_{\rm e})\, 2^{-N} 
  \leq 16\, t\, 2^{-N} \ .
\end{align}
Finally, we use the union bound to conclude that 
\begin{align}
\nonumber
  \pr\{\exists\, x\in \mathbb Z _L : \u^t_x =\zero\big\}
  \leq 
  8\, t\, L\, 2^{-N}\ ,
\end{align}
which is equivalent to the statement \eqref{eq:allnonzero}.
\end{proof}

\subsection{Twirling technique and Pauli invariance}
\label{app:Twirling_PauliInvariance}

Figure~\ref{fig:TwirlingFigure} illustrates the fact that, at integer time $t$, the probability distribution of $W(t)$ is invariant under the transformation
\begin{equation}\label{eq:W_int}
  W(t) \ \mapsto\ 
  \left( \mbox{$\bigotimes_x$} V'_x \right)^\dagger W(t)
  \left( \mbox{$\bigotimes_x$} V'_x \right) 
   \ ,
\end{equation} 
for any string of local Clifford unitaries $V'_1, \ldots, V'_L \in \C_{N}$.
This property translates to distribution \eqref{Prob:final} as 
\begin{equation}\label{eq:Psym}
  P_t \left( [ \mbox{$\bigoplus_x$} S_x^{-1} ]\u' \big| [\mbox{$\bigoplus_x$} S_x ] \u \right) 
  = P_t(\u'|\u)
\end{equation}
for any list of local symplectic matrices $S_1, \ldots, S_L$.
In order to prove Theorem \ref{thm:approx2Design} we exploit the fact that, at half-integer time $t$, the evolution operator displays a higher degree of symmetry. 
The probability distribution of $W(t)$ is invariant under the transformation
\begin{equation}\label{eq:W_h_int}
  W(t) \ \mapsto\ 
  \left( \mbox{$\bigotimes_x$} V_x \right) W(t)
  \left( \mbox{$\bigotimes_x$} V'_x \right)
   \ ,
\end{equation} 
for any string of local Clifford unitaries $V_1, V'_1, \ldots, V_L, V'_L \in \C_N$.
This translates onto $P_t(\u'|\u)$ in a way analogous to \eqref{eq:Psym}.

\begin{figure}\label{fig:TwirlingFigure}
    \hspace{-3mm}
    \includegraphics[width=152mm]{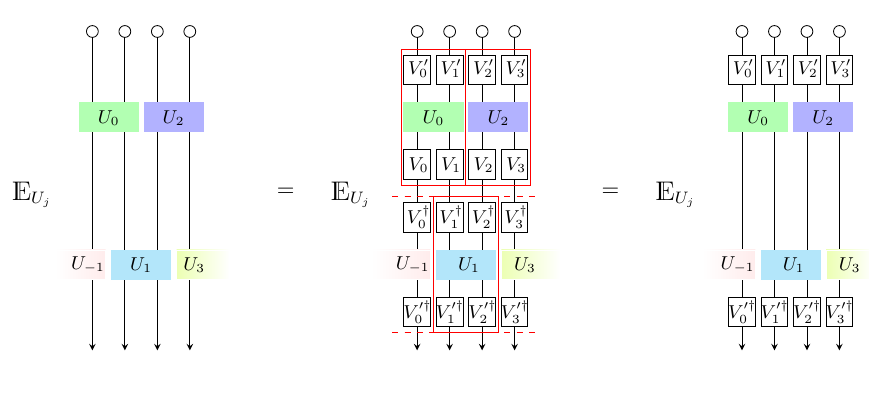}
    \vspace{-14mm}
    \caption{{\bf Twirling technique.} This figure illustrates the fact that the probability distribution of the evolution operator is invariant under local transformations. 
    On the left we have a section of the circuit of Figure~\ref{fig:CircuitFigure}. 
    On the middle we use the fact that, for any pair of local Clifford unitaries $V_x, V_{x+1}$, the random two-site Clifford unitary $(V_x \otimes V_{x+1}) U_x (V'_x \otimes V'_{x+1})$ has the same probability distribution than $U_x$.
    On the right we see that all local unitaries get cancelled except for those of the initial and final times. Note that if the time $ t $ is integer or half-integer the invariance property of $ W(t) $ is different according to equations \eqref{eq:W_int} and \eqref{eq:W_h_int}.}  
\end{figure}

In this section, we will present what is referred to as the twirling technique in the work \cite{Sunderhauf_2018} and discuss how it applies to the random Clifford circuit model we consider. 

We recollect that the definition of the evolution operator after an {\em integer} time $t$ is: 
\begin{align}\nonumber
  W(t) & \equiv \big[ (U_1 \otimes U_3 \otimes  \cdots \otimes U_{L-1}) (U_0 \otimes U_2 \otimes \cdots \otimes U_{L-2}) \big]^{t} 
  \\ \nonumber & =
  (U_{\text{odd}}U_{\text{even}})^{t} = (U_{\text{chain}})^{t}\ ,
\end{align}
and after a {\em half-integer} time $t$ is: 
\begin{align}\nonumber
  W(t) \equiv U_{\text{even}}\, (U_{\text{chain}})^{t-1/2} 
 \ .
\end{align}

\begin{Lemma}
\label{lem:TwirlingTechnique_Appendix}
Consider a set of $2L$ single-site Clifford unitaries $V_x, V'_x \in \C_{N}$, these unitaries are fixed.
At integer time $t$, the random evolution operator $W(t)$, as defined above, has the same probability distribution as
\begin{equation}
\left( \bigotimes_{x=0}^{L-1} V_x^ {\prime \dagger} \right) W(t) \left( \bigotimes_{x=0}^{L-1} V'_x \right) \ .
\end{equation}
Similarly, at half-integer time $t$ the evolution operator $W(t)$ has the same probability distribution as 
\begin{equation}
\left( \bigotimes_{x=0}^{L-1} V_x \right) W(t) \left( \bigotimes_{x=0}^{L-1} V'_x \right) \ . 
\end{equation}
\end{Lemma}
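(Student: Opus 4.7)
The plan rests on a single group-theoretic fact: if $U_x$ is uniformly sampled from $\C_{2N}$ and $A, B \in \C_{2N}$ are any fixed Cliffords, then $A U_x B$ has the same distribution as $U_x$. Specialising to $A = A_x \otimes A_{x+1}$ and $B = B_x \otimes B_{x+1}$ with $A_\bullet, B_\bullet \in \C_N$, we deduce that replacing each two-site gate by $(A_x \otimes A_{x+1}) U_x (B_x \otimes B_{x+1})$ preserves its distribution. Since the $U_x$'s across the chain are independent, such simultaneous per-gate replacements preserve the full joint distribution of the family $\{U_x\}_{x \in \mathbb{Z}_L}$, and hence the distribution of any functional of it, including $W(t)$.

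Given this, I would write $W(t)$ as an alternating product of the half-steps $U_{\mathrm{even}}$ and $U_{\mathrm{odd}}$, and insert resolutions of the identity $R^{(k)\dagger} R^{(k)} = \unity$ between consecutive half-steps, where each $R^{(k)} = \bigotimes_y R^{(k)}_y$ is an arbitrary fixed tensor product of single-site Cliffords. Regrouping the factors, each half-step becomes a tensor product of sandwiched two-site gates $(C_x \otimes C_{x+1}) U_x (D_x \otimes D_{x+1})$ with $C_\bullet, D_\bullet \in \C_N$, exactly as in the middle panel of Figure~\ref{fig:TwirlingFigure}; internal half-steps carry free $R^{(k)}$'s on both sides, while at the two outer half-steps one side carries the prescribed boundary Clifford and the other carries an $R^{(k)}$. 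The parity of the total number of half-steps is what distinguishes the two cases in the lemma: at integer $t$ the outer half-steps have opposite parities (one $U_{\mathrm{even}}$, one $U_{\mathrm{odd}}$), so the natural form of the twirl is a conjugation with $V_x^{\prime \dagger}$ on one boundary and $V'_x$ on the other; at half-integer $t$ both outer half-steps are $U_{\mathrm{even}}$, so the two boundary tensor products $V_x$ and $V'_x$ can be chosen independently.

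Applying the group-invariance to each sandwiched two-site gate shows that every rewritten half-step has the same distribution as the corresponding original half-step, and independence across half-steps is preserved because each one involves a disjoint set of underlying $U_x$'s. Therefore the sandwiched product, which is literally equal to the twirled operator in either case, has the same joint distribution as the original $W(t)$. The only bookkeeping to verify is that $V$, $V'$, and the $R^{(k)}$, being tensor products over all $L$ sites, regroup consistently with both the even and the odd pairings under periodic boundary conditions; this is automatic because single-site factors commute with everything outside their site, so the grouping is merely a choice of parentheses. No substantial obstacle arises — the heart of the argument is just the clean unfolding of the twirling pattern in the figure combined with the Haar invariance of the Clifford ensemble, and the proof effectively becomes the figure made formal.
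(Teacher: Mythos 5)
Your proposal follows essentially the same route as the paper: exploit the two-sided Haar invariance of each gate $U_x \in \C_{2N}$ under fixed Clifford products, so that the per-gate replacement $U_x \mapsto (V_x\otimes V_{x+1})U_x(V'_x\otimes V'_{x+1})$ (with a parity-dependent convention) preserves the joint law of $\{U_x\}$, and then read off that the replaced circuit evaluates to the twirled $W(t)$. Your first paragraph contains exactly the right closing argument: the joint distribution of $\{U_x\}$ is preserved, hence so is the distribution of any functional, including $W(t)$.

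Two points in your second and third paragraphs, however, are misleading and would not survive careful scrutiny, even though the conclusion is correct. First, the inserted factors $R^{(k)}$ are \emph{not} free or arbitrary: because the circuit is time-periodic, the same $U_x$ appears in every period, so a well-defined re-parametrisation $U_x \mapsto U'_x$ requires that $R^{(k)}$ depend only on the parity of $k$. It is precisely this constraint, combined with the parity of the total number of half-steps ($2t$ even for integer $t$, odd for half-integer $t$), that forces the two outer factors to be mutually inverse in the first case and leaves them independent in the second. If you allow the internal $R^{(k)}$'s to be genuinely free, the regrouped circuit is no longer an instance of the time-periodic ensemble, and the argument breaks. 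Second, the claim that ``independence across half-steps is preserved because each one involves a disjoint set of underlying $U_x$'s'' is false as stated: the even half-step at time $\tau$ and the even half-step at time $\tau'$ use literally the same gates $U_0, U_2, \dots$, so they are perfectly correlated, not independent. (The disjointness holds only between the even class and the odd class.) You do not actually need that independence claim — the per-gate replacement argument from your first paragraph already suffices — so the fix is simply to drop the flawed half-step-level justification and close with the joint-distribution-of-gates argument.
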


\begin{proof}
First, we note that any uniformly distributed two-site Clifford unitary $U_x \in \C_{2N}$ has the same probability distribution as  the unitary $(V_x \otimes V_{x+1} ) U_x ( V'_x \otimes V'_{x+1} )$ for any arbitrary choice of $V_x, V_{x+1}, V'_x, V'_{x+1} \in \C_N $; this is denoted as single-site Haar invariance. 
Hence, we introduce the primed notation for the random two-site Clifford unitary $U_x$
\begin{align}
U'_x & \equiv  (V_x \otimes V_{x+1} ) U_x (V'_x \otimes V'_{x+1})   \text{ for even } x \in \mathbb{Z}_L\ , \\ 
U'_x & \equiv (V'_x \otimes V'_{x+1})^{-1}  U_x (V_x \otimes V_{x+1} )^{-1}  \text{for odd } x \in \mathbb{Z}_L \ , 
\end{align}
where $V_x, V_{x+1}, V'_x, V'_{x+1} \in \C_N $ are any arbitrary choice of single-site Clifford unitary. 
Consequently, the primed version of the global dynamics for integer $t$ becomes
\begin{equation}
W'(t)  = \left(\bigotimes_{x=0}^{L-1} V_x ^{\prime \dagger} \right)W(t) \left(\bigotimes_{x=0}^{L-1} V'_x \right)  \ , 
\end{equation}
and for half-integer $t$
\begin{equation}
W'(t)  = \left(\bigotimes_{x=0}^{L-1} V_x \right) W(t ) \left(\bigotimes_{x=0}^{L-1} V'_x \right) \ . 
\end{equation}
The single-site Haar invariance of the probability distributions of the primed and not-primed evolution operators are identical, this proves the result.
\end{proof}

Next, we will define Pauli invariance and state when it applies to our model.
\begin{Definition}
\label{Def:PauliInvariance_Appendix}
An $n$-qubit random unitary $U\in$ SU($2^n$) with probability distribution $P(U)$ is Pauli invariant if $P(U\sigma) = P(U)$ for all $\sigma \in \mathcal P_n$ and $U\in$ SU($2^n$).
\end{Definition}

\begin{Lemma}
\label{lem:PauliInvariantCircuit_Appendix}
At half-integer time $t$, the random evolution operator $W(t)$ is Pauli invariant.
\end{Lemma}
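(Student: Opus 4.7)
The plan is to derive Pauli invariance as a direct corollary of the half-integer twirling statement in Lemma \ref{lem:TwirlingTechnique_Appendix}. The key observation is that the twirling lemma at half-integer time is strong enough to allow independent left and right multiplication by arbitrary single-site Clifford unitaries, and single-site Paulis are single-site Cliffords.

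Concretely, given any $NL$-qubit Pauli operator $\sigma \in \mathcal P_{NL}$, I would first factor it (up to a global phase, which is irrelevant for the probability distribution on $\mathrm{SU}(2^{NL})/\mathrm U(1)$) into single-site pieces
\begin{equation}
\sigma \ =\ \lambda \bigotimes_{x=0}^{L-1} \sigma_x,
\qquad \sigma_x \in \mathcal P_N \subseteq \C_N,
\end{equation}
using the tensor-product structure of the Pauli basis \eqref{eq:Pauli element} and the fact that each site carries $N$ qubits. Each factor $\sigma_x$ is an $N$-qubit Clifford unitary, so the tuple $V'_x := \sigma_x$ is an admissible choice in Lemma \ref{lem:TwirlingTechnique_Appendix}.

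Next I would apply that lemma at half-integer $t$ with the choice $V_x = \unity$ (also in $\C_N$) and $V'_x = \sigma_x$. The lemma then states that $W(t)$ has the same distribution as
\begin{equation}
\left(\bigotimes_{x=0}^{L-1} \unity\right) W(t) \left(\bigotimes_{x=0}^{L-1} \sigma_x\right) \ =\ W(t) \bigotimes_{x=0}^{L-1} \sigma_x \ =\ \lambda^{-1}\, W(t)\, \sigma,
\end{equation}
and absorbing the irrelevant global phase $\lambda^{-1}$ one concludes $P(W(t)\sigma) = P(W(t))$ for all $\sigma \in \mathcal P_{NL}$, i.e.\ Pauli invariance in the sense of Definition \ref{Def:PauliInvariance_Appendix}.

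I do not anticipate a substantive obstacle here; the statement is really a bookkeeping consequence of the enhanced twirling symmetry available at half-integer times. The only subtlety worth noting is why the \emph{integer}-time version of the twirling lemma would \emph{not} suffice: there the two families of local Cliffords on the two sides of $W(t)$ are constrained to be related ($V'$ on both sides), so one cannot simply set the left factor to $\unity$ and the right factor to an arbitrary Pauli. It is precisely the independence of the left and right twirls at half-integer time that promotes local-Clifford twirl invariance into full Pauli invariance of the distribution of $W(t)$.
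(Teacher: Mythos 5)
Your proof is correct and takes essentially the same route as the paper: both invoke the half-integer twirling symmetry of Lemma \ref{lem:TwirlingTechnique_Appendix} together with the inclusion $\mathcal P_N \subset \C_N$ at each site. Your version is if anything slightly more careful than the paper's, since you explicitly factor the global Pauli into single-site Cliffords, pick $V_x = \unity$ and $V'_x = \sigma_x$ so that the resulting operator is literally $W(t)\sigma$ as in Definition \ref{Def:PauliInvariance_Appendix}, and note why the integer-time (conjugation-only) twirl would not suffice.
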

\begin{proof}
The proof of this lemma follows from lemma \ref{lem:TwirlingTechnique_Appendix}.
When t is half-integer, $W(t)$ and $( \mbox{$\bigotimes_{x=0}^{L-1}$} V_x ) W(t) ( \mbox{$\bigotimes_{x=0}^{L-1}$} V'_x )$ have identical probability distributions, where $V_x,V'_x \in \C_n$.
Since $\mathcal P_n \subset \C_n$, we can choose $( \mbox{$\bigotimes_{x=0}^{L-1}$} V_x )$ to be any element of the Pauli group. 
Hence, $W(t)$ is Pauli invariant.
\end{proof}

\subsection{Half-integer times}
\label{app:PauliMixing_SemiInteger}

\begin{Lemma}
\label{lem:tSemiIntegerTwirling}
At half-integer $t\geq t_{\text{scr}}$ the probability distribution of the evolved vector $\u^{t} = S(t) \u^0$ conditioned on it being non-zero at every site is uniform:
\begin{equation}\label{E26}
\pr \{ \u^{t}=\v  | \u_x^{t} \neq \zero , \forall x \in \mathbb{Z}_L \} = \frac{1}{(2^{2N} - 1)^{L} } \ ,
\end{equation}
for all vectors $\v$ that are non-zero at every site $\v_x \neq \zero , \forall x \in \mathbb{Z}_L$. 
\end{Lemma}

\begin{proof}
The proof of this lemma follows from the twirling technique discussed in Section \ref{app:Twirling_PauliInvariance} lemma \ref{lem:TwirlingTechnique_Appendix}.
The probability distribution of the evolved vector $\u^{t} = S(t) \u^0 $ is identical to 
\begin{equation}
 \u^{t} =  \left( \mbox{$\bigoplus_{x=0}^{L-1}$} X_x \right) S(t) \left( \mbox{$\bigoplus_{x=0}^{L-1}$} Y_x \right) \u^0 , \nonumber
\end{equation}
where $X_x, Y_x \in \mathcal{S}_{N}$ are arbitrary single-site matrices. 
Hence, since the choice of each $X_x$ is arbitrary, each $X_x$ is independent and uniformly distributed over all single-site symplectic matrices. 
Therefore, imposing the condition that the evolved vector is non-zero on every site, then, since the twirling matrices $X_x$ are independent and uniform, the probability distribution of the evolved vector at each site is independent and uniformly distributed over all non-zero vectors. The application of lemma \ref{lem:uniformity} eventually provides the conditional probability \eqref{E26}.
\end{proof}

\begin{Theorem} \label{lem:tHalfDistance}
Let $\sigma_{\u'} = \lambda W(t) \sigma_\u W(t)^\dagger$ be the evolution of any initial Pauli operator $\sigma_\u \neq \unity$.
At any half-integer time $t$ larger than the scrambling time, in the interval $t \in [ t_{\rm scr} , 2  t_{\rm scr} ]$ the probability distribution~\eqref{Prob:final} for the evolved operator $\sigma_{\u'}$ is close to uniform, namely
\begin{equation}\label{res:ergo}
  \sum_{\u'} 
  \left| P_t(\u'|\u) - Q_t(\u') \right|
  \ \leq\ 33 \times t\, L\, 2^{-N}\ .
\end{equation}
\end{Theorem}

{\it Remark}. The following is an equivalent statement to Theorem \ref{lem:tHalfDistance} formulated in phase space, we then present a proof.

\vspace{6mm}
\noindent
{\bf Theorem} \ref{lem:tHalfDistance} (Alternative form). 
For any initial non-zero vector $\u^0 \in \mathcal V_{\rm chain}$, the probability distribution of the time evolved vector $ \u^t = S(t) \u^0 $, at any half-integer time in the interval $t\in [t_{\text{scr}}, 2t_{\text{scr}}]$, is approximately uniformly distributed over all non-zero vectors of the total system, and bounded by 
\begin{align}
  \nonumber
  \sum_{\v} \left| \pr\{ \u^t = \v \} -\frac{1}{2^{2NL} - 1}\right|
  \leq 
  32\, t L 2^{-N} + L 2^{-2N}
  \ . 
\end{align}

\vspace{6mm}

\begin{proof}
Below we make use of: 
\begin{align} 
 \textrm{prob}(A)&=\textrm{prob}(A \wedge B) + \textrm{prob}(A \wedge \bar{B}) \nonumber \\
 & =\textrm{prob}(A|B)  \textrm{prob}(B) + \textrm{prob}(A|\bar{B}) \textrm{prob}(\bar{B}) \nonumber
\end{align}
where $ A $ and $ B $ are events in a probability space.

Defining $q \equiv \pr \{\u^t_x \neq \zero , \forall x \in \mathbb{Z}_L\}$, 
 we rewrite $\pr\{\u^t=\v\}$ as follows  
\begin{align}
\nonumber
 \pr\{\u^t=\v\} & =  q \, \pr \{ \u^t =\v| \u^t_x \neq \zero , \forall x \in \mathbb{Z}_L  \}
 \\ \nonumber
  & + (1-q) ( \pr \{\u^t=\v | \exists y \in \mathbb{Z}_L : \u^t_y = \zero \}).  
\end{align}
Adding and subtracting $ q \frac{1}{2^{2NL} - 1} $ in the sum and then applying the triangular inequality, we find that:
\begin{align}
\nonumber
  \sum_{\v} \left| \pr\{\u^t=\v\} -\frac{1}{2^{2NL} - 1} \right|
  & \leq  
    \sum_{\v} \left|q\,  \pr \{ \u^t=\v | \u^t_x \neq \zero \, \forall x \} -\frac{1}{2^{2NL} - 1} \right| \\ \nonumber
  & \hspace{-13mm} + 
  (1-q)  \sum_{\v} \left|  \pr \{\u^t=\v | \exists y \in \mathbb{Z}_L : \u^t_y = \zero \} -\frac{1}{2^{2NL} - 1} \right| \ .
\end{align}
We can upper bound the first term with $q \leq 1$ and apply lemma \ref{lem:tSemiIntegerTwirling} to find that
\begin{equation}
\nonumber
q \sum_{\v} \left|  \pr \{ \u^t=\v | \u^t_x \neq \zero \, \forall x \} -\frac{1}{2^{2NL} - 1}\right|  \leq L 2^{-2N} \ . 
\end{equation}
To bound the second term we notice that the maximum value of the sum is 2, in fact:
\begin{align}\nonumber
& (1-q)  \sum_{\v} \left|  \pr \{\u^t=\v | \exists y \in \mathbb{Z}_L : \u^t_y = \zero \} -\frac{1}{2^{2NL} - 1} \right| \\ 
& \le (1-q)  \sum_{\v}  \left( \pr \{\u^t=\v | \exists y \in \mathbb{Z}_L : \u^t_y = \zero \} + \frac{1}{2^{2NL} - 1} \right) = 2(1-q) \nonumber 
\end{align}
and use the result of lemma \ref{lemma:allnonzero} to find that
\begin{equation}
(1-q) \leq 16 t L 2^{-N} \ . 
\end{equation}
This gives the stated result.
\end{proof}

\subsection{Integer times}
\label{app:PauliMixing_Integer}

In this section, we will consider only initial vectors which are supported (i.e.~non-zero) on a single site, $\u^0 \in \V_0 \subseteq \V_{\rm chain},$ and their time evolution at integer times only.
The validity of the following lemma isn't restricted to integer times or even quantum circuits.

\begin{Lemma}
\label{lemma:IntegerTwirl_4Paramaters}
Let $\u$ be a fixed non-zero element of $\Z2^{2N}$.
Let the probability distribution $P(\v)$ over $\v \in \Z2^{2N}$ have the property that $P(S\v) = P(\v)$ for any $S\in \mathcal S_N$ such that $S \u = \u$.
Then it must be of the form
\begin{equation}
  \label{eq:prob q}
  P(\v ) 
  = 
  \left\{ \begin{array}{ll}
    q_1 & \mbox{ if $\v= \zero$}
    \\
    q_2 & \mbox{ if $\v=\u$}
    \\
    q_3 & \mbox{ if $\langle\v,\u\rangle =0$ and $\v\neq \zero, \u$}
    \\    
    q_4 & \mbox{ if $\langle\v,\u\rangle =1$}
\end{array} \right.\ ,
\end{equation}
where the positive numbers $q_i$ are constrained by the normalization of $P(\v)$.
\end{Lemma}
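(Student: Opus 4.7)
The strategy is to identify the orbits of the stabilizer subgroup
\begin{equation}
  \mathrm{Stab}(\u) = \{S \in \S_N : S\u = \u\}
\end{equation}
acting on $\Z2^{2N}$, and observe that any distribution $P$ satisfying the invariance hypothesis is automatically constant on each orbit. The plan is to show that the four sets appearing in \eqref{eq:prob q} are precisely these orbits, which forces $P$ to take the stated four-parameter form.

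The easy containment comes first. Every linear $S$ fixes $\zero$ and, by definition of the stabilizer, fixes $\u$, so $\{\zero\}$ and $\{\u\}$ are orbits on their own. Furthermore, for any $S \in \mathrm{Stab}(\u)$,
\begin{equation}
  \langle S\v, \u \rangle = \langle S\v, S\u \rangle = \langle \v, \u \rangle,
\end{equation}
so every orbit lies inside one of the two level sets $\{\v : \langle \v, \u\rangle = 0\}$ or $\{\v : \langle \v, \u\rangle = 1\}$. Removing $\{\zero\}$ and $\{\u\}$ from the zero level set (both lie there, since $\langle \u, \u\rangle = 0$) shows that every orbit is contained in one of the four sets of \eqref{eq:prob q}.

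The substantive step is transitivity of $\mathrm{Stab}(\u)$ on each of the two non-singleton sets, which I will derive by explicit symplectic-basis matching (a hands-on form of Witt's extension theorem over $\Z2$). For the set $\{\v : \langle \v, \u\rangle = 1\}$, given $\v, \v'$ in this set, both $(\u, \v)$ and $(\u, \v')$ are symplectic pairs; each extends to a full symplectic basis of $\Z2^{2N}$, and the linear map matching the two bases is symplectic, fixes $\u$, and sends $\v$ to $\v'$. For the set $\{\v : \langle \v, \u\rangle = 0,\ \v \neq \zero, \u\}$, which is empty when $N=1$ (then $\u^\perp = \{\zero,\u\}$) and so needs no argument, I work at $N \geq 2$: given such $\v$, I first pick $\w$ with $\langle \u, \w\rangle = 1$ and $\langle \v, \w\rangle = 0$, which exists because non-degeneracy plus $\u \neq \v$ makes the two functionals $\langle \cdot, \u\rangle, \langle \cdot, \v\rangle$ linearly independent, so the joint evaluation is surjective onto $\Z2^2$; then I pick $\w'$ with $\langle \v, \w'\rangle = 1$ and $\langle \u, \w'\rangle = \langle \w, \w'\rangle = 0$, which exists because $\u, \v, \w$ are linearly independent (as $\w \notin \u^\perp$ while $\v \in \u^\perp$), so the corresponding joint evaluation is surjective onto $\Z2^3$. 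This yields two mutually orthogonal symplectic pairs $(\u, \w)$ and $(\v, \w')$ spanning a $4$-dimensional symplectic subspace, which extends to a full symplectic basis of $\Z2^{2N}$; repeating the construction for $\v'$ and matching the two bases gives the desired $S \in \mathrm{Stab}(\u)$ with $S\v = \v'$.

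The only delicate bookkeeping is confirming simultaneous solvability of the linear constraints defining $\w$ and $\w'$ and handling the $N=1$ boundary case, both of which reduce to non-degeneracy of the symplectic form. Once the four orbit identities are established, $\mathrm{Stab}(\u)$-invariance of $P$ immediately produces the four-parameter form \eqref{eq:prob q}.
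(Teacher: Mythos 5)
Your proof is correct and follows essentially the same route as the paper: identify the singleton orbits $\{\zero\}$ and $\{\u\}$, observe that $\langle\cdot,\u\rangle$ is preserved by the stabilizer, and establish transitivity on each remaining level set via symplectic-basis extension (Witt's theorem over $\Z2$). Your treatment is in fact slightly more careful than the paper's, which dispatches the level-zero case with ``similarly'' and does not explicitly flag the $N=1$ degeneracy that you correctly note.
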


\begin{proof}
We initially consider that $\u = (1, 0 , \ldots , 0)^T$, and the subgroup of $\S_N$ that leaves $\u$ unchanged.
If $\v = \zero \text{ or } \u$, then the action of this subgroup has no effect, and hence we require a parameter for each in the distribution, $q_1$ and $q_2$ respectively. 
This is not the case for all other choices of $\v$, since the action of the subgroup will transform $\v$ into some other vector in $\Z2^{2N}$. 
This transformation is constrained by the symplectic form:
\begin{equation}
    \langle \v , \u \rangle =  \langle S \v , S \u \rangle = \langle S \v , \u \rangle \ ,
\end{equation}
and hence the subgroup is composed of two subgroups, which transform $\v$ into another vector in $\Z2^{2N}$ that has the same value for the symplectic form.
Furthermore, the two subgroups are such they can map any vector to any other vector with the same value for the symplectic form. 

This can be seen by considering the case where $\v = (0,1, \ldots , 0)^T$, so $\SympForm{\u}{\v}=1$.
The subgroup that keeps $\u= (1,0, ... , 0)^T$ unchanged consists of all the elements of $\S_N$ with $\u$ as the first column of the matrix.
Hence, by lemma \ref{lemma:algorithm}, we can select the second column of the matrix to be any vector which has symplectic form of 1 with the first column, which is $\u$.
Thus, we can map $\v$ to any other vector with symplectic form one with $\u$, which is also unchanged.
Then, by noting that the product of symplectic matrices is a symplectic matrix, the subgroup can map any vector with symplectic form of one with $\u$ to any other. 
Similarly, this argument applies to the other case where the symplectic form has a value of zero.

Then since $P(S\v) = P(\v)$, all vectors that give the same value for $\langle \v, \u \rangle$ have the same probability.
Thus, we get the probability distribution in \eqref{eq:prob q}.

Finally, we note that since via a symplectic transformation $\u$ can be mapped to any other vector in $\Z2^{2N}$, and that the product of two symplectic matrices is symplectic, this result applies for any $\u \in \Z2^{2N}$.
\end{proof}

\begin{Lemma}
\label{Lemma:Integer_SingleSitePhase}
For an initial vector $ \u^0 \in \V_0$ supported on location $ x=0 $, the probability that the value of the symplectic form between the evolved vector $\u^t = S_{ \text{chain} } ^t \u^0$, with integer $ t $, and the initial vector, $\SympForm{\u^t}{\u^0} = \SympForm{\u_0^t}{\u_0^0}$ is equal to $ s $, has an $ s $-independent upper bound given by:
% \begin{equation}
% \pr \{ \SympForm{\u_0^t}{\u_0^0} = s \} \leq  \left\{
% \begin{array}{cc}
%    8 t 2^{-N} + \frac{1}{2} & \text{ if } s = 0  \\
%    8 t 2^{-N} + \frac{1}{2} &  \text{ if } s = 1
%   \end{array}  \right. \ .
% \end{equation}
\begin{equation}
\pr \{ \SympForm{\u_0^t}{\u_0^0} = s \} \leq \frac{1}{2} + 8\, t\, 2^{-N} .
\end{equation}
Furthermore, this result is independent from the location of the support of $ \u^0 $ provided that it is a single site.
\end{Lemma}

\begin{proof}
To prove this lemma we proceed similarly as in Lemma~\ref{lemma:allnonzero}. 
That is, we consider a sequence of gates in the causal past of $\u_0^t$ with an elbow shape (see example in Figure~\ref{fig:elbow}).
More concretely, we write $\u_0^t$ as
\begin{align}
  \u_{1-t}^{t/2} &= 
  B_{1-t}\cdots B_{-2}B_{-1}A_0 \u_0^0\ ,
  \\
  \u_{1-t}^{t/2+1/2} &= D_{-t} \u_{1-t}^{t/2}\ ,
\\
  \u_0^t &= 
  C_{-1} \cdots C_{2-t} C_{1-t} \u_{1-t}^{t/2+1/2} + \w \ ,
\end{align}
where, crucially, the random vector $\w$ is independent of the random matrix $D_{-t}$.
This vector $\w$ is defined in a way similar to \eqref{eq:random wt}. 

Next, we follow a sequence of steps similar to those from \eqref{eq:s1} to \eqref{eq:s2}.
First we write
\begin{align}
  \nonumber
  & \pr\!\left\{ \langle\u_0^0, \u_0^t\rangle =s 
  \right\}
  \\ \nonumber =\ &
  \pr\!\left\{ \langle\u_0^0, \u_0^t\rangle =s     
  \mbox{ and } \u_{1-t}^{t/2}= \zero\right\}
  +
  \pr\!\left\{ \langle\u_0^0, \u_0^t\rangle =s 
  \mbox{ and } \u_{1-t}^{t/2}\neq \zero\right\}
  \\ \label{eq:E50} \leq\ &
  \pr\!\left\{ \u_{1-t}^{t/2}= \zero\right\}
  +
  \pr\!\left\{ \langle\u_0^0, \u_0^t\rangle =s 
  \mbox{ and } \u_{1-t}^{t/2}\neq \zero\right\}\ .
\end{align}
Second, we bound the first term by using the recursive relation \eqref{eq:rec rel} as
\begin{align}
  \pr\!\left\{ \u_{1-t}^{t/2}= \zero\right\}
  \leq 
  2\, t\, 2^{-2N}\ .
\end{align}
Third, we introduce the uniformly distributed random vectors $\u,\u' \in \Z2^{2N}$, which are independent of the gates $S_{1-t}, S_{2-t}, \ldots, S_{-2}$, and write
\begin{align}
  \nonumber
  &\pr\!\left\{ \langle\u_0^0, \u_0^t\rangle =s 
  \mbox{ and } \u_{1-t}^{t/2}\neq \zero\right\}
  \\ \nonumber =\ &  
  \pr\!\left\{ \u_0^{0\,T} J C_{-1}\cdots C_{1-t} \u_{1-t}^{t/2+1/2} =s +\langle\u_0^0, \w\rangle
  \mbox{ and } \u_{1-t}^{t/2}\neq \zero\right\}
  \\ \nonumber \leq\ &  
  {\rm d}\!\left(\u_{1-t}^{t/2+1/2}, \u\right)
  +\pr\!\left\{ \u_0^{0\,T} J C_{-1}\cdots C_{1-t} \u =s +\langle\u_0^0, \w\rangle
  \mbox{ and } \u_{1-t}^{t/2}\neq \zero\right\}
  \\ \nonumber \leq\ &  
  {\rm d}\!\left(\u_{1-t}^{t/2+1/2}, \u\right)
  +\pr\!\left\{ \u_0^{0\,T} J C_{-1}\cdots C_{1-t} \u =s +\langle\u_0^0, \w\rangle\right\}
  \\ \label{eq:thirdterm} \leq\ &  
  {\rm d}\!\left(\u_{1-t}^{t/2+1/2}, \u\right)
  + {\rm d}\!\left(JC_{-1}^T J\u_0^0, \u'\right)
  +\pr\!\left\{ \u'^T J C_{-2}\cdots C_{1-t} \u =s +\langle\u_0^0, \w\rangle\right\}\ .
\end{align}
Fourth, using \eqref{eq:sdist} we can write the bounds
\begin{align}
  \nonumber
  {\rm d}\!\left(\u_{1-t}^{t/2+1/2}, \u\right)
  \leq\ & 2^{1-4N}\ ,
  \\
  {\rm d}\!\left(JC_{-1}^T J\u_0^0, \u'\right)
  \leq\ & 2^{1-4N}\ .
\end{align}
Fifth, in order to bound the third term in \eqref{eq:thirdterm} we note that, for any non-zero ${\bf a}\in \Z2^{2N}$ we have $  \pr\{\u'^T {\bf a} = s\} =1/2$, for both $s=0,1$, therefore
\begin{align}
  \nonumber
  &\pr\!\left\{ \u'^T J C_{-2}\cdots C_{1-t} \u =s +\langle\u_0^0, \w\rangle\right\}
  \\ \nonumber=\ &
  \pr\!\left\{ \u'^T J C_{-2}\cdots C_{1-t} \u =s +\langle\u_0^0, \w\rangle \mbox{ and } C_{-2}\cdots C_{1-t} \u\neq \zero \right\}
  \\ \nonumber+\ &
  \pr\!\left\{ \u'^T J C_{-2}\cdots C_{1-t} \u =s +\langle\u_0^0, \w\rangle \mbox{ and } C_{-2}\cdots C_{1-t} \u= \zero \right\}\ .
\end{align}
Next we bound the first term by using the fact that the uniformly distributed vector $\u'$ is independent of ${\bf a} := J C_{-2} \cdots C_{1-t} \u$ and $\langle\u_0^0, \w\rangle$, as
\begin{align}
  \nonumber
  \pr\!\left\{ \u'^T {\bf a} =s +\langle\u_0^0, \w\rangle \mbox{ and } {\bf a} \neq \zero \right\}
  \leq 
  \pr\!\left\{ \u'^T {\bf a} =s +\langle\u_0^0, \w\rangle \big|\, {\bf a} \neq \zero \right\}
  =\frac 1 2 \ .
\end{align}
The second term can be easily bounded as
\begin{align}
  \nonumber
  &\pr\!\left\{ \u'^T J C_{-2}\cdots C_{1-t} \u =s +\langle\u_0^0, \w\rangle \mbox{ and } C_{-2}\cdots C_{1-t} \u= \zero \right\}
  \\ \nonumber \leq\ & 
  \pr\!\left\{ C_{-2}\cdots C_{1-t} \u= \zero \right\}
  \ \leq\ 8 (t-2) 2^{-N}\ ,
\end{align}
where the last inequality follows from Lemma~\ref{le:CCCCu=0}.
Combining the above two bounds we obtain
\begin{align}
  \nonumber
  \pr\!\left\{ \u'^T J C_{-2}\cdots C_{1-t} \u =s +\langle\u_0^0, \w\rangle\right\}  
  \ \leq\
  \frac 1 2 + 8 (t-2) 2^{-N}\ .
\end{align}
Sixth, putting everything together back from \eqref{eq:E50} we arrive at
\begin{align}
  \nonumber
  \pr\!\left\{ \langle\u_0^0, \u_0^t\rangle =s 
  \right\}
  \leq\ &
  2\, t\, 2^{-2N} + 4\, 2^{-4N} + \frac 1 2 + 8 (t-2) 2^{-N}
  \\ \nonumber \leq\ & \frac 1 2 + 8\, t\, 2^{-N},
\end{align}
as we wanted to show.
\end{proof}

\begin{Lemma}
\label{app:Integer_ApproxUniformNonzero}
For an initial vector $\u^0 \in \V_0$ supported at $ x=0 $, the probability distribution of the evolved vector $\u^t = S_{ \text{chain} } ^t \u^0$ at integer times, conditioned on the evolved vector being non-zero at every site and different from the initial single-site non-zero vector, $\u_x^t \neq 0  \ \forall x \in \mathbb{Z}_L \text{ and } \u_0^t \neq \u_0^0 $, after the scrambling time $t_{\text{scr}}$ is of the form
\begin{equation}
\nonumber
\pr \{ \u^t | \u_x^t \neq 0 \  \forall x \in \mathbb{Z}_L ,\u_0^t \neq \u_0^0 \} \leq \frac{1}{(2^{2N} - 1)^{L-1}}  \left\{
 \begin{array}{cc}
   \frac{8 t 2^{-N} + 1/2}{2^{2N-1} - 2} & \text{ if } \SympForm{\u_0^t}{\u_0^0} = 0  \\
   \frac{8 t 2^{-N} + 1/2}{2^{2N-1} } &  \text{ if } \SympForm{\u_0^t}{\u_0^0} = 1
  \end{array}  \right. \ ,
\end{equation}
and before the scrambling time for all sites within the causal light-cone the probability distribution is of the form
\begin{align}
\nonumber
&\pr \{ \u^t | \u_x^t \neq 0 \  \forall x \in [- 2t + 1 , 2t ]  ,\u_0^t \neq \u_0^0 \} 
  \\ &\hspace{20mm} \leq 
  \frac{1}{(2^{2N} - 1)^{4t-1}}  \left\{
 \begin{array}{cc}
   \frac{8 t 2^{-N} + 1/2}{2^{2N-1} - 2} & \text{ if } \SympForm{\u_0^t}{\u_0^0} = 0  \\
   \frac{8 t 2^{-N} + 1/2}{2^{2N-1} } &  \text{ if } \SympForm{\u_0^t}{\u_0^0} = 1
  \end{array}  \right. \ .
\end{align}
Furthermore, this result holds for any choice of the single-site at which the initial vector is non-zero.
\end{Lemma}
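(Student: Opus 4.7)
The plan is to combine the integer-time twirling symmetry (Lemma \ref{lem:TwirlingTechnique_Appendix}) with the single-site symplectic-form bound (Lemma \ref{Lemma:Integer_SingleSitePhase}). In phase space, the twirling says that $S(t)$ has the same distribution as $\bigl(\bigoplus_x S_x^{\prime -1}\bigr) S(t) \bigl(\bigoplus_x S_x^{\prime}\bigr)$ for any single-site symplectic matrices $S_x^{\prime} \in \mathcal S_N$. Since $\u^0$ is supported only at site $0$, the right-hand twirl acts on the input as $S_0^{\prime} \u_0^0$, so to keep the input invariant I would restrict $S_0^{\prime}$ to the stabiliser of $\u_0^0$, while $S_x^{\prime}$ for $x \neq 0$ remains arbitrary. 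Hence the joint distribution of $\u^t$ is invariant under simultaneously applying an arbitrary stabiliser element of $\u_0^0$ on site $0$ together with arbitrary single-site symplectic maps on every site $x \neq 0$.

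From this invariance I would read off the structure of the joint distribution. By Lemma \ref{lem:uniformity}, $\mathcal S_N$ acts transitively on non-zero vectors at each site $x \neq 0$, so conditioning on $\u_x^t \neq \zero$ renders the conditional marginal at each such site uniform and independent across sites. On site $0$, Lemma \ref{lemma:IntegerTwirl_4Paramaters} identifies four orbits of the stabiliser of $\u_0^0$, namely $\{\zero\}$, $\{\u_0^0\}$, $\{\v : \SympForm{\v}{\u_0^0} = 0,\, \v \neq \zero, \u_0^0\}$ of size $2^{2N-1} - 2$, and $\{\v : \SympForm{\v}{\u_0^0} = 1\}$ of size $2^{2N-1}$. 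Consequently the joint probability $\pr\{\u^t = (\v_0, \v_1, \ldots, \v_{L-1})\}$ depends only on the orbit of $\v_0$ and the zero/non-zero pattern of the remaining components; under the given conditioning only the last two site-$0$ orbits survive, and the joint factorises as a product $\prod_{x \neq 0} 1/(2^{2N} - 1)$ times a site-$0$ atom $p_s$ depending only on $s = \SympForm{\v_0}{\u_0^0}$.

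To bound $p_s$, I would sum the joint probability over the full orbit, whose size is $|\text{orbit}_s| \cdot (2^{2N} - 1)^{L-1}$, to obtain $p_s \cdot |\text{orbit}_s| \cdot (2^{2N} - 1)^{L-1} \leq \pr\{\SympForm{\u_0^t}{\u_0^0} = s\}$, and then invoke Lemma \ref{Lemma:Integer_SingleSitePhase}, which gives $\pr\{\SympForm{\u_0^t}{\u_0^0} = s\} \leq 1/2 + 8t\, 2^{-N}$. Dividing by the orbit size yields the advertised per-atom bounds with denominators $2^{2N-1} - 2$ and $2^{2N-1}$, together with the overall factor $(2^{2N} - 1)^{-(L-1)}$. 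For $t < t_{\rm scr}$ the evolved vector is identically zero outside the causal cone $[-2t+1, 2t]$, so the twirling is effective only on the $4t$ cone sites, and the same argument replaces $L-1$ with $4t-1$ in the product denominator.

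The main obstacle is establishing that the whole joint distribution, and not merely the site-$0$ marginal, has the advertised product-plus-orbit form. This rests on the single-site twirls being chosen independently across sites so that averaging over uniform choices distributes uniformity site-by-site while simultaneously implementing the site-$0$ stabiliser symmetry. A secondary subtlety is passing from the $\SympForm$-marginal of Lemma \ref{Lemma:Integer_SingleSitePhase} to a bound on each individual atom: because the twirl forces every atom in an orbit to share a common probability, summing over the orbit is exactly the event $\{\SympForm{\u_0^t}{\u_0^0} = s, \u_x^t \neq \zero\,\forall x\}$, whose probability is a subset of the $\SympForm$-event and thus bounded by Lemma \ref{Lemma:Integer_SingleSitePhase}; dividing through by the orbit size then distributes this bound uniformly to each atom.
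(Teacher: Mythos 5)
Your proposal reproduces the paper's own argument: twirl $S(t)$ on the left by arbitrary single-site symplectics and on the right by the stabiliser of $\u_0^0$, invoke Lemma~\ref{lem:uniformity} and Lemma~\ref{lemma:IntegerTwirl_4Paramaters} to deduce the conditional distribution factorises into uniform marginals at $x\neq0$ times an orbit-constant atom at site $0$, then divide the Lemma~\ref{Lemma:Integer_SingleSitePhase} bound on the symplectic-form event by the orbit sizes $2^{2N-1}-2$ and $2^{2N-1}$; the pre-scrambling case is obtained by the same $L\to4t$ substitution the paper uses in Lemma~\ref{lemma:lightcone}. (A minor bookkeeping slip: summing the conditional atom probability over the full orbit should give $p_s\,|\text{orbit}_s|$, not $p_s\,|\text{orbit}_s|\,(2^{2N}-1)^{L-1}$; the extra factor cancels when you divide back out, so the final bound is unaffected.)
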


\begin{proof}
The proof of this lemma uses the twirling technique discussed in Section \ref{app:Twirling_PauliInvariance} lemma \ref{lem:TwirlingTechnique_Appendix}.
The probability distribution of the evolved vector $\u^t = (S_{\text{chain}})^t \u^0$ at integer times is identical to 
\begin{align}
 \u^t &=  \left( \mbox{$\bigoplus_{x=0}^{L-1}$} X_x \right) S_{\text{chain}}^t \left( \mbox{$\bigoplus_{x=0}^{L-1}$} X_x^{-1} \right)  \u^0 \nonumber \\
 &=\left( \mbox{$\bigoplus_{x=0}^{L-1}$} X_x \right) S_{\text{chain}}^t \left( X_0^{-1}\u_0^0 \, \mbox{$\bigoplus_{x=1}^{L-1}$} \zero \right)  \label{E42}
\end{align}
where $X_x \in \mathcal{S}_{N}$ are arbitrary single-site symplectic matrices. Equation \eqref{E42} follows from the fact that $ \u^0 $ has been assumed supported at $ x=0 $, therefore $ \left( \mbox{$\bigoplus_{x=0}^{L-1}$} X_x^{-1} \right)  \u^0 $ is supported at $ x=0 $ as well.
If we restrict $X_0$ to the elements of $\mathcal{S}_N$ that satisfy $X_0 \u_0^0 = \u_0^0$, then the probability distribution of $\u^t$ is identical to $\left( \bigoplus_{x=0}^{L-1}  X_x \right) \u^t$.
Since the choice of symplectic matrices $\bigoplus_{x=0}^{L-1} X_x$ to twirl is arbitrary, we can take each single-site matrix to be independent and uniformly distributed over all single-site symplectic matrices, except for $X_0$ which is uniformly distributed over the restricted set satisfying $X_0 \u_0^0 = \u_0^0$.
Then, we condition on the evolved vector being non-zero at all sites $x$ and different for the initial single-site non-zero vector, $\u_x^t \neq 0  \ \forall x \in \mathbb{Z}_L \text{ and } \u_0^t \neq \u_0^0 $. 
Therefore under this condition, the evolved vector at each site is independent and uniformly distributed over all non-zero vectors (lemma \ref{lem:uniformity}) apart from the initial vector $ \u_0^0 $.
On the vector space $\V_0$ we invoke lemma \ref{lemma:IntegerTwirl_4Paramaters}, and hence the evolved vector $\u_0^t$ ($\neq \zero, \u_0^0$) at $ x=0 $ is uniformly distributed over all the vectors with the same symplectic form 
with $\u_0^0$, $\SympForm{\u_0^t}{\u_0^0}$. 
Hence, using lemma \ref{Lemma:Integer_SingleSitePhase}, which gives an upper bound for the probability of $\SympForm{\u_0^t}{\u_0^0} \in \{0 , 1\}$, we get the stated result.
\end{proof}

The following theorem establishes approximate Pauli mixing: 
% Each sequence of two-site Clifford unitaries $U_0, \ldots, U_{L-1}$ defines an evolution operator $W(t)$ via equations (\ref{eq:integer}-\ref{eq:halfinteger}), which maps each Pauli operator $\sigma_\u$ to another Pauli operator $\sigma_{\u'} = \lambda W(t) \sigma_\u W(t)^\dagger$.
% This deterministic map $\u \mapsto \u'$ becomes probabilistic when we let $U_0, \ldots, U_{L-1}$ be random.
the probability that $\u$ evolves onto $\u'$ after a time $t$ given by:
\begin{equation}
  P_t (\u'|\u) = \mathop{\mathbb E}_{\{U_x\}}
  \left| 2^{-NL} \tr(\sigma_{\u'} W(t) \sigma_{\u} W(t)^\dagger) \right|\ ,
\end{equation}
% where we use the orthogonality of Paulies $\tr(\sigma_{\u'}\sigma_{\u}) = 2^{NL} \delta_{\u' \u}$.
% The locality of the dynamics (see Figure~\ref{fig:CircuitFigure}) implies that only operators inside the light cone of the initial operator $\sigma_\u$ have non-zero probability \eqref{Prob:final}.
is close to the uniform distribution over all non-zero vectors $\u'$ in the causal subspace \eqref{eq:cc} denoted by $Q_{t}(\u')$. After the scrambling time $t\geq t_{\rm scr}$, $Q_{t}(\u')$ is the uniform distribution over all non-zero vectors in the total phase space $\mathcal V_{\rm chain}$.

\begin{Theorem} \label{lemma:lightcone} (Approximate Pauli mixing) If the initial Pauli operator $\sigma_\u$ is supported at site $x=0$ then the probability distribution~\eqref{Prob:final} for its evolution $\sigma_{\u'}$ is close to uniform inside the light cone
\begin{equation}\label{res:ergod}
  \sum_{\u'} 
  \left| P_t(\u'|\u) - Q_t(\u') \right|
  \ \leq\ 130\times t^2\, 2^{-N}\ ,
\end{equation}
for any integer or half-integer time $t \in [1/2 , 2 t_{\rm scr} ]$.
An analogous statement holds for any other initial location $x\neq 0$.
\end{Theorem}

{\it Remark.} The following is an alternative enunciation of Theorem \ref{lemma:lightcone} formulated in phase space rather than Hilbert space. A proof of this alternative form then follows.

\vspace{6mm}
\noindent
{\bf Theorem} \ref{lemma:lightcone} (Alternative form).
For an initial vector supported at $ x=0 $, the evolved vector $\u^t = S_{ \text{chain} } ^t \u^0$, at integer times, is approximately uniformly distributed over all non-zero vectors within the light-cone. 
For any $ t \in [1,t_{\rm scr}] $ and $x \in [-2t + 1 , 2t]$ we have:
\begin{align}
  \sum_{\v \in \mathbb{Z}_2^{8Nt}} \left| \pr\{\u^t =\v \} -\frac{1}{2^{8Nt} - 1 }\right|
  \leq  
  \ 32 t (4t + 1 ) 2^{-N} + 4 t 2^{-2N}
\end{align}
For any $ t \in [t_{\rm scr}, 2t_{\rm scr}] $ it holds:
\begin{align}
 \nonumber
  \sum_{\v \in \mathbb{Z}_2^{2NL}} \left| \pr\{\u^t =\v \} -\frac{1}{2^{2NL} - 1 }\right|
  \leq 32 t (L+1) 2^{-N} + L 2^{-2N}
\end{align}
%Additionally, we note that this result holds for any choice of initial site.

\vspace{6mm}

\begin{proof}
Let us consider the case $ t \ge t_{\rm scr} $ first.
Similarly to the proof of Theorem \ref{lem:tHalfDistance}  we employ 
\begin{align} 
 \textrm{prob}(A)&=\textrm{prob}(A \wedge B) + \textrm{prob}(A \wedge \bar{B}) \nonumber \\
 & =\textrm{prob}(A|B)  \textrm{prob}(B) + \textrm{prob}(A|\bar{B}) \textrm{prob}(\bar{B}) \nonumber
\end{align}
where $ A $ and $ B $ are events in a probability space. With $q \equiv \pr \{ \u^t_x \neq \zero \, \forall x \in \mathbb{Z}_L  \wedge \u_0^t \neq \u_0^0 \}$, 
$\pr\{ \u^t=\v \}$ is then rewritten in the following way
\begin{align}
\nonumber
 \pr\{\u^t=\v\} & =  q \, \pr \{ \u^t=\v | \u^t_x \neq \zero \, \forall x \in \mathbb{Z}_L \wedge \u_0^t \neq \u_0^0 \}
 \\ \nonumber
  & + (1-q) ( \pr \{ \u^t=\v | \, \exists x \in \mathbb{Z}_L \, \textrm{such that} \, \u^t_x  = \zero \vee \u_0^t = \u_0^0 \} ) \ , 
\end{align}

Summing and subtracting $ q  \frac{1}{2^{2NL}-1} $ into the sum over $ \v $ and using the triangular inequality we find that
\begin{align}
\nonumber
  & \sum_{\v} \left| \pr\{ \u^t=\v \} -\frac{1}{2^{2NL} - 1}\right| \\ \nonumber
  & \leq  
   q \sum_{\v} \left| \pr \{ \u^t=\v | \u^t_x \neq \zero \, \forall x \in \mathbb{Z}_L  , \u_0^t \neq \u_0^0 \} -\frac{1}{2^{2NL} - 1}\right| \\ \nonumber
  & + 
  (1-q)  \sum_{\v} \left|   \pr \{ \u^t=\v | \, \exists x \in \mathbb{Z}_L \, \textrm{such that} \, \u^t_x  = \zero \vee \u_0^t = \u_0^0 \}  -\frac{1}{2^{2NL} - 1}\right| \ .
\end{align}
We can bound the first term using $ q \leq 1$ and apply lemma  \ref{app:Integer_ApproxUniformNonzero} to find that
\begin{equation}
\nonumber
q \sum_{\v} \left| \pr \{\u^t=\v | \u^t_x \neq \zero \, \forall x \in \mathbb{Z}_L  , \u_0^t \neq \u_0^0 \} -\frac{1}{2^{2NL} - 1} \right|  \leq 16 t 2^{-N} + (L + 1)  2^{-2N}  \ . 
\end{equation}
To evaluate the second term above, we upper bound the sum with its maximum value of 2 and use the result of lemma \ref{lemma:allnonzero} to find that
\begin{equation}
\nonumber
  (1-q)\sum_{\v} \left|    \pr \{ \u^t=\v |  \, \exists x \in \mathbb{Z}_L \, \textrm{such that} \, \u^t_x  = \zero \vee \u_0^t = \u_0^0 \} -\frac{1}{2^{2NL} - 1} \right|  \leq 32 t (L + 1) 2^{-N} \ . 
\end{equation}
Combining, this gives the stated result for integer times after the scrambling time.

To derive the results for integer times before the scrambling time, we note that the derivation is identical with the substitution $L \rightarrow 4 t $, which agree when $t = t_{\text{scr}}$ (and after this time).
\end{proof}

\subsection{Approximate mixing with arbitrary initial state}
\label{app:Ergodicity_AritraryInitial}

Consider a subsystem of the chain comprising $\Ls$ consecutive sites, where $\Ls$ is even. Without loss of generality we choose this subsystem to be $\{1,2, \ldots, \Ls\} \subseteq \mathbb Z_L$.
We analyse the state of this subsystem at times
\begin{equation}\label{eq:prescram}
  t \leq \frac {L-\Ls} 4 \ .
\end{equation}
This condition ensures that the left backwards wave front of $\u_1^t$ and the right backwards wave front of $\u_{\Ls}^t$ do not collide.
Without this condition, the analysis becomes very complicated.
%{\color{blue} There was a suggestion of including a figure to illustrate this point. Is this worthwhile?}

\begin{Lemma}\label{lem:IntegerTime_SubsystemPhases}
Consider an initial vector $\u^0 \in \V_{\rm chain}$  supported on all lattice sites ($\u_x^0 \neq\zero$ for all $x\in \mathbb Z_L$), and its evolution at time $ t $, $\u^t$.
Define the random variable $s_x = \langle \mathbf \u^t_x, \u^0_x \rangle$ at each site of the region $x\in \{1, \ldots, \Ls\} \subseteq \mathbb Z_L$, where $\Ls$ is even.
Then we have
\begin{align} \label{E59}
  P(s_1, \ldots, s_{\Ls}) 
%  & \leq  2^{-L_s} + 2 \; 3^{\frac{L_s}{2}+1} \varepsilon \ ,  \\ 
  & \leq 
  2^{-L_s} + 32\,t\, 3^{\frac{L_s}{2}+1}\, 2^{-N}\ , 
\end{align}
as long as $t \leq (L-\Ls)/4$.
\end{Lemma}

\begin{proof}
The value of the random vectors $\u^t_1, \ldots, \u^t_{\Ls}$ is only determined by the random matrices $S_{2-2t}, \ldots, S_{\Ls+2t-2}$. The rest of matrices $S_x$ are not contained in the causal past of the region under consideration $\{1,2, \ldots, \Ls\}$.
In order to simplify this proof, we will replace $S_{2-2t}, \ldots, S_{\Ls+2t-2}$ by a new set of random variables defined in what follows.

Let us label by $y \in \{1, \ldots, \Ls/2\}$ the pair of neighbouring sites $\{2y-1, 2y\} \subseteq \{1, \ldots, \Ls\}$.
For each pair $y$ we consider a given non-zero vector $\mathbf a_y \in \Z2^{4N}$ and define the random variables
\begin{align}
  \mathbf b_y 
  &= 
  S_{2y-1}^{-1} \mathbf a_y\ ,
  \\ \label{def:h_y}
  h_y 
  &= 
  \left\langle \mathbf a_y, \u^t_{2y-1} \oplus \u^t_{2y} 
  \right\rangle
  = 
  \left\langle \mathbf b_y, \u^{t-1/2}_{2y-1} \oplus \u^{t-1/2}_{2y} \right\rangle\ .
\end{align}
The left-most random contribution to $h_y$ is the matrix $S_{2y-2t}$, or equivalently the vector $\w_y$, defined through
\begin{equation}
  \tilde\w_y \oplus \w_y 
  = 
  S_{2y-2t} (\u^0_{2y-2t} \oplus \u^0_{2y-2t+1})\ .
\end{equation}
We note that $ \w_y \in \V_{2y-2t+1} $.
This contribution and others are illustrated in Figure~\ref{fig:past_region}. 
\begin{figure}
    \hspace{-6mm}
    \includegraphics[width = 160mm]{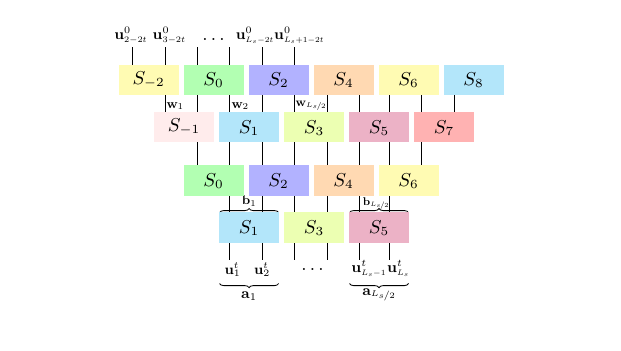}
    \vspace{-16mm}
    \caption{This figure represents the region $\{1,2, \ldots, 6\}$ at time $t=2$, and its causal past back to $t=0$. (Hence $\Ls =6$.) All the random matrices $S_{-2}, \ldots, S_8$ contribute to the value of the vectors $\u^t_1, \ldots, \u^t_6$. The left-most contribution to the vector $\u^t_1$ is the matrix $S_{-2}$, or equivalently the vector $\w_1$. The given vector $\mathbf a_y$ associated to the pair of neighbouring sites $y$, and its 1/2-step backwards time translations $\mathbf b_y$, are also represented.}
    \label{fig:past_region}
\end{figure}
The contribution of the vector $\w_y$ to $h_y$ (and $\u^{t-1/2}_{2y-1}$) is ``transmitted through" the matrices $S_{2y-2}$, $S_{2y-3}$, $\ldots$, $S_{2y-2t+2}$, $S_{2y-2t+1}$.
More precisely, $\w_y$ is mapped via the matrix product
\begin{equation}
  F_y
  = 
  C_{2y-2} C_{2y-3} \cdots C_{2y-2t+2} C_{2y-2t+1}\ ,
\end{equation}
where we have used decomposition \eqref{eq:S_x}.
We denote by $\v_y$ all contributions to $\u^{t-1/2}_{2y-1}$ that are not $F_y \w_y$,
\begin{equation}
  \v_y 
  = 
  (\u_{2y-1}^{t-1/2} +F_y \w_y) \oplus \u_{2y}^{t-1/2}.
\end{equation}
We remark that $ \v_y \in \V_{2y-1} \oplus \V_{2y} $.
The last random variable that we need to define is $g_y = \langle \mathbf b_y, \v_y \rangle$, which together with \eqref{def:h_y} allows us to write
\begin{equation}
  h_y 
  =
  \left\langle \mathbf b_y, F_y \w_y +\v_y \right\rangle
  =
  \left\langle \mathbf b_y, F_y \w_y \right\rangle + g_y\ .
\end{equation}
Note the slight abuse of notation in that we write $F_y \w_y$ instead of $F_y \w_y \oplus \mathbf 0$.

In summary, we have replaced the variables $S_{2-2t}, \ldots, S_{\Ls+2t-2}$ by the variables $\w_y, \mathbf b_y, F_y, g_y$ for $y=1, \ldots, \Ls/2$. (We are not using $\v_y, \tilde \w_y$ any more.)
These variables are not all independent, but they satisfy the following independence relations:
\begin{itemize}
  \item $\w_1, \mathbf b_1, \ldots, \w_{\Ls/2}, \mathbf b_{\Ls/2}$ are independent and uniform.

%  \item For each $y$, the three variables $\w_y, \mathbf b_y, F_y$ are independent.
  
  \item $\w_y$ is independent of $g_{y'}$ for all $y'\geq y$.
  
  \item $F_y$ is independent of $\w_{y'}$ and $\mathbf b_{y''}$ for all $y' \leq y$ and $y'' \geq y$.
\end{itemize}
To continue with the proof it is convenient to introduce the following notation:
\begin{align}
  \u_{\geq y} &= (\u_y, \u_{y+1}, \ldots, \u_{\Ls/2})\ ,
  \\
  \u_{\leq y} &= (\u_1, \u_2, \ldots, \u_y)\ ,
\end{align}
and analogously for $>,<$ and the rest of variables $\mathbf b_y, F_y, g_y$.
This allows us to write the joint probability distribution of $h_1, \ldots, h_{\Ls/2}$ as
\begin{equation}\label{E62}
  P(h_{\geq 1})
  = \sum_{ \w_{\geq 1}, \mathbf b_{\geq 1}, F_{\geq 1}, g_{\geq 1} }
  P(\w_{\geq 1}, \mathbf b_{\geq 1}, F_{\geq 1}, g_{\geq 1})
  \prod_y \delta \left(h_y, \left\langle \mathbf b_y, F_y \w_y \right\rangle + g_y \right)\ .
\end{equation}
Equation \eqref{E62} follows directly from the definition of the Kronecker-delta.
Note that we can write the above distribution $P(\w_{\geq 1}, \mathbf b_{\geq 1}, F_{\geq 1}, g_{\geq 1})$ as 
\begin{align}
  \nonumber
  &     
  P(\mathbf w_{\geq 1}, \mathbf b_{\geq 1}, F_{\geq 1}, g_{\geq 1}) 
  =
  \sum_{S_0, S_1, \ldots, S_{L-1}} 
  P(S_0) P(S_1) \cdots P(S_{L-1})\times
  \\ \nonumber & \hspace{1cm} \times \prod_{y=1}^{L_s/2}
  \delta\!\left(\mathbf w_{y}, S_{2y-2t}[ \mathbf u^0_{2y-2t} \oplus \mathbf u^0_{2y-2t+1}] \right) \times
  \delta\!\left(\mathbf b_{y}, S^{-1}_{2y-1} \mathbf a_y \right) \times
  \\ & \hspace{1cm} \times 
\delta\!\left(F_{y}, C_{2y-2}\cdots C_{2y-2t+1} \right) \times
\delta \left( g_y, \left\langle \mathbf b_y , (\u_{2y-1}^{t-1/2} +F_y \w_y) \oplus \u_{2y}^{t-1/2}\right\rangle \right).
\end{align}
The following sum-rule is repeatedly exploited below, where $ {\bf 0}_{2N} $ denotes the $ 2N \times 2N $ matrix with all entries equal to $ 0 $, instead $ \bf 0 $ is the vector with $ 2N $ components equal to $ 0 $.
\begin{equation}
  \label{eq:sum_w}
  \sum_{\w_1} P(\w_1) 
  \delta \left( h_1, \left\langle \mathbf b_1, F_1 \w_1 \right\rangle + g_1 \right)
  = \left\{
  \begin{array}{ll}
    \delta \left( h_1,g_1 \right) & \mbox{ if } (F_1 \oplus {\bf 0}_{2N})^T J \mathbf b_1 = \mathbf 0\ ,
    \\
    1/2 & \mbox{ otherwise.}
  \end{array}\right.
\end{equation}
Equation \eqref{eq:sum_w} is obtained as follows. We first note that 
\begin{align} \nonumber
  \left\langle \mathbf b_1, F_1 \w_1 \oplus {\bf 0} \right\rangle = \left\langle \mathbf b_1, (F_1 \oplus {\bf 0}_{2N}) (\w_1 \oplus {\bf 0}) \right\rangle = \mathbf b_1^T J (F_1 \oplus {\bf 0}_{2N}) (\w_1 \oplus {\bf 0} ) \ .
\end{align}
%In the last equality $ (\cdot , \cdot) $ denotes the regular scalar product. 
If $ (F_1 \oplus {\bf 0}_{2N})^T J \mathbf b_1 = \mathbf 0 $, the first of equations \eqref{eq:sum_w} follows from the normalisation of the probability $ P(\w_1) $.
If $ (F_1 \oplus {\bf 0}_{2N})^T J \mathbf b_1 \neq \mathbf 0 $, since the values of $ \w_1 $ are distributed uniformly over all the vectors of $ \Z2^{2N} $, implying that $ P(\w_1) = \frac{1}{2^{2N}} $, then $ \left\langle \mathbf b_1, F_1 \w_1 \right\rangle $ takes half of the times the value $ 0 $ and half of the times the value $ 1 $. 
Recall that $ F_1 $ and $ \mathbf b_1 $ are fixed in \eqref{eq:sum_w}. The second equation of \eqref{eq:sum_w} then follows.

Using $\delta(h,h') \leq 1$ for all $h,h'$ and \eqref{eq:sum_w} we can write
\begin{align*}
  & P(h_{\geq 1}) = 
  \sum_{ \w_{\geq 1}, \mathbf b_{\geq 1}, F_{\geq 1}, g_{\geq 1} } 
  P(\w_1) P(\w_{\geq 2}, \mathbf b_{\geq 1}, F_{\geq 1}, g_{\geq 1})
  \prod_y \delta \left( h_y, \left\langle \mathbf b_y, F_y \w_y \right\rangle + g_y \right)
  \\
  &\leq 
  \sum_{ \w_{\geq 2}, \mathbf b_{\geq 1}, F_{\geq 1}, g_{\geq 1} } 
  \delta \left( (F_1 \oplus {\bf 0}_{2N})^T J \mathbf b_1 , \mathbf 0 \right) P(\w_{\geq 2}, \mathbf b_{\geq 1}, F_{\geq 1}, g_{\geq 1})
  \prod_{y \ge 2} \delta \left( h_y, \left\langle \mathbf b_y, F_y \w_y \right\rangle + g_y \right) + \nonumber \\
  & + \frac 1 2 \sum_{ \w_{\geq 2}, \mathbf b_{\geq 2}, F_{\geq 2}, g_{\geq 2} } 
  P(\w_{\geq 2}, \mathbf b_{\geq 2}, F_{\geq 2}, g_{\geq 2})
  \prod_{y\geq 2} \delta \left( h_y , \left\langle \mathbf b_y, F_y \w_y \right\rangle + g_y \right) \\
%  & = P \left( (F_1 \oplus {\bf 0}_{2N})^T J \mathbf b_1 = \mathbf 0, h_{\ge2} \right) + \\
%  & + \frac 1 2 \sum_{ \w_{\geq 2}, \mathbf b_{\geq 2}, %F_{\geq 2}, g_{\geq 2} } 
%  P(\w_{\geq 2}, \mathbf b_{\geq 2}, F_{\geq 2}, g_{\geq 2})
%  \prod_{y\geq 2} \delta \left( h_y , \left\langle \mathbf b_y, F_y \w_y \right\rangle + g_y \right) \\
  &\leq 
  \pr \{(F_1 \oplus {\bf 0}_{2N})^T J \mathbf b_1 = \mathbf 0\} + \nonumber \\
  & + \frac 1 2 \sum_{ \w_{\geq 2}, \mathbf b_{\geq 2}, F_{\geq 2}, g_{\geq 2} } 
  P(\w_{\geq 2}, \mathbf b_{\geq 2}, F_{\geq 2}, g_{\geq 2})
  \prod_{y\geq 2} \delta \left( h_y , \left\langle \mathbf b_y, F_y \w_y \right\rangle + g_y \right)
\end{align*}
To bound the term associated with the case $ (F_1 \oplus {\bf 0}_{2N})^T J \mathbf b_1 \neq \mathbf 0 $  we extended the sum over $\mathbf b_1, F_1$ from the values satisfying $(F_1 \oplus {\bf 0}_{2N})^T J \mathbf b_1 \neq \mathbf 0$ to all values.
Since the variables $\mathbf b_1, F_1, g_1$ do not appear in any of the remaining $\delta$-functions, we can trace them out.
Subsequently we repeat the above process by summing over $\w_2$, using the analog of \eqref{eq:sum_w} for $y=2$, and summing over $\w_2, F_2, g_2$, obtaining
\begin{align}
  \nonumber
  P(h_{\geq 1})
  = 
  \epsilon + \frac 1 2 \left( \epsilon + \frac 1 2 
  \sum P(\w_{\geq 3}, \mathbf b_{\geq 3}, F_{\geq 3}, g_{\geq 3})
  \prod_{y\geq 3} \delta \left( h_y , \left\langle \mathbf b_y, F_y \w_y \right\rangle + g_y \right) \right)\ ,
\end{align}
where we define $\epsilon = \pr \{(F_1 \oplus {\bf 0}_{2N})^T J \mathbf b_1 = \mathbf 0\}$.
Continuing in this fashion yields
\begin{align}
\label{prob_bounds}
  \nonumber
  P(h_1, \ldots, h_{\Ls/2})
  &= 
  \epsilon \sum_{k=0}^{\Ls/2 -1} 2^{-k} + 2^{-\Ls/2} \ ,
  \\ &\leq  
  2 \epsilon + 2^{-\Ls/2}\ .
\end{align}

We now wish to turn this bound from a distribution of $h_y$ to the distribution of $s_x \equiv \SympForm{\u_x^t }{\u_x^0}$ (recalling that $x \in \{ 1, 2 , \ldots \Ls \}$ and $ y \in \{1, ..., L_s/2 \} $), that is to say we want to bound $P(s_1 , s_2, \ldots , s_{\Ls} )$. 

Let us consider first the simplest case, that is $ L_s = 2 $, $ h_1 = s_1 + s_2 $. The couple $ (s_1,s_2) $ has four possible realizations $ (0,0), (1,0), (0,1), (1,1) $.  
We have the following bounds:
\begin{align}
& P(h_1=0) = P(0,0) + P(1,1) \le \frac{1}{2} + 2\epsilon \label{e1}\\
& P(h_1=1) = P(0,1) + P(1,0) \le \frac{1}{2} + 2\epsilon \nonumber \\
& P(0,0) + P(1,1) \ge \frac{1}{2} - 2\epsilon \nonumber
\end{align}
The last bound combine normalization $ P(0,0) + P(1,1) + P(0,1) + P(1,0) = 1 $ and the second bound above. Similarly it holds $  P(0,1) + P(1,0) \ge \frac{1}{2} - 2\epsilon $.

The bound \eqref{prob_bounds} extends to the case where rather than the values of $ h_y \equiv s_{2y-1} + s_{2y} $ are fixed, the values of certain  $ h_y $ and of certain $ s_x $ are fixed.  
In the case $ L_s = 2 $ this amounts to three cases: $ \text{$  h_1 $ fixed} $, $ \text{$ s_1 $ fixed} $, $ \text{$ s_2 $ fixed} $.
It then follows
\begin{align} 
& P(s_1=0)=P(0,0)+P(0,1) \le \frac{1}{2} + 2\epsilon \label{e2} \\
& P(s_1=1)=P(1,0)+P(1,1) \le \frac{1}{2} + 2\epsilon \nonumber  \\
& P(s_2=0)=P(0,0)+P(1,0) \le \frac{1}{2} + 2\epsilon \nonumber  \\
& P(s_2=1)=P(0,1)+P(1,1) \le \frac{1}{2} + 2\epsilon \nonumber 
\end{align}
The lower bound can be obtained similarly to what we have done above, for example: 
\begin{align}
P(0,1)+P(1,1) \ge \frac{1}{2} - 2\epsilon \label{e3}
\end{align}
Summing \eqref{e1} with \eqref{e2} and subtracting \eqref{e3}, we obtain $ P(0,0) \le \frac{1}{4} + 3\epsilon $.
With a similar approach we obtain $ P(0,0) \ge \frac{1}{4} - 3\epsilon $. The procedure that we have described holds for all $ P(s_1,s_2) $, then:
\begin{equation} \nonumber
 \frac{1}{4} - 3\epsilon \le P(s_1,s_2) \le \frac{1}{4} + 3\epsilon .
\end{equation}
We introduce a matrix formalism to re-obtain the result above, this formalism will allow us to treat the case $ L_s > 2 $. 
The set of inequalities \eqref{e1} and \eqref{e2}, with the respective lower bounds, can be written as:

\begin{equation} \label{system_inequalities}
\begin{pmatrix}
 \frac{1}{2} - 2\epsilon \\
 \frac{1}{2} - 2\epsilon \\
 \frac{1}{2} - 2\epsilon \\
 \frac{1}{2} - 2\epsilon \\
 \frac{1}{2} - 2\epsilon \\
 \frac{1}{2} - 2\epsilon \\
\end{pmatrix}
\le
\begin{pmatrix}
 1 & 0 & 0 & 1 \\
 0 & 1 & 1 & 0 \\
 1 & 1 & 0 & 0 \\
 0 & 0 & 1 & 1 \\
 1 & 0 & 1 & 0 \\
 0 & 1 & 0 & 1 \\
\end{pmatrix}
\begin{pmatrix}
 P(0,0) \\
 P(0,1) \\
 P(1,0) \\
 P(1,1) \\
\end{pmatrix}
\le
\begin{pmatrix}
 \frac{1}{2} + 2\epsilon \\
 \frac{1}{2} + 2\epsilon \\
 \frac{1}{2} + 2\epsilon \\
 \frac{1}{2} + 2\epsilon \\
 \frac{1}{2} + 2\epsilon \\
 \frac{1}{2} + 2\epsilon \\
\end{pmatrix}
\end{equation}
We denote the $ 6 \times 4 $ matrix above with $ A $.
As far as regards the system of inequalities above we have already shown explicitly the solution of it, in particular we saw that to find the upper bound $ P(s_1,s_2) \le \frac{1}{4} + 3 \epsilon $ we need both upper and lower bounds in \eqref{e1} and \eqref{e2}. The same holds for the lower bound. It is easy to describe a way to obtain a solution of \eqref{system_inequalities} where we get exactly the term of $ O(1) $ and we overestimate the correction $ O(\epsilon) $. Since in \eqref{system_inequalities} the term of $ O(1) $ is the same both in the upper bound and the lower bound then the term of $ O(1) $ in \eqref{system_inequalities} is obtained replacing the inequalities with equality. The solution of the corresponding system is also easily obtained by inspection, in fact since every row of $ A $ has two entries equal to $ 1 $ a solution of the system is $ P(s_1,s_2)=\frac{1}{4} $ for all $  (s_1,s_2) $, since $ A $ is a full rank matrix this is also the only solution. 
To evaluate the error we consider the following equality where $ {\bf{a}}_j $ is the $ j$-th row of the matrix $ A $, and $ \bf{P} $ is the column vector of probabilities as in equation \eqref{system_inequalities}
\begin{equation} \label{P(0,0)}
 \frac{1}{2} ({\bf{a}}_3 + {\bf{a}}_5 - {\bf{a}}_2) \cdot {\bf{P}}=P(0,0)  
\end{equation}
The equation above can be generalized to every $ P(s_1,s_2) $. This means that in general we only need to sum or subtract three among the inequalities in \eqref{system_inequalities} to obtain any  $ P(s_1,s_2) $ therefore the maximal error in modulus that can arise is equal to $ 6 \epsilon $, then we can rewrite:
\begin{equation} \label{error}
 \frac{1}{4} - 6\epsilon \le P(s_1,s_2) \le \frac{1}{4} + 6\epsilon .
\end{equation}

It is easy to understand that each row of the matrix $ A $ carries a ``label'' as specified below, in fact, for example, the product of the first row of $ A $ with the vector that has entries given by $ P(s_1,s_2) $, outlined in equation \eqref{system_inequalities}, gives $ P(0,0)+P(1,1) \equiv P(h_1=0) $, therefore the first row carries the label $ h_1=0 $.

\begin{equation} \label{m_A}
 A \equiv
\begin{pmatrix}
 1 & 0 & 0 & 1 \\
 0 & 1 & 1 & 0 \\
 1 & 1 & 0 & 0 \\
 0 & 0 & 1 & 1 \\
 1 & 0 & 1 & 0 \\
 0 & 1 & 0 & 1 \\
\end{pmatrix}
\leftrightarrow
\begin{pmatrix}
 h_1=0 \\
 h_1=1 \\
 s_1=0 \\
 s_1=1 \\
 s_2=0 \\
 s_2=1 \\
\end{pmatrix} \ .
\end{equation}
The generalization to the case  $ L_s = 4 $ is given considering the Kronecker product (i.e. the tensor product in the standard basis) of the matrix $ A $ with itself. For example the first row of the matrix $ A \otimes A $ carries the label $ h_1=0 , h_2=0 $, the second row carries the label $ h_1=0 , h_2=1 $, the third $ h_1=0 , s_2=0 $ and so on.

In the case $ L_s=4 $ we want to bound $ P(s_1,s_2,s_3,s_4) $, to get them the idea is the same as that exploited in the case $ L_s =2 $ namely taking linear combinations of bounds on $ P(h_1,h_2) $, $ P(h_1,s_3) $, $ P(h_1,s_4) $ and so on.
% We will now show that by taking linear combinations of specific bounds, such as those in equation \eqref{prob_h_fixed}, which is equivalent to linear combinations of rows of $A \otimes A$, we can obtain bounds on the distribution $ P(s_1,s_2,s_3,s_4) $.
% Since the bounds for the probabilities $ P(s_1,s_2,s_3,s_4) $ correspond to rows of the tensor product $ A \otimes A $ this implies that:
We notice that equation \eqref{P(0,0)} generalizes to $ L_s = 4 $ as follows:
\begin{equation} \label{one_bound}
 \frac{1}{4}({\bf{a}}_3 + {\bf{a}}_5 - {\bf{a}}_2) \otimes ({\bf{a}}_3 + {\bf{a}}_5 - {\bf{a}}_2) \cdot {\bf{P}} = P(0,0,0,0).
\end{equation}
$\bf{a}_j$ denotes row $j$-th of matrix $A$, and the Kronecker product of two rows is a row with $ L_s^2 $ elements, $\bf{P}$ denotes the vector of all possible choices of $ P(s_1,s_2,s_3,s_4) $.
The equation \eqref{one_bound} involves nine bounds because there are nine terms in the tensor product $  ({\bf{a}}_3 + {\bf{a}}_5 - {\bf{a}}_2) \otimes ({\bf{a}}_3 + {\bf{a}}_5 - {\bf{a}}_2) $ , and so
\begin{equation}
 \frac{1}{16}-54 \varepsilon \leq P(0,0,0,0) \leq \frac{1}{16}+54 \varepsilon \ . \nonumber
\end{equation}
Note that the error $ 54 \varepsilon $ arises as the product of the error associated with each bound in \eqref{error} and the number of inequalities that is $ 9 $.
% Any choice of $ P(s_1,s_2,s_3,s_4) $ can be found through a linear combination of three inequalities for each pair of sites, and hence the bound applies in general to $ P(s_1,s_2,s_3,s_4) $.
To generalize this to arbitrary $\Ls$, we just consider further tensor products of $A$, and hence
\begin{equation} 
 2^{-L_s} - 2 \; 3^{\frac{L_s}{2}+1} \varepsilon \leq P(s_1,...,s_{L_s}) \leq 2^{-L_s} + 2 \; 3^{\frac{L_s}{2}+1} \varepsilon \ . 
\end{equation}
Using lemma \ref{le:CCCCu=0}, with $r=2t$ and $n=N$ we have $ \epsilon < 16t2^{-N} $, this implies \eqref{E59}.
\end{proof}

\begin{Theorem}
\label{lem:IntegerTime_SubsystemDistance}
Consider an initial vector $\u^0 \in \V_{\rm chain}$ with non-zero support in all lattice sites ($\u_x^0 \neq\zero$ for all $x\in \mathbb Z_L$).
Consider the evolved vector $\u^t = S(t) \u^0$ inside a region $x\in \{1, \ldots, \Ls\} \subseteq \mathbb Z_L$ where $\Ls$ is even and the time is $t \leq \frac{L - \Ls}{4}$.
If $\u^t_{[1,Ls]}$ is the projection of $\u^t$ in the subspace $\bigoplus_{x=1}^{\Ls} \V_x$ then
\begin{equation}
\sum_{\v \in \Z2^{2N\Ls}} \left| \pr\{\v=\u^t_{[1,Ls]}\} -\frac 1 {2^{2 N \Ls}}\right|
  \leq 
32\, t\, 2^{-N} ( 2 \Ls  +3^{\frac{\Ls}{2} + 1} ) +  4 L 2^{-2N} \ .
\end{equation}
\end{Theorem}

\begin{proof}
First, we re-state $ \pr\{\v=\u^t\}$ in the following way
\begin{align}
\nonumber
 \pr\{\v=\u^t\} & =  q \, \pr \{\v = \u^t | \u^t_x \neq \zero , \u^0_x \, \forall x \in \mathbb Z _ {\Ls}\}
 \\ \nonumber
  & + (1-q) (1 -  \pr \{\v = \u^t | \u^t_x \neq \zero, \u^0_x \, \forall x \in \mathbb Z _ {\Ls} \} )  \ , 
\end{align}
where $x\in \{1, \ldots, \Ls\} \subseteq \mathbb Z_L$ with $\Ls$ is even, $q$ is the probability of distribution $\pr \{
 \u^t_x \neq \zero , \u^t_x \, \forall x \in \mathbb Z _ {\Ls}\}$, and similarly with the complement. 
Then using convexity we find that
\begin{align}
\nonumber
\sum_{\v \in \Z2^{2N\Ls}} \left| \pr\{\v=\u^t\} -\frac 1 {2^{2 N \Ls}}\right|
  \leq 
 q  \sum_{\v \in \Z2^{2N\Ls}} \left| \pr \{\v = \u^t | \u^t_x \neq \zero , \u^0_x \, \forall x  \} - \frac 1 {2^{2N \Ls}}\right|
&  \\ \nonumber 
  +
 (1-q) \sum_{\v \in \Z2^{2N\Ls}} \left| 1 -  \pr \{\v = \u^t | \u^t_x \neq \zero, \u^0_x \, \forall x   \} -\frac 1 {2^{2N \Ls }}\right|
 \ .
\end{align}
We can evaluate the first term using the upper bound $q \leq 1$ and use Lemma \ref{lemma:IntegerTwirl_4Paramaters} combined with Lemma \ref{lem:IntegerTime_SubsystemPhases} to find that 
\begin{equation}
  q \sum_{\v \in \Z2^{2N\Ls}} \left| \pr \{\v = \u^t | \u^t_x \neq \zero , \u^0_x \, \forall x \in \mathbb Z _ {\Ls}  \} - \frac 1 {2^{2N \Ls}}\right|
\leq
32\, t 3^{\frac{\Ls}{2} +1 } 2^{-N} + L 2^{2-2N} \ .
\end{equation}
To evaluate the second term, we can upper bound the sum by its maximum value, 2, and use the result of Lemma \ref{lemma:allnonzero} to upper bound $(1-q)$ to find that
\begin{equation}
 (1-q) \sum_{\v \in \Z2^{2N\Ls}} \left| 1 -  \pr \{\v = \u^t | \u^t_x \neq \zero, \u^0_x \, \forall x  \in \mathbb Z _ {\Ls}  \} -\frac 1 {2^{2N \Ls }}\right|
 \leq 
 64 \Ls t 2^{-N} \ .
\end{equation}
Combining these two terms we get the stated result. 
\end{proof}

%-----------------------------------------------------------------------------------------------------------

%                       APPROXIMATE DESIGN AT SEMI INTEGER TIME

%-----------------------------------------------------------------------------------------------------------

\section{Approximate 2-design at half-integer time}
\label{app:ApproxDesignSemiTime}
In this section we will combine the results of the sections \ref{app:Twirling_PauliInvariance} and \ref{app:LocalDynamics_PauliMixing}, with the results of the reference \cite{Webb_2016}, to show that the random circuit model we consider is an approximate 2-design in a weak sense (Theorem \ref{thm:approx2Design}).

As discussed in the main body, in the reference \cite{Webb_2016} (specifically Appendix A) it is demonstrated that if a Clifford circuit satisfies both Pauli invariance (Section \ref{app:Twirling_PauliInvariance} Definition \ref{Def:PauliInvariance_Appendix}) and Pauli mixing (Section \ref{app:LocalDynamics_PauliMixing} Theorem \ref{lem:tHalfDistance}) then it is an exact 2-design. 
In the following theorem, we will demonstrate that when Pauli mixing is only approximate, as in our case, then the random Clifford circuit is instead an approximate 2-design when one has access to Pauli measurements alone.

% \begin{Theorem} \label{thm:approx2Design}
% The dynamics $W(t)$, $ t \ge t_{\rm scr} $ half-integer, gives rise to an approximate 2-design with respect to two-Pauli measurements. For any state $\rho$ it holds:
% \begin{equation} \label{eq:P meas}
%   \tr\!\left(\sigma_\u \otimes \sigma_\v \left[
%   \mathop{\mathbb E}_{W(t)}
%   W(t)^{\otimes 2} \rho\, W(t)^{\otimes 2 \dagger}
%   -
%   \int_{{\rm SU}(d)} \hspace{-7mm} dU\,
%   U^{\otimes 2} \rho\, U^{\otimes 2 \dagger}
%   \right]\right)
%   \leq 
%   33\, t L 2^{-N} \delta_{\u,\v}
%   \ ,
% \end{equation}
% \end{Theorem}

\begin{Theorem} \label{thm:approx2Design}
  Consider the task of discriminating between two copies of $W(t)$ and two copies of a Haar-random unitary $U$ with measurements restricted to Pauli operators, when $t \in [ t_{\rm scr} , 2  t_{\rm scr} ]$ is half-integer.
  The success probability for correctly guessing the given pair of unitaries satisfies
\begin{align} 
  \nonumber
  p_{\rm guess}
  =\ &\frac 1 2 + \frac 1 4 
  \ \max_{\rho, \u, \v}\ 
  \tr\!\left(\!\sigma_\u \otimes \sigma_\v\! \left[
  \mathop{\mathbb E}_{W(t)}
  W(t)^{\otimes 2} \rho\, W(t)^{\otimes 2 \dagger}
  -
  \int_{{\rm SU}(d)} \hspace{-7mm} dU\,
  U^{\otimes 2} \rho\, U^{\otimes 2 \dagger}
  \right]\right)
  \\ \label{eq:P meas}
  \leq \ & 
  1/2 + 9\, t L 2^{-N}
  \ .
\end{align}
%($\sigma_\u $ denote the Pauli operators.)
\end{Theorem}

\begin{proof}
Let us consider a general state describing two copies of the system
\begin{equation}
  \rho = \sum_{\u, \v} \alpha_{\u, \v}\, \sigma_\u \otimes \sigma_\v\ ,
\end{equation}  
where $\alpha_{0,0} = 2^{-2NL}$ by normalisation.
The coefficients $\alpha_{\u, \v}$ must satisfy the following
\begin{equation}\label{eq:alpha bound}
  \alpha_{\u,\v}\, 2^{2NL} 
  = 
  \tr(\rho\, \sigma_\u \otimes \sigma_\v)
  \in [-1,1]\ .
\end{equation}
Applying the average dynamics to $\rho$ we obtain
\begin{equation}\label{eq:WW}
  \mathop{\mathbb E}_{W(t)}
  W(t)^{\otimes 2} \rho\, W(t)^{\otimes 2 \dagger}
  =
  2^{-2NL} \unity \otimes \unity +
  \sum_{\u,\v\neq 0} \alpha_{\v,\v} 
  \pr \{\v=S(t)\u\} \sigma_\u \otimes \sigma_\u
  \ .
\end{equation}
The fact that terms $\alpha_{\u,\u'}$ and $\sigma_\u \otimes \sigma_{\u'}$ with $\u\neq \u'$ are not present in the above expression follows from the fact that $W(t)$ is Pauli-invariant (see appendix A of the reference \cite{Webb_2016}), which is proven in Lemma \ref{lem:PauliInvariantCircuit_Appendix}.
Recall that at half-integer $t$ we have the time-reversal symmetry
\begin{equation}\label{eq:time refl}
  \pr \{\v=S(t)\u\} = \pr \{\u=S(t)\v\}\ .
\end{equation}
Applying the Haar twirling on $\rho$ we obtain
\begin{equation}\label{eq:UU}
  \int_{{\rm SU}(d)} \hspace{-7mm} dU\,
  U^{\otimes 2} \rho\, U^{\otimes 2 \dagger}
  =
  2^{-2NL} \unity \otimes \unity +
  \sum_{\u,\v\neq 0} \alpha_{\v,\v}\, \gamma\,
  \sigma_\u \otimes \sigma_\u\ ,
\end{equation}
where $\gamma = (2^{2NL}-1)^{-1}$ is the uniform distribution over non-zero vectors in $\V_{\rm chain}$.
Substituting \eqref{eq:WW} and \eqref{eq:UU} into \eqref{eq:P meas} we obtain
\begin{align} 
  &\tr\left(\sigma_\u \otimes \sigma_\v \left[
  \mathop{\mathbb E}_{W(t)}
    W(t)^{\otimes 2} \rho\, W(t)^{\otimes 2 \dagger}
  -
  \int_{{\rm SU}(d)} \hspace{-7mm} dU\,
  U^{\otimes 2} \rho\, U^{\otimes 2 \dagger}
  \right]\right)
  \\ = &\ 
  \delta_{\u,\v} 
  \sum_{\w\neq 0} \alpha_{\w,\w} 
  \left(\pr \{\u=S(t)\w\} - \gamma\right) 2^{2NL}
  \\ \leq &\ 
  \delta_{\u,\v} 
  \sum_{\w\neq 0} 
  \left|\pr \{\u=S(t)\w\} - \gamma\right|
  \leq 33\, t L 2^{-N} \delta_{\u,\v} \ ,
\end{align}
where in the last two inequalities we use \eqref{eq:alpha bound}, \eqref{eq:time refl} and Theorem \ref{lem:tHalfDistance}. This implies that the guessing probability satisfies $p_{\rm guess} \leq 1/2 + 9\, t L 2^{-N}$, hence \eqref{eq:P meas}.
\end{proof}

The following result is not presented in the main text because it is difficult to interpret. It is important to not confuse the infinite norm between two states with the infinite norm between two maps. What we have here is the first. The second is the definition of quantum tensor-product expander.

\begin{Lemma}
The dynamics $W(t)$ defined in equation \eqref{def_W}, with $ t \ge t_{\rm scr} $ half-integer, is closed to an approximate 2-design with respect to the infinity norm, namely for any state $\rho$ it holds:
\begin{equation} \label{eq:phi meas}
  \left\|
  \mathop{\mathbb E}_{W(t)}
  W(t)^{\otimes 2} \rho\, W(t)^{\otimes 2 \dagger}
  -
  \int_{{\rm SU}(2^{NL})} \hspace{-7mm} dU\,
  U^{\otimes 2} \rho\, U^{\otimes 2 \dagger}
  \right\|_\infty
  \leq 
  33\, t L 2^{-N}.
\end{equation}
\end{Lemma}

\begin{proof}
Let $|\phi_0\rangle \equiv \left( \frac{|0,1\rangle - |1,0\rangle}{\sqrt{2}} \right)^{\otimes NL} $ denote the $NL$-fold tensor-product of the singlet state, where each singlet entangles each qubit of the first copy of the system and the corresponding qubit in the second copy of the system.
This implies that $(\sigma_\u \otimes \sigma_\u) |\phi_0\rangle = (-1)^{|\u|} |\phi_0\rangle$, where $|\u| \equiv \left( \sum_j u_j \right)\textrm{mod} 2$. $ \sigma_\u $ is defined as in \eqref{eq:Pauli element}:
\begin{align}
  \sigma_\u  =   \bigotimes_{i=1}^n   (\sigma_x^{q_i} \sigma_z^{p_i}) \in {\rm U}(2^n) \nonumber
\end{align}
with $\u = (q_1, p_1, q_2, p_2, \ldots, q_n, p_n ) \in \Z2^{2n}$.
Any Bell state (as described above) can be written as $|\phi_\v\rangle = (\unity \otimes \sigma_\v) |\phi_0\rangle$ for all $\v\in \V_{\rm chain}$.
Note that these form an orthonormal basis for the Hilbert space of two copies of the system $\langle \phi_\u|\phi_\v\rangle = \delta_{\u,\v}$.
Also, using the commutation relations \eqref{eq:commutatation relation} we obtain
\begin{align}
  \nonumber
  (\sigma_\u \otimes \sigma_\u) |\phi_\v\rangle
  &=
  (\sigma_\u \otimes \sigma_\u) 
  (\unity \otimes \sigma_\v) |\phi_0\rangle
  \\ \nonumber &=
  (-1)^{\langle \v, \u \rangle} 
  (\unity \otimes \sigma_\v) 
  (-1)^{|\u|} |\phi_0\rangle
  \\ \label{eq:eigen phi} &=
  (-1)^{\langle \v, \u \rangle +|\u|} 
  |\phi_\v\rangle\ .
\end{align}
This together with \eqref{eq:WW} and \eqref{eq:UU} implies that the argument inside the norm \eqref{eq:phi meas} is diagonal in the $|\phi_\v\rangle$ basis.
Therefore, the following bound for each element of the basis provides the bound for the $\infty$-norm:
\begin{align}
  &\langle\phi_\v|\left(
  \mathop{\mathbb E}_{W(t)}
  W(t)^{\otimes 2} \rho\, W(t)^{\otimes 2 \dagger}
  -
  \int_{{\rm SU}(d)} \hspace{-7mm} dU\,
  U^{\otimes 2} \rho\, U^{\otimes 2 \dagger}
  \right)|\phi_\v\rangle
  \\ =& 
  \sum_{\u,\w\neq 0} \alpha_{\w,\w} 
  \left(\pr \{\u=S(t)\w\} - \gamma\right)
  \langle\phi_\v| \sigma_\u \otimes \sigma_\u |\phi_\v\rangle
  \\ =& 
  \sum_{\u,\w\neq 0} \alpha_{\w,\w} 
  \left(\pr \{\u=S(t)\w\} - \gamma\right)
  (-1)^{\langle \v, \u \rangle +|\u|} 
  \\ \leq & 
  \sum_{\u,\w\neq 0} 2^{-2NL} 
  \left|\pr \{\u=S(t)\w\} - \gamma\right|
  \leq 33\, t L 2^{-N}\ .
\end{align}
\end{proof}

%-----------------------------------------------------------------------------------------------------------
%                       			LOCALISATION
%-----------------------------------------------------------------------------------------------------------

\section{Localisation with $ N \ll \log L $}
\label{app:CliffordLocalisation}

In this section, we consider the same spin chain with random local Clifford dynamics and again we will work in the phase space description, which was discussed in the Appendix \ref{app:CliffordPhaseSpace}. 
We will show that in the regime of $N \ll \log L$ the random dynamics, instead of displaying scrambling, results in the localisation of all operators in bounded region.

The most simple case that results in localisation is when one of the $L$ two-site gates $S_x$ has $C_x=0$, so there is no right-wards propagation, and hence by the time-periodic nature of the circuit prevents right-wards propagation for all subsequent times also. 
A bound on the probability of this happening is given in the following theorem.

\begin{Theorem}
\label{eq:ProbE=0}
Any given $S\in \S_{2n}$ can be written in block form 
\begin{equation}
  S =
  \left( \begin{array}{cc}
     A & B \\
     C & D 
  \end{array}\right) ,
\end{equation}
according to the decomposition $\Z2^{4n} = \Z2^{2n} \oplus \Z2^{2n}$, and if $S$ is uniformly distributed then this induces a distribution on the sub-matrices $A, B, C, D$.
For each of the sub-matrices ($E= A, B, C, D$) the induced distribution satisfies
\begin{equation}
  \frac{ 2^{-4N^2} }{2}   \leq \pr \{ E = 0 \} = \frac{ | \S_{n} |^2}{|\S_{2n}|}  \leq \ 2^{-4N^2} \ , 
\end{equation}
It also holds: $\pr \{ A = 0 | D = 0 \}  =  \pr \{ D = 0 | A = 0 \} = \pr \{ B = 0 | C = 0 \} =\pr \{ C = 0 | B = 0 \} = 1$.
\end{Theorem}

\begin{proof}
We first consider when $C = 0$. 
By Lemma \ref{eq:C=0iffB=0} in the appendix \ref{app:AdditionalLemmas}, this implies that $B = 0$. 
Therefore, $A$ and $D$ are both $2n \times 2n$ symplectic matrices, which can be counted independently. 
Following the counting algorithm in Lemma \ref{lemma:algorithm}, the number of choices of $S$ with $C=0$ is given exactly by
\begin{equation}
    |\{ S \in \S_{2n} : C = 0 \}| = |\S_n| |\S_n| = |\S_n|^2 \ .
\end{equation}
Finally, dividing by the total number of choices for S gives the probability. 
Using Lemma \ref{S block permutation} and \ref{eq:C=0iffB=0}, this argument applies to any of the four sub-matrices $A,B,C,D$.
The bounds are found using lemma \ref{eq:order S}.
\end{proof}

We refer to this as trivial localisation as it is equivalent a non-interacting matrix, and hence results in the spin chain being split into two independent parts.
In the rest of this section, we investigate other conditions for localisation which are not trivial and occur as a result of the dynamics.

 The following Lemma \ref{less_integers} shows that the number of powers $ k $ that need to satisfy equation \eqref{loc_level_t} is finite.

\begin{Lemma} \label{less_integers}
The conditions
\begin{equation} \label{l1}
  C_{x+1} \left( D_x A_{x+1} \right)^k  C_x = 0 
  \ \ \mbox{ for all }\ \  
  k \in \{0,1,2, \ldots, 2^{4N}-1 \}\ ,    
\end{equation}
imply
\begin{equation} \label{l1}
  C_{x+1} \left( D_x A_{x+1} \right)^k  C_x = 0 
  \ \ \mbox{ for all }\ \  
  k \in \{0,1,2, \ldots \}\ .    
\end{equation}
\end{Lemma}

\begin{proof}
Suppose that the square matrix $M$ has $n$ linearly independent powers 
\begin{equation}\label{eq:powers}
  M, M^2, M^3, \ldots, M^n,  
\end{equation}
and that $M^{n+1}$ is a linear combination of (\ref{eq:powers}). 
Let us prove that for any integer $m>n$ the matrix $M^m$ is also a linear combination of (\ref{eq:powers}).
First note that our premise $M^{n+1} = \sum_{k=0}^n a_k\, M^k$ implies that
\begin{equation}
  M^{n+2} = \sum_{k=0}^n a_k\, M^{k+1} 
  = \sum_{k=0}^{n-1} a_k\, M^{k+1} + a_n \sum_{k=0}^n a_k\, M^{k} 
\end{equation}
is also a linear combination of (\ref{eq:powers}).
Now we can proceed by induction.
For any $m>n$, suppose that the matrix $M^{m}$ is a linear combination of (\ref{eq:powers}), that is $M^{m} = \sum_{k=0}^n b_k\, M^{k}$. 
Then, proceeding as before, we have
\begin{equation}
  M^{m+1} = \sum_{k=0}^n b_k\, M^{k+1}
  = \sum_{k=0}^{n-1} b_k\, M^{k+1} + b_n \sum_{k=0}^n a_k\, M^{k}\ ,
\end{equation}
which proves our claim.

Finally, we apply this result to $M= D_x A_{x+1}$, and note that, since $M$ is a square matrix of dimension $2^{2N}$, it can have at most $2^{4N}$ linearly independent powers.
\end{proof}

\begin{Theorem}\label{lem:N1loc}
For $N=1$ the conditions 
\begin{equation}
 C_{x+1} \left( D_x A_{x+1} \right)^k  C_x = 0 ,  
\end{equation}
for $k \in \{0,1,2,\ldots \}$ are implied by the two conditions 
\begin{equation} \label{loc_N=1}
C_{x+1}  C_x = 0\quad \text{  and }\quad C_{x+1}  D_x A_{x+1}  C_x = 0 \ .
\end{equation}
Furthermore the probability of this is given exactly by
\begin{equation}
\pr \{ C_{x+1} C_x = 0 ,  C_{x+1} D_x A_{x+1} C_x = 0 \}  = 0.12 \  \  , 
\end{equation}
which includes trivial localisation.
\end{Theorem}

\begin{proof}
We are concerned with the case $N=1$, then $S_x$ and $S_{x+1}$ are $4 \times 4$ symplectic matrices and the sub blocks $A,B,C,D$ are $2\times 2$ matrices.
We first note that if $C_{x} = 0$ and/or $C_{x+1} = 0$, which is trivial localisation, then it is clear that the conditions for all $k$ are satisfied.
Hence, we now focus only on the cases where $C_{x} \neq 0$ and $C_{x+1} \neq 0$.
Moreover, we note that we will only focus on the cases where $\text{Rank}(C_x) = \text{Rank}(C_{x+1}) = 1$, since if either of $C_{x}$ or $C_{x+1}$ are full rank then to satisfy $C_{x+1} C_{x} = 0$ the other of the $C$ matrices must be the zero matrix.

When $ \text{Rank}(C_{x+1}) = 1$ then $C_{x+1}^{T} J C_{x+1} = 0$, this follows from the fact that the matrix $ C_{x+1} $ has only one distinct column that is non-zero.
By the symplectic conditions, equations \eqref{eq:system}, this implies that $A_{x+1}$ is a $2 \times 2$ symplectic matrix.
This argument also applies to $C_{x}$, and so $D_{x}$ is also a $ 2 \times 2$ symplectic matrix. 

Therefore, since the product of symplectic matrices is also a symplectic matrix, for $N=1$ neglecting the cases of trivial localisation ($C_{x} = 0$ and/or $C_{x+1} = 0$) the conditions for right localisation, \eqref{loc_N=1}, become
\begin{equation}
 C_{x+1} S^k  C_x = 0 ,  
\end{equation}
where $S$ is a generic $2 \times 2$ symplectic matrix.
For all $2\times2$ symplectic matrices, there exist $\alpha, \ \beta \in \mathbb{Z}_2$ such that:
\begin{equation}
S^2 = \alpha \mathbb{I} + \beta S \ ,
\end{equation}
which can be verified by a direct check. 
So, if $C_{x+1}  C_x = 0 \text{ and } C_{x+1}  S  C_x = 0$ hold then $ C_{x+1} S^k  C_x = 0$ for all $k > 1$. 

The exact result for the probability given above for the case of $N=1$ follows from directly counting, with the aid of a computer program, the number of symplectic matrices that satisfy \eqref{loc_N=1}.
\end{proof}

In the following Lemma \ref{counterexample} we provide an explicit example showing that the conditions \eqref{loc_N=1} sufficient to ensure localisation in the case $ N=1 $ are not enough to imply \eqref{loc_level_t}, therefore \eqref{loc_N=1} does not imply localisation for $ N > 1 $.

\begin{Lemma} \label{counterexample}
In the case $ N > 1 $ the set of equations \eqref{loc_level_t} are sufficient to ensure the presence of a hard wall. For qubits, $ N = 1 $, equations \eqref{loc_N=1} imply equations \eqref{loc_level_t}. We show that for $ N > 1 $, \eqref{loc_N=1} does not imply   \eqref{loc_level_t} by explicitly constructing an example for $ N = 2 $ that also   generalizes to all $ N > 1 $. In what follows to ease the notation we set $ x =0 $.
In the following $ J_{4N} $ the symplectic form of order $ 4N $.
The definition of symplectic matrix, $ S^TJ_{4N}S=J_{4N}$, when $ S $ is written in block form 
\begin{align}
S=
 \begin{pmatrix}
  A & B  \\
  C & D 
 \end{pmatrix}
\end{align}
reads:
\begin{equation} \label{eq:system}
\begin{cases} 
   A^T J_{2N} A  + C^T J_{2N} C  = J_{2N} \\
   A^T J_{2N} B + C^T J_{2N} D = 0 \\
   B^T J_{2N} B  + D^T J_{2N} D  = J_{2N} 
\end{cases}
\end{equation}
With $ J_{2N} $ the symplectic form of order $ 2N $. A solution of the system \eqref{eq:system} is given by:
\begin{equation} \label{C_eq}
\begin{cases} 
   C^T J_{2N} C  = C J_{2N} C^T = 0  \\
   A^T J_{2N} A = J_{2N} \\
   D^T J_{2N} D = J_{2N} \\
   B= AJ_{2N}C^{T}J_{2N}D 
\end{cases}
\end{equation}
This implies that $ A $ and $ D $ are symplectic, $ B $ is determined by $ A , C, D $.

Our goal is to build $ C_0 $, $ D_0 $, $ A_1 $ and $ C_1 $ such that: $ C_1C_0=0 $, $ C_1D_0A_1C_0=0 $ but $ C_1 \left( D_0A_1 \right)^2 C_0 \neq 0 $ showing that with $ N > 1 $ the proof given above for qubits fails and the whole set of equations \eqref{loc_level_t} must be satisfied.

Let us write straight away the matrices $ S_0 $ and $ S_1 $ and then discuss their structure.

\begin{align}
 S_0=
 \renewcommand{\arraystretch}{0.7}
 \begin{pmatrix}
  1 & 0 & 0 & 0 & 0 & 0 & 0 & 0 \\
  0 & 1 & 0 & 0 & 0 & 0 & 1 & 0 \\
  0 & 0 & 1 & 0 & 0 & 0 & 0 & 0 \\
  0 & 0 & 0 & 1 & 0 & 0 & 0 & 0 \\
  1 & 0 & 0 & 0 & 0 & 0 & 0 & 1 \\
  0 & 0 & 0 & 0 & 0 & 0 & 1 & 0 \\
  0 & 0 & 0 & 0 & 1 & 0 & 0 & 0 \\
  0 & 0 & 0 & 0 & 0 & 1 & 0 & 0 \\
\end{pmatrix},
 S_1=
 \renewcommand{\arraystretch}{0.7}
 \begin{pmatrix}
  1 & 0 & 0 & 0 & 1 & 0 & 0 & 0 \\
  0 & 1 & 0 & 0 & 0 & 0 & 0 & 0 \\
  0 & 0 & 1 & 0 & 0 & 0 & 0 & 0 \\
  0 & 0 & 0 & 1 & 0 & 0 & 0 & 0 \\
  0 & 0 & 0 & 0 & 1 & 0 & 0 & 0 \\
  0 & 1 & 0 & 0 & 0 & 1 & 0 & 0 \\
  0 & 0 & 0 & 0 & 0 & 0 & 1 & 0 \\
  0 & 0 & 0 & 0 & 0 & 0 & 0 & 1 \\
 \end{pmatrix}
\end{align}
The blocks $ C_0 $ and $ C_1 $ are the  projection on $ e_1 \equiv (1,0,0,0)^T $ and $ e_2 \equiv (0,1,0,0)^T $. They satisfy $ C^T J_4 C  = C J_4 C^T = 0 $ and also $ C_1 C_0 = 0 $. To ensure $ C_1 \left( D_0A_1 \right)^2 C_0 \neq 0 $, $ \left( D_0A_1 \right)^2 $ must map $ e_1 $ into $ e_2 $, on the other hand to ensure  $ C_1D_0A_1C_0=0 $, $ D_0 A_1 $ must not map $ e_1 $ into $ e_2 $. This is achieved, for example by:
\begin{align}
D_0 A_1 =
 \begin{pmatrix}
  0_2 & J_2  \\
  \mathds{1}_2 & 0_2 \\
 \end{pmatrix},
(D_0 A_1)^2 =
 \begin{pmatrix}
  J_2 & 0_2  \\
  0_2 & J_2 \\
 \end{pmatrix} 
 =J_4
\end{align}
The matrix $ D_0 A_1 $ has been written in block form to show that this construction generalizes to higher dimensions, in fact in every dimension $ J_{2N} $ maps $ e_1 $ to $ e_2 $. At the same time $ C_0 $ and $ C_1 $ in higher dimensions are still the  projection on $ e_1  $ and $ e_2 $. As far as regards higher powers of  $ D_0 A_1 $, it is easy to see that $ (D_0 A_1)^4 = \mathds{1}$, therefore $ (D_0 A_1)^6 = (D_0 A_1)^2 $, in general $ \forall k \in \mathbb{N} $ $ (D_0 A_1)^{4k+2} = (D_0 A_1)^2 $. 
\end{Lemma}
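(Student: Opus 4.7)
Since the first two sentences of the lemma merely restate Lemma~\ref{lem:RightBlockSuffCond} and Lemma~\ref{lem:N1loc}, the real task is to disprove the implication $\eqref{loc_N=1} \Rightarrow \eqref{loc_level_t}$ when $N > 1$. My plan is to construct explicit symplectic matrices $S_0, S_1 \in \S_{2N}$ whose blocks satisfy $C_1 C_0 = 0$ and $C_1 D_0 A_1 C_0 = 0$ but $C_1 (D_0 A_1)^2 C_0 \neq 0$. Exhibiting failure at $k = 2$ is enough since conditions \eqref{loc_N=1} are precisely the $k = 0, 1$ cases of \eqref{loc_level_t}.

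First I would simplify the symplectic constraint $S^T J S = J$ by restricting to rank-one $C$ blocks, which automatically give $C^T J C = C J C^T = 0$. With this ansatz the remaining constraints decouple cleanly: the diagonal blocks $A$ and $D$ must individually be symplectic in dimension $2N$, while the off-diagonal block is pinned by $B = A J C^T J D$. Hence one may freely pick $A, D$ symplectic and $C$ rank one, and then read off $B$ to obtain a bona fide $S \in \S_{2N}$.

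Next I would take $C_0$ to be the rank-one map with image $\mathrm{span}(e_1)$ and $C_1$ to be the rank-one map with image $\mathrm{span}(e_2)$. Under this choice $C_1 C_0 = 0$ is automatic, and $C_1 (D_0 A_1)^k C_0 = 0$ holds precisely when $(D_0 A_1)^k e_1$ has no component along $e_2$. The problem thus reduces to choosing a single symplectic matrix $M = D_0 A_1$ acting on $\Z2^{2N}$ whose orbit on $e_1$ avoids $e_2$ at step $1$ but hits $e_2$ at step $2$. A natural candidate is the block matrix
\begin{equation}
 M = \begin{pmatrix} 0 & J_{2N-2} \\ \unity_{2N-2} & 0 \end{pmatrix},
\end{equation}
which is symplectic, sends $e_1 \mapsto e_{2N-1}$ in one step, and squares to a matrix whose action on the first block is $J_{2N-2}$, so that $M^2 e_1 = e_2$. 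Realising $M$ as $D_0 A_1$ is straightforward, e.g.\ by taking $D_0 = M$ and $A_1 = \unity$, and completing $B_0, B_1$ from the pinning formula.

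The main obstacle I anticipate is bookkeeping: ensuring that each of $S_0$ and $S_1$ is genuinely symplectic (not just the product $D_0 A_1$) while the desired structure of $C_0, C_1$ and the orbit of $M$ on $e_1$ are preserved. This is routine once the rank-one simplification has been applied. The generalisation from $N = 2$ to arbitrary $N > 1$ is immediate because the identity $J_{2N-2} e_1 = e_2$ holds in every dimension $N \geq 2$, so the same template extends verbatim; moreover since $M^4 = \unity$ one obtains the recurrence $M^{4j+2} = M^2$, showing that the failure of \eqref{loc_level_t} happens at infinitely many values of $k$.
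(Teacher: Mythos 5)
Your construction is essentially identical to the paper's: you restrict to rank-one $C$ blocks (which automatically satisfy $C^T J C = C J C^T = 0$ over $\Z2$), take $C_0 = e_1 e_1^T$ and $C_1 = e_2 e_2^T$, pick an antidiagonal symplectic matrix for $D_0 A_1$ that sends $e_1$ off the span of $e_2$ on the first step but onto $e_2$ on the second, realise it via $A_1 = \unity$, $D_0 = D_0 A_1$, and read off $B$ from the pinning formula $B = AJC^TJD$. For $N=2$ your $M$ coincides exactly with the paper's $D_0 A_1$, and the recurrence $M^4 = \unity \Rightarrow M^{4j+2}=M^2$ is the paper's closing remark as well. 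The one genuine slip is in your general formula: $M = \left(\begin{smallmatrix} 0 & J_{2N-2} \\ \unity_{2N-2} & 0 \end{smallmatrix}\right)$ has size $(4N-4)\times(4N-4)$, which agrees with the required $2N\times 2N$ only at $N=2$; so the claim that the "template extends verbatim" fails for $N>2$ as stated. The clean fix is to pad the $N=2$ counterexample with the identity on the remaining $2N-4$ phase-space directions, i.e.\ take
\begin{equation}
D_0 = \begin{pmatrix} 0_2 & J_2 \\ \unity_2 & 0_2 \end{pmatrix} \oplus \unity_{2N-4}, \qquad A_1 = \unity_{2N}, \qquad C_0 = e_1 e_1^T, \qquad C_1 = e_2 e_2^T ,
\end{equation}
which is symplectic because $J_{2N} = J_4 \oplus J_{2N-4}$ is preserved on each summand, and gives $D_0 A_1\, e_1 = e_3 \neq e_2$ while $(D_0 A_1)^2 e_1 = e_2$, exactly as needed; the corresponding $B_0, B_1$ are then determined by the pinning formula. (The paper's own remark that "this construction generalizes to higher dimensions" is similarly terse, so the gap is small, but your formula as written does not compile dimensionally for $N>2$.)
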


\subsection{Absence of localization with $ N \gg \log L $}

The following theorem provides an upper bound for the probability that one-sided walls appear at a particular location. 
This upper bound implies that when $N \gg \log L$ a typical circuit has no localisation.

\begin{Theorem}
\label{lem:RightBlockSuffCond}
The conditions
\begin{equation} \label{loc_level_t}
 C_{x+1} \left( D_x A_{x+1} \right)^k  C_x = 0 
 \ \ \mbox{ for all }\ \  
 k \in \{0,1,2, \ldots \}\ ,    
\end{equation}
are sufficient to prevent all right-wards propagation past position $x$ at any time. The probability that this family of constrains holds is upper-bounded by
\begin{align}
& \pr \{ C_{x+1} \left( D_x A_{x+1} \right)^k C_x = 0 ,\ \forall k \in \mathbb{N}  \}  \nonumber \\
\leq\ &\pr \{ C_{x+1} C_x = 0 \} \ \leq\  \frac{2N + 1}{(1-2^{-2N})^{2N}}\, 2^{2N - 2N^2}.  \label{G5}
\end{align}
\end{Theorem}

\begin{proof}
This proof is clearer with reference to figure \ref{fig:loc_1} and \ref{fig:loc_2}.
The condition $C_{x+1} C_x = 0 $ prevents right-wards propagation for a single time-step, however (unless $C_x = 0 $) then $A_{x+1} C_x  \neq 0 $ and hence in subsequent time steps there could be right-wards propagation.
In the next time step, the only way for possible right-ward propagation to occur, that would not be blocked by the condition $C_{x+1} C_x = 0 $, is $C_{x+1} D_ x A_{x+1} C_x$, and so the additional requirement $C_{x+1} D_ x A_{x+1} C_x = 0$ prevents right-ward propagation.
Once again the same argument applies for subsequent time-steps, and hence we require that $ C_{x+1} \left( D_x A_{x+1} \right)^k  C_x = 0 $ for $ k \ge 2 $ ($k \in \mathbb{N}$). 
The bound given in \eqref{G5} is obtained from equation \eqref{eq:CF rank} with $ k=2N $ and $ r=2 $.
% We note that thre are only $ 2^{4N^2} $ distinct $\Z2$-valued $ 2N \times 2N $ matrices. This explains the upper limit of the symbol $ \bigwedge $ in equation \eqref{G5}.
% The upper bound on $ k $ follows from the fact that the size of the vector space of $ 2N \times 2N $ matrices is $ 4N^2 $, therefore only $ 4N^2 $ powers of $ D_x A_{x+1} $, including $ k=0 $, can be linearly independent. This implies that to ensure that eq. \eqref{loc_level_t} is satisfied for all $ k $ we only need to see it satisfied for at most $ 4N^2 $ values of $ k $. For example if eq. \eqref{loc_level_t} is satisfied for all $ k \in \{ 0,1,2 \} $ and it turns out that $ \left( D_x A_{x+1} \right)^3 = \left( D_x A_{x+1} \right)^2 + \mathds{1} $ then for all powers $ k > 3 $ eq. \eqref{loc_level_t} is satisfied.
\end{proof}

\noindent\textit{Remark.} Theorem \ref{lem:RightBlockSuffCond} provides a sufficient condition. There are of course other potential conditions and mechanisms by which right-wards propagation is prevented.

\section{Discussion}
\label{sec:discussion}

\subsection{The scrambling time}
\label{sec:scrambling time}

In this section we argue that the time $t$ at which the evolution operator $W(t)$ maximally resembles a Haar unitary (Theorem \ref{thm:approx2Design}) is around the scrambling time $t_{\rm scr}$.
For this we note that there are two factors contributing to this resemblance: causality and recurrences.

{\bf Causality.} 
If $U$ is a Haar-random unitary then a local operator $A$ is mapped to a completely non-local operator $UAU^\dagger$ with high probability.
But in our model, the evolution $W(t) A W(t)^\dagger$ of a local operator $A$ is supported in its light cone, which only reaches the whole system at the scrambling time $t_{\rm scr}$.
Hence, for $W(t) A W(t)^\dagger$ to be a completely non-local operator we need $t\geq t_{\rm scr}$.
  
{\bf Recurrences.}
The powers $U^t$ of a Haar-random unitary $U\in {\rm SU}(d)$ lose their resemblance to a Haar unitary as $t$ increases.
  This can be quantified with the spectral form factor, which for a Haar unitary $U$ takes the small value $|\tr U|^2 \approx 1$, while for its powers it takes the larger value $|\tr U^t|^2 \approx t$. Specifically, we have 
\begin{align}\label{eq:form factor}
  &K_{\rm Haar}(t) =
  \int_{{\rm SU}(d)} \hspace{-7mm} dU\,
  \left| {\rm tr} U^t \right|^2 
  = \left\{ \begin{array}{ll}
    t \ \ \mbox{ if } 0< t < d\\
    d \ \ \mbox{ if } t\geq d
  \end{array} \right.\ .
\end{align}
That is, as time $t$ grows, the form factor of $U^t$ tends to that of Poisson spectrum (integrable system)
\begin{align}\label{eq:form factor 2}
  K_{\rm Poisson}(t) = d  \ \ \ \ \ \ \mbox{ for all } t > 0\ .
\end{align}
%(A related discussion for the continuous-time case is made in \cite{Huang_2019}.)
In our model the evolution operator $W(t)$ is never a Haar unitary, but its resemblance decreases as $t$ increases. 
In particular, the fact that the Clifford group is finite implies the existence of a recurrence time $t_{\rm rec}$ such that the evolution operator is trivial $W(t_{\rm rec}) = \unity$.

In summary, for $W(t)$ to maximally resemble a Haar unitary, the time $t$ should be the smallest possible to avoid recurrences, but still larger than $t_{\rm scr}$.
%Hence, we conclude that the optimal time is around the scrambling time.
This argument explains why the ``long-time ensemble" does not resemble a random unitary, as found in \cite{Huang_2019}.
By the long-time ensemble we mean the set of unitaries $\{e^{-\I Ht}: t\in \mathbb R\}$ generated by a fixed Hamiltonian $H$.

\subsection{Is Clifford dynamics integrable or chaotic?}
\label{sec:int/chao}

In this section we argue that Clifford dynamics has some of the features of quasi-free boson and fermion systems, but at the same time, it displays a stronger chaos. 
For this reason we believe that Clifford dynamics is a very interesting setup to understand the landscape of quantum many-body phenomena.
Next we enumerate essential properties of Clifford dynamics: the first two are in common with quasi-free systems and the subsequent four are not.

{\bf Phase space description and classical simulability.}
  Clifford unitaries can be represented as symplectic transformations in a phase space (in a similar fashion to quasi-free bosons) of dimension exponentially smaller than the Hilbert space. The phase space structure of the Clifford group is described in Appendix \ref{app:CliffordPhaseSpace}.  This dimensional reduction allows to efficiently simulate the evolution of any Pauli operator (and many other relevant operators) with a classical computer.

{\bf Anderson localisation.} Clifford dynamics with disorder (meaning that each gate $U_x$ in Figure~\ref{fig:CircuitFigure} is statistically independent and identically distributed) displays a strong form of localisation, reminiscent of Anderson's localisation. Until now, this strong form of localisation has only been observed in free-particle systems. However, Clifford dynamics cannot be understood in terms of free particles.

{\bf Discrete time.}
  The Clifford phase space is a vector space over a finite field, hence evolution cannot be continuous in time. That is, we can have Floquet-type but not Hamiltonian-type dynamics. 
  The dynamical maps are symplectic matrices with $\mathbb Z_2$ entries, and these cannot be diagonalised.
This lack of eigenmodes prevents us from using many tools and intuitions of quasi-free systems.
  
{\bf No particles.}
  Some specific Clifford dynamics have gliders, which is the discrete-time analog of free particles. But the typical translation-invariant Clifford dynamics consists of fractal patterns~\cite{Gutschow_2010}, and in the non-translation invariant case (i.e.~disorder) we see patterns such as those in Figure~\ref{fig:localisation}. None of these patterns can be understood in terms of free or interacting particles.

{\bf Signatures of chaos.}
If we allow for fully non-local dynamics, quasi-free bosons and fermions cannot generate a 1-design. This is because their evolution operators commute with the number operator (bosons) or the parity operator (fermions). On the contrary, in the non-local case Clifford dynamics generates a 3-design \cite{Webb_2016,Zhu_2017}.
Hence we see that despite the above mentioned similarities with quasi-free systems, Clifford matter seems to display stronger chaos. 
However, chaotic dynamics can be diagnosed by a small (absolute) value of out-of-time order correlators (OTOC) \cite{Roberts_2017}, which is not observed in the Clifford case.
In fact, for any Clifford unitary $W$ and two Pauli operators $\sigma_\u ,\sigma_\v$ the OTOC at infinite temperature takes the maximum value $|\frac 1 d \tr( \sigma_\u W\sigma_\v W^\dagger \sigma_\u W\sigma_\v W^\dagger )| = 1$.
Incidentally, a small OTOC follows from being a 4-design but not a 3-design. 

{\bf Absence of local integrals of motion.} In the translation-invariant case some Clifford models \cite{Zimboras_2020} with local interactions have fully non-local integrals of motion.
This means that each operator that commutes with the evolution operator involves couplings which do not decay with the distance and act on an extensive number of sites (unbounded wight).

%{\bf Extreme form of ``eigenstate thermalisation".} Some translation-invariant Clifford models have the property that all eigenstates have the reduced density matrix of a finite region equal to the maximally-mixed state (in the thermodynamic limit) \cite{Zimboras_2020}. This implies that any initial state evolves towards the infinite-temperature state. Alternatively, this also implies that none of the eigenstates satisfy an entanglement area law.

%\subsection{Recurence vs energy conservation}

%In \cite{Huang_2019} the evolution operator $e^{-iHt}$ of a typical chaotic Hamiltonian $H$ is considered. And it is argued that, for most values of the time $t\in \mathbb R$, the evolution operator $e^{-iHt}$ does not look like a Haar unitary. In the light of the results presented in this work we explain this fact by: (i) around the scrambling time ($t\approx t_{\rm scr}$) the operator $e^{-iHt}$ resembles a Haar unitary, and (ii), for most times $t\in \mathbb R$ the operator $e^{-iHt}$ is a large power of a unitary, and it is suggested by equations \eqref{eq:form factor} and \eqref{eq:form factor 2}, that $e^{-iHt}$ should look very different than a Haar unitary for most values of $t$. Interestingly, the fact that for most values of $t$ the operator $e^{-iHt}$ does not look like a Haar unitary, it is explained in \cite{Huang_2019} by de existence of a conserved quantity (the Hamiltonian itself). But in this work we show that the existence of a conserved quantity $W(t=1)$ does not prevent our model to be a 2-design around the scrambling time. 

\subsection{Time-dependent vs time-independent circuits} 
\label{sec:t dep}

Time-dependent local quantum circuits (see Figure \ref{fig:RandomCircuitFigure}) have been used as a model for chaotic dynamics in numerous contributions \cite{Harrow_2018, Brandao_2016, Brandao_2016a, Harrow_2009, Hunter_Jones_2019}.
It has been proven that these circuits generate approximate $k$-designs where the order increases with time as $k \sim t^{1/10}$ (although the scaling is conjectured to be $k \sim t$ \cite{Hunter_Jones_2019}).
Some authors have attempted to model chaotic systems with conserved quantities by using time-dependent local circuits constrained so that each gate commutes with an operator of the form $Q = \sum_x \sigma_z^{(x)}$, where $x$ labels all sites \cite{Khemani_2018a, Hunter_Jones_2018}.
These $Q$-conserving circuits also generate approximate $k$-designs in the operator space orthogonal to $Q$, with $k$ increasing as time passes.

We argue that the dynamics of $Q$-conserving circuits is very different from time-independent circuits like the model we are studying, Figure \ref{fig:CircuitFigure}.
Despite the fact that in both cases there are conserved quantities ($Q$ and $W(t=1)$), $Q$-conserving time-dependent circuits do not have time-correlations nor recurrences (see Section~\ref{sec:scrambling time}).
This implies that they resemble Haar unitaries more and more as time goes on. Instead, as discussed in Section~\ref{sec:scrambling time}, time-independent dynamics loses its resemblance to Haar unitaries with time.
%The fact that the conserved quantity $W(t=1)$ is not local, is not relevant for times $t>t_{\rm scr}$, since after the scrambling time, the system looses any notion of locality. This is reflected by the fact that the probability that a Pauli operator $P$ is mapped to any other Pauli operator $P' = W(t_{\rm scr}) P W(t_{\rm scr})^\dagger$ is almost uniform, and hence, independent of the locality properties of $P$ and $P'$. 

Previous works \cite{Nakata_2017, Brandao_2019} have constructed unitary designs with ``nearly time-independent'' dynamics.
This consists of an evolution where the Hamiltonian changes a small number of times, and it is time-independent in between changes.
%Specifically, a Hamiltonian with (few-body) non-local interactions, the couplings for which are randomly selected at discrete time steps, is shown to be an approximate $k$-design, with time steps required scaling linear with $k$.
A different line of work \cite{Prosen_2018_1, Prosen_2018_2, Prosen_2019, Bertini_2019, Prosen_2007} analyses disordered time-periodic dynamics with non-Clifford gates. 
These more general dynamics makes these models more chaotic than ours.
However, these works only prove that these models display certain aspects of Haar-random unitaries, instead of indistinguishability as captured by Theorem \ref{thm:approx2Design}.

\subsection{A variant of our model}

We define our model as having $L$ sites, with $N$ qubits per site, and nearest-neighbour interactions.
However, this is equivalent to say that it has $LN$ sites, with a single qubit per site, and $2N$-range interactions.
For this we use the fact that any Clifford gate of $2N$ qubits can be written as a circuit of depth $\mathcal O(N^2/ \log{N})$ \cite{Aaronson_2004, Gottesman_1998}. 
Hence, a dynamical period in the $LN$-site circuit decomposes into $\mathcal O(N^2 / \log N )$ elementary time steps.

\section{Conclusion and outlook}\label{sec:conc}

%In this work we have established the sense in which explored a physical model with

The dynamics of highly chaotic quantum systems, such as black holes \cite{Page_1993, Hayden_2007}, is often modelled with Haar-random unitaries, which allows for the exact calculation of relevant quantities.
This model is often justified by the fact that local random circuits \cite{Brandao_2016, Brandao_2016a, Harrow_2009} generate 2-designs. 
However, these circuits are time dependent, while presumably the dynamics of black holes are not \cite{Witten_1998}. 
%This dependence could come from the non-inclusion of some degrees of freedom, but in this case the dynamics would not be unitary. Hence the use of time-dependent unitary dynamics remains unjustified.
In this work we make a step forward towards the justification of the Haar-unitary model of dynamics in quantum chaotic systems, by proving that the evolution operator of a time-periodic model cannot be distinguished from a random unitary in some physically relevant setups. 
% Also, we provide evidence that  time dynamics produces sufficient scrambling to justify the Haar-unitary model for black holes?

An important question that remains open is whether local and time-independent (or time-periodic) dynamics can generate a 2-design. This amounts to not restricting the measurement in the discrimination process. The results in \cite{Prosen_2018_1, Prosen_2018_2, Prosen_2019, Bertini_2019, Prosen_2007} provide some hope in this direction.
However, we expect that the 2-design property is at best achieved around the scrambling time, and it fades away as time goes on (see discussion in Section \ref{sec:scrambling time} and in reference \cite{Brandao_2019}).
%Other questions posed by our work are the following. Is our model able to produce a 3-design? (Recall that Clifford dynamics cannot go beyond $k=3$ \cite{Webb_2016}.) 
More generally, we would like to characterise which further properties of random unitaries are present in naturally-occurring dynamics.

\section{Acknowledgments}

We are  grateful to Nick Hunter-Jones, Oliver Lunt and Arijeet Pal for valuable discussions.
Tom Farshi acknowledges financial support by the Engineering and Physical Sciences Research Council (grant number EP/L015242/1).
Daniele Toniolo and Lluis Masanes acknowledge financial support by the UK's Engineering and Physical Sciences Research Council (grant number EP/R012393/1). 
Carlos Gonz\'alez acknowledges financial support by Spanish MINECO (project MTM2017-88385-P), MECD ``Jos\'e Castillejo'' program (CAS16/00339) and Programa Propio de I+D+i of the Universidad Polit\'ecnica de Madrid.
Research at Perimeter Institute is supported in part by the Government of Canada through the Department of Innovation, Science and Economic Development Canada and by the Province of  Ontario  through  the  Ministry  of  Economic  Development, Job Creation and Trade.

\section{Data availability}
The data that give rise to Figure \ref{fig:localisation} are available from the corresponding author
upon reasonable request.

%\bibliography{bibliography}

\vspace{5mm}
%\newpage
\section*{Appendices}
\appendix

%-----------------------------------------------------------------------------------------------------------

%                                               CLIFFORD PHASE SPACE

%-----------------------------------------------------------------------------------------------------------

\section{Clifford dynamics and discrete phase space}
\label{app:CliffordPhaseSpace}
In this appendix, we first define the Pauli and Clifford groups and then present the phase-space description of Clifford dynamics. 
This description is known from previous works \cite{Koenig_2014,Aaronson_2004,Gottesman_1998} and we include it here for clarity of presentation. 

The Pauli sigma matrices together with the identity $\{ \unity, \sigma_x, \sigma_y, \sigma_z\}$ form a basis of the space of operators of one qubit $\mathbb C^2$.
Also, the sixteen matrices obtained by multiplying $\{ \unity, \sigma_x, \sigma_y, \sigma_z\}$ times the coefficients $\{1,\I,-1,-\I\}$ form a group. 
This is called the Pauli group of one qubit and it is denoted by $\mathcal P_1$.
The generalization to $n$ qubits is the following.

\begin{Definition}
The {\bf Pauli group} of $n$ qubits $\mathcal P_n$ is the set of matrices $\I^u \sigma_\u$ where
\begin{align}
  \label{eq:Pauli element}
  \sigma_\u
  =
  \bigotimes_{i=1}^n 
  (\sigma_x^{q_i} \sigma_z^{p_i}) \in {\rm U}(2^n)\ ,
\end{align}
for all phases $u \in \mathbb Z_4$ and vectors $\u = (q_1, p_1, q_2, p_2, \ldots, q_n, p_n ) \in \Z2^{2n}$.
We also define $\bar{\cal P}_n = \mathcal P_n /\{1,\I,-1,-\I\}$ which satisfies $\bar{\cal P}_n \cong \Z2^{2n}$.
\end{Definition}

\noindent
Here $\Z2^{2n}$ stands for a $2n$-dimensional vector space with addition and multiplication operations defined modulo $2$.
Using the identity $\sigma_z \sigma_x = -\sigma_x \sigma_z$ and the definition $\beta(\u, \u') = \sum_{i=1}^n p_i q_i'$ we obtain the multiplication and inverse rules
\begin{align}
  \label{eq:Pauli product}
  \sigma_\u \sigma_{\u'}
  &=
  (-1)^{\beta(\u, \u')} \sigma_{\u +\u'}\ ,
\\  \label{eq:Pauli inverse}
  \sigma_\u^{-1}
  &=
  (-1)^{\beta(\u, \u)} \sigma_\u\ .
\end{align}
The Pauli group \eqref{eq:Pauli element} is the discrete version of the Weyl group, or the \emph{displacement operators} used in quantum optics.
Concretely, if $\hat Q$ and $\hat P$ are quadrature operators (satisfying the canonical commutation relations $[\hat Q, \hat P] =\I \mathds{1} $) then we can write the analogy as
\begin{equation}
  \label{eq:analogy}
  \sigma_x^{q} \sigma_z^{p}
  \ \longleftrightarrow\ 
  e^{\I \hat P q} e^{\I\hat Q p}  \ ,
\end{equation}
where the phase space variables $(q,p)$ take values in $\Z2^2$ on the left of \eqref{eq:analogy}, and in $\mathbb R^2$ on the right.
This analogy also extends to the set of transformations that preserve the phase space structure.
Before characterizing these transformations let us  define the phase space associated to the Pauli group.

\begin{Definition}
  The {\bf discrete phase space} of $n$ qubits $\Z2^{2n}$ is the $2n$-dimensional vector space over the field $\Z2$, endowed with the symplectic (antisymmetric) bilinear form
\begin{equation}
  \label{inner product}
  \langle \u, \u'\rangle = \u^T J \u'\ ,
  \text{ where }\ 
  J = \bigoplus_{i=1}^n \left( \begin{array}{cc}
     0 & 1 \\
     1 & 0 
  \end{array}\right)\ ,
\end{equation}
for all $\u, \u' \in \mathbb Z_2^{2n}$.
Note that the form is indeed antisymmetric $\langle \u, \u'\rangle = \langle \u', \u\rangle = -\langle \u', \u\rangle \bmod 2$, which implies $\langle \u, \u\rangle =0$.
\end{Definition}

\noindent
Using the symplectic form \eqref{inner product} and the rules (\eqref{eq:Pauli product}-\eqref{eq:Pauli inverse}) we can write the commutation relations of the Pauli group as
\begin{align}
\label{eq:commutatation relation}
  \sigma_\u\, \sigma_{\u'}\, 
  \sigma_\u^{-1}\, \sigma_{\u'}^{-1} 
  =
  (-1)^{\langle \u, \u' \rangle}\ .
\end{align}
In analogy with the continuous (bosonic) phase space, in the following two definitions we introduce the transformations that preserve the symplectic form \eqref{inner product} and the Pauli group, respectively.

\begin{Definition}
\label{def:symplectic group}
The {\bf symplectic group} $\S_n$ is the set of matrices $S: \Z2^{2n} \to \Z2^{2n}$ such that
\begin{equation}
  \langle S\u, S\u' \rangle
  =
  \langle \u, \u' \rangle\ ,
\end{equation}
for all $\u, \u' \in \Z2^{2n}$.
This is equivalent to the condition $S^T J S = J \bmod 2$.
\end{Definition}

\begin{Definition}\label{def:Clifford}
  A unitary $U\in {\rm U}(2^n)$ is {\bf Clifford} if it maps each Pauli operator $\sigma\in \mathcal P_n$ to a Pauli operator $U \sigma U^\dagger \in \mathcal P_n$. Two Clifford unitaries $U,V$ are equivalent if there is a complex phase $\lambda$ with $|\lambda|=1$ such that $U= \lambda V$. The {\bf Clifford group} of $n$ qubits $\C_n$ is the set of equivalence classes of Clifford unitaries in ${\rm U}(2^n)$. 
  %after the identification $\{e^{\I \theta} U: \forall \theta \in \mathbb R\}$. with the same element of $\C_n$. In other words, $\C_n$ is the quotient of the normalizer of $\mathcal P_n$ by the group $\mathrm U(1)$.
\end{Definition}

\noindent
In this work we only consider the adjoint representation $\sigma \mapsto U\sigma U^\dagger$, hence, the phase $\lambda$ does not play any role.

\begin{Lemma}[Structure of $\C_n$]\label{lem:Clifford_phase_space}
  Each Clifford transformation $U\in \C_n$ is characterized by a symplectic matrix $S \in \S_n$ and a vector $\mathbf s \in \Z2^{2n}$ so that
\begin{equation}
  U \sigma_\u U^\dagger
  =
  \I^{\alpha[S,\u]}\, (-1)^{\langle\mathbf s, \u\rangle} \sigma_{S\u}\ ,
\end{equation}
where the function $\alpha$ takes values in $\mathbb Z_4$.
More precisely we have $\C_n \cong \bar{\cal P}_n \rtimes \S_n$.
\end{Lemma}
In this work the function $\alpha$ does not play any role, hence, we do not provide a characterization.

\begin{proof}
For each $U\in \C_n$ there are two functions 
\begin{align}
  s & : \Z2^{2n} \to \mathbb Z_4\ ,
  \\
  S & : \Z2^{2n} \to \Z2^{2n}\ ,
\end{align}  
such that
\begin{equation}
  U \sigma_\u U^\dagger = \I^{s[\u]}\, \sigma_{S[\u]}\ .
\end{equation}
Note that, at this point, we do not make any assumption about these functions, such as linearity. 
Using \eqref{eq:Pauli product} we obtain the equality between the following two expressions
\begin{align}
  \nonumber
  U \sigma_\u \sigma_{\u'} U^\dagger
  &= 
  (-1)^{\beta(\u, \u')}\, U \sigma_{\u+\u'} U^\dagger
  \\ \label{eq:ussU} &=
  (-1)^{\beta(\u, \u')}\, \I^{s[\u+\u']}\, \sigma_{S[\u+\u']}\ ,
  \\ \nonumber
  U \sigma_\u U^\dagger U \sigma_{\u'} U^\dagger
  &=
  (\I^{s[\u]}\, \sigma_{S[\u]})\, 
  (\I^{s[\u']}\, \sigma_{S[\u']})
  \\ \label{eq:usuusu} &=
  (-1)^{\beta(S\u, S\u')}\, \I^{s[\u]+s[\u']}\, \sigma_{S[\u]+S[\u']}\ ,
\end{align}
which implies the $\Z2$-linearity of the $S$ function. Hence, from now on, we write its action as a matrix $S[\u] = S\u$.
Next, if we impose the commutation relations of the Pauli group \eqref{eq:commutatation relation} as follows
\begin{align}
  \nonumber
  (-1)^{\langle \u, \u' \rangle}
  & =
  U \sigma_{\u}\, \sigma_{\u'}\,
  \sigma_{\u}^{-1}\, \sigma_{\u'}^{-1}\, U^{-1}
  \\ \nonumber & =
  (\I^{s[\u]} \sigma_{S\u})\, (\I^{s[\u']} \sigma_{S\u'})\, 
  (\I^{s[\u]} \sigma_{S\u})^{-1}\, (\I^{s[\u']} \sigma_{S\u'})^{-1}
  \\ & =
  (-1)^{\langle S\u, S\u' \rangle}\ ,
\end{align}
we find that the matrices $S$ are symplectic.
Conversely, it has been proven \cite{Gross_2019,Koenig_2014,Calderbank_1998,Gross_2006a} that for each symplectic matrix $S\in \mathcal S_n$ there is $U\in \C_n$ such that $U\sigma_\u U^\dagger \propto \sigma_{S\u}$ for all $\u$.

Now, let us obtain the set of pairs $(S,s)$ associated to the subgroup $\bar {\cal P}_n \subseteq  \C_n$.
Using \eqref{eq:commutatation relation} we see that the Clifford transformation $\sigma_\v \in \bar {\cal P}_n$ has $S=\unity$ and $s[\u] = 2\langle\v, \u\rangle$, for any $\v\in \Z2^{2n}$.
Next, let us prove the converse.
By equating \eqref{eq:ussU} and \eqref{eq:usuusu} with $S=\unity$, we see that any Clifford transformation $U$ with $S=\unity$ has a phase function $s$ satisfying 
\begin{equation}
  \label{eq:pseudolinear} 
  s[\u+\u'] = s[\u]+s[\u']\ ,
\end{equation}
for all pairs $\u, \u'$.
Also, since the map $\sigma_\u \to U\sigma_\u U^\dagger$ preserves the Hermiticity or anti-Hermiticity of $\sigma_\u$, the phase function in $U\sigma_\u U^\dagger = \I^{s[\u]} \sigma_\u$ has to satisfy $s[\u] \in \{0,2\}$ for all $\u$.
Combining this with \eqref{eq:pseudolinear} we deduce that, if $S=\unity$ then $s[\u] = 2\langle\v,\u\rangle$ for some vector $\v\in \Z2^{2n}$.
In summary, an element of the Clifford group belongs to the Pauli group if, and only if, there is a vector $\v\in \Z2^{2n}$ such that $S=\unity$ and $s[\u] = 2\langle\v,\u\rangle$.

Now let us show that $\C_n/\bar{\cal P}_n \cong \mathcal S_n$.
By definition, any Clifford element $U\bar{\cal P}_n U^\dagger \subseteq \bar{\cal P}_n$ satisfies $U \bar{\cal P}_n = \bar{\cal P}_n U$, hence $\bar{\cal P}_n \subseteq \C_n$ is a normal subgroup.
This allows us to allocate each element $U\in \C_n$ into an equivalence class $U\bar{\cal P}_n \subseteq \C_n$, and define a group operation between classes. 
In order to prove the isomorphism $\C_n/\bar{\cal P}_n \cong \mathcal S_n$, we need to check that two transformations $U,U'$ are in the same equivalence class ($\exists\, \v : U = U'\sigma_\v$) if and only if they have the same symplectic matrix $S=S'$. 
Identity \eqref{eq:commutatation relation} tells us that $U = U'\sigma_\v$ implies $S=S'$.
To prove the converse, let us assume that $U,U'$ have symplectic matrices $S=S'$.
Due to the fact $U^{-1}$ has symplectic matrix $S^{-1}$, the product $U^{-1} U'$ has symplectic matrix $S^{-1} S = \unity$.
As proven above, this implies that $U^{-1} U' \in \bar{\cal P}_n$,  and therefore both are in the same class.

Finally, for each symplectic matrix $S$ we define $\alpha[S,\u] = s[\u]$ where $s$ is the phase function of an arbitrarily chosen element in the equivalence class defined by $S$.
The phase function of the other elements in the class $S$ is $s[\u] = \alpha[S,\u] +2\langle\v, \u\rangle$ for all $\v\in \Z2^{2n}$.
\end{proof}

\section{Proof of Lemma \ref{lemma:order S}} \label{app:sympl_order}

\begin{proof}
We start considering $ \ln |\S_n| $. 
\begin{align} \nonumber
\ln |\S_n| &= \ln \left[ \prod_{i=1}^n (2^{2i}-1) \prod_{j=1}^n  2^{2j-1} \right] \\ \nonumber
&=\sum_{i=1}^n \ln(2^{2i}-1) + \sum_{j=1}^n \ln 2^{2j-1} \\  \nonumber
&=\sum_{i=1}^n \ln \left[2^{2i}(1-2^{-2i})\right] + \sum_{j=1}^n (2j-1)\ln 2 \\ \nonumber
&=\sum_{i=1}^n 2i \ln 2 + \sum_{i=1}^n \ln (1-2^{-2i}) + \sum_{j=1}^n (2j-1) \ln 2 \\ \nonumber
&=n(n+1)\ln 2 + \sum_{i=1}^n \ln (1-2^{-2i}) + n^2 \ln 2 \\ \label{log_appr}
&=n(2n+1)\ln 2 + \sum_{i=1}^n \ln (1-2^{-2i}). \\ \nonumber
\end{align}
We use $ \frac{x}{1+x} < \ln(1+x) < x $, with $x\neq 0 $ and $ x>-1$, to upper and lower bound the logarithm in \eqref{log_appr}. The corresponding bounds on $ |\S_n| $ are obtained after exponentiating
\begin{align}\nonumber
\sum_{i=1}^n \ln (1-2^{-2i}) < -\sum_{i=1}^n 2^{-2i}=  -\sum_{i=1}^n \frac{1}{4^i}= -\frac{1}{3} \left( 1-\frac{1}{4^n} \right)\\ 
\end{align}
$ b(n) $ is defined to be $ b(n) \equiv e^{-\frac{1}{3}(1-\frac{1}{4^n})} $, moreover $ b(n)< b(1) = e^{-\frac{1}{4}} < 0.78 $.

To obtain the lower bound of $ |S_n| $, from \ref{log_appr} we have:
\begin{align}\nonumber
\sum_{i=1}^n \ln (1-2^{-2i}) > -\sum_{i=1}^n \frac{2^{-2i}}{1-2^{-2i}}  
\end{align}
and $ a(n) \equiv e^{-\sum_{i=1}^n \frac{2^{-2i}}{1-2^{-2i}} } < b(n) $. We observe that: 
\begin{equation}
 a(n) \equiv e^{-\sum_{i=1}^n \frac{1}{2^{2i}-1} } \ge e^{-\frac{1}{3}\sum_{i=0}^{n-1} \frac{1}{2^{2i}} } > e^{-\frac{4}{9}} > 0.64
\end{equation}

\end{proof}

%-----------------------------------------------------------------------------------------------------------

%                       	ADDITIONAL  LEMMAS

%-----------------------------------------------------------------------------------------------------------
\section{Additional lemmas}
\label{app:AdditionalLemmas}

In this section, we include additional lemmas that are used in the proof of other results.

\begin{Lemma}
\label{lemma:subspace counting}
The number of $k$-dimensional subspaces of $\Z2^n$ is 
\begin{equation}
  \label{def:num subspaces}
  \N_k^n 
  = \prod_{i=0}^{k-1} \frac
  {2^n -2^i} {2^k -2^i}\ . 
\end{equation}
\end{Lemma}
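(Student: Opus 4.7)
The plan is to count $k$-dimensional subspaces by a standard double-counting argument: count ordered linearly independent $k$-tuples in $\Z2^n$ in two different ways.

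First I would count the number of ordered bases of $k$-dimensional subspaces directly. To build such a basis $(v_1, \ldots, v_k)$, choose $v_1$ to be any of the $2^n - 1$ nonzero vectors of $\Z2^n$; then choose $v_2 \in \Z2^n$ outside $\mathrm{span}(v_1)$, giving $2^n - 2$ options; and in general choose $v_i$ outside $\mathrm{span}(v_1, \ldots, v_{i-1})$, which has $2^{i-1}$ elements, leaving $2^n - 2^{i-1}$ admissible choices. Multiplying, the total number of ordered linearly independent $k$-tuples in $\Z2^n$ is $\prod_{i=0}^{k-1}(2^n - 2^i)$.

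Next I would fix a $k$-dimensional subspace $W \subseteq \Z2^n$ and count its ordered bases. By the same argument applied inside $W \cong \Z2^k$, the number of ordered bases of $W$ is $\prod_{i=0}^{k-1}(2^k - 2^i)$. Since every ordered linearly independent $k$-tuple is an ordered basis of a unique $k$-dimensional subspace, dividing the two counts yields
\begin{equation}
  \N_k^n = \frac{\prod_{i=0}^{k-1}(2^n - 2^i)}{\prod_{i=0}^{k-1}(2^k - 2^i)} = \prod_{i=0}^{k-1} \frac{2^n - 2^i}{2^k - 2^i}\ ,
\end{equation}
which is the claimed formula.

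There is no real obstacle here; the only minor subtlety is making sure the same counting principle (choose each new vector outside the span of the previous ones) is applied consistently both in the ambient space $\Z2^n$ and in the fixed subspace $W \cong \Z2^k$, and noting that the count of subspaces equals the ratio because every ordered independent tuple generates a unique subspace of which it is an ordered basis.
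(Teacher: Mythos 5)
Your proof is correct and follows essentially the same double-counting argument as the paper: count ordered linearly independent $k$-tuples in $\Z2^n$, then divide by the number of ordered bases of a fixed $k$-dimensional subspace, which is computed by the same span-avoidance argument inside $\Z2^k$.
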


\begin{proof}
Let us start by counting how many lists of $k$ linearly independent vectors $(\u_1, \ldots, \u_k)$ are in $\Z2^n$. 
The first vector $\u_1$ can be any element of $\Z2^n$ except the zero vector $\zero$, giving a total of $(2^n-1)$ possibilities.
Following that, $\u_2$ can be any element of $\Z2^n$ that is not contained in the subspace generated by $\u_1$, which is $\{\zero, \u_1\}$, giving $(2^n-2)$ possibilities.
Analogously, $\u_3$ can be any element of $\Z2^n$ that is not contained in the subspace generated by $\{\u_1, \u_2\}$, which is $\{\zero, \u_1, \u_2, \u_1 +\u_2\}$, giving  $(2^n-2^2)$ possibilities.
Following in this fashion we arrive at the following conclusion. The number of lists of $k$ linearly independent vectors is 
\begin{equation}
  \mathcal L_k^n
  =
  (2^n -2^0)(2^n -2^1)(2^n -2^2) \cdots (2^n - 2^{k-1})\ . 
\end{equation}  
It is important to note that many lists $(\u_1, \ldots, \u_k)$ generate the same subspace.
So, in order to obtain $\N_k^n$, we have to divide $\mathcal L_k^n$ by the number of lists which generate that same subspace.

First, we note that a list $(\u_1, \ldots, \u_n)$ is a basis of $\Z2^n$ with its vectors in a particular order.
Hence, $\mathcal L_n^n$ is the number of basis (in particular order) of $\Z2^n$. 
Second, we use the fact that the subspace of $\Z2^n$ generated by the list $(\u_1, \ldots, \u_k)$ is isomorphic to $\Z2^k$, so that, the number of basis (in a particular order) generating that subspace is $\mathcal L_k^k$.
Putting things together, we obtain $\N_k^n = \mathcal L_k^n /\mathcal L_k^k$, as in \eqref{def:num subspaces}
\end{proof}

\begin{Lemma}
\label{lemma:upperbound subspace}
Let $\N_k^n$ be the number of $k$-dimensional subspaces of $\Z2^n$; then we have
\begin{equation}
  2^{(n-k)k} 
  (1 - 2^{k-n})^k 
  \ \leq\  
  \N_k^n 
  \ \leq\ 
  2^{(n-k)k}
  \min\{2^k,4\}\ .
\end{equation}
\end{Lemma}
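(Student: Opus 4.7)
The plan is to begin from the exact product formula $\N_k^n = \prod_{i=0}^{k-1}(2^n-2^i)/(2^k-2^i)$ provided by Lemma~\ref{lemma:subspace counting}, and to factor out the dominant power of $2$ from each term. Writing each factor as $(2^n-2^i)/(2^k-2^i) = 2^{n-k}(1-2^{i-n})/(1-2^{i-k})$, one obtains the identity $\N_k^n = 2^{(n-k)k} \prod_{i=0}^{k-1}(1-2^{i-n})/(1-2^{i-k})$. This isolates the leading $2^{(n-k)k}$ and reduces the problem to estimating a product of rational factors close to $1$.

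For the lower bound, since $1-2^{i-k}\leq 1$ for all $i\in\{0,\ldots,k-1\}$, I would drop the denominators to obtain $\N_k^n \geq 2^{(n-k)k}\prod_{i=0}^{k-1}(1-2^{i-n})$. Each factor is minimized at $i=k-1$ and satisfies $1-2^{i-n}\geq 1-2^{k-1-n}\geq 1-2^{k-n}$, so $\N_k^n \geq 2^{(n-k)k}(1-2^{k-n})^k$, as claimed.

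For the upper bound, I would use $1-2^{i-n}\leq 1$ in the numerator and reindex via $j=k-i$ in the denominator to obtain $\N_k^n \leq 2^{(n-k)k}\prod_{j=1}^{k}(1-2^{-j})^{-1}$. The remaining product admits two complementary estimates. The crude one, $(1-2^{-j})^{-1}\leq 2$ for every $j\geq 1$, yields $\prod_{j=1}^{k}(1-2^{-j})^{-1}\leq 2^k$. The refined one, valid for all $k$, is the universal bound $\prod_{j=1}^{\infty}(1-2^{-j})^{-1}\leq 4$. Taking the minimum of these two recovers the stated factor $\min\{2^k,4\}$.

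The only non-routine ingredient is the numerical inequality $\prod_{j=1}^{\infty}(1-2^{-j})^{-1}\leq 4$. I would establish it by expanding $-\log(1-2^{-j})=\sum_{m\geq 1}2^{-jm}/m$ and swapping summation order to obtain $\log\prod_{j=1}^{\infty}(1-2^{-j})^{-1} = \sum_{m=1}^{\infty}1/[m(2^m-1)]$. Computing the leading terms explicitly ($1,\,1/6,\,1/21,\,1/60,\ldots$) and controlling the tail by $\sum_{m\geq M}1/[m(2^m-1)]\leq \sum_{m\geq M}2^{1-m}$ shows that the total is well below $\log 4\approx 1.386$, so exponentiating gives the claimed bound. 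This last verification is the mildly delicate step; everything else is bookkeeping on the telescoping product.
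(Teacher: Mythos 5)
Your proof is correct and follows essentially the same route as the paper: factor out $2^{(n-k)k}$ from the exact product formula, drop subtracted terms to get the bounds, and control the residual product $\prod_{j\ge1}(1-2^{-j})^{-1}$ by $\min\{2^k,4\}$. The only divergence is that the paper simply quotes the numerical value of the Euler function $\phi(1/2)\approx 0.28\ge 1/4$ to get the factor $4$, whereas you derive the bound self-containedly via the log-series $\sum_m 1/[m(2^m-1)]$ with an explicit tail estimate — a more elementary verification of the same fact.
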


\begin{proof}
Taking Lemma \ref{lemma:subspace counting} and neglecting the negative terms in the numerator gives
\begin{align}
  \N_k^n 
  &= 
  \prod_{i=0}^{k-1} 
  \frac {2^n -2^i} {2^k -2^i}
  \leq
  \prod_{i=0}^{k-1} 
  \frac {2^n} {2^k -2^i}
  \\ &=
  \frac{2^{nk}}{2^{k^{2}}} 
  \prod_{i=0}^{k-1} \frac
  {1} {1 -2^{i-k}}
  =
  2^{(n-k)k}   
  \prod_{j=1}^{k} 
  \frac 1 {1 - 2^{-j}}
  \\ &\leq 
  2^{(n-k)k}   
  \prod_{j=1}^{\infty} 
  \frac 1 {1 - 2^{-j}}\ ,
\end{align}
where in the last inequality we have extended the product to infinity.
It turns out that this infinite product is the inverse of Euler's function $\phi$ evaluated at $1/2$, which has the value
\begin{equation}
  \phi(1/2)
  =
  \prod_{j=1}^{\infty} 
  (1 - 2^{-j})
  \approx 
  .28
  \geq \frac 1 4\ .  
\end{equation}
Combining the two above inequalities we obtain
\begin{equation}
  \N_k^n 
  \leq
  2^{(n-k)k}
  4\ .
\end{equation}
For the cases where $k=0,1$, we can improve this bound. 
When $k=0$ the coefficient is 1 by definition, and when $k=1$ the product $\prod_{i=0}^{k-1} (1 -2^{i-k})^{-1}$ evaluates to 2.
Hence, for $k=0,1$ we can replace 4 by $2^k$, and therefore this improvement is captured concisely by changing 4 to $\min \{ 2^k , 4 \}$.

We obtain the lower bound by instead neglecting the negative terms in the denominator
\begin{equation}
    \N_k^n \geq \prod_{i=0}^{k-1} \frac{2^n - 2^i}{2^k} = \frac{2^{nk}}{2^{k^2}} \prod_{i=0}^{k-1} (1 - 2^{i-n}) \ .
\end{equation}
The remaining product can be bounded using by
\begin{equation}
     \prod_{i=0}^{k-1} (1 - 2^{i-n}) 
     \geq  
     \prod_{i=0}^{k-1} (1 - 2^{k-n})  
     \geq 
     (1 - 2^{k-n})^k \ ,
\end{equation}
since $n \geq k>i$, and hence we get the final lower bound. 
\end{proof}

\begin{Lemma}
\label{lem:BinomialUpper}
The binomial coefficient can be bounded by
\begin{equation}
    \binom{k+r-1}{k} < \left( 1+r  \right)^k \leq (2r)^k \ .
\end{equation}
\end{Lemma}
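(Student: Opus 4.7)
The plan is to establish both inequalities by elementary term-by-term comparison from the factorial form of the binomial coefficient, with no combinatorial identities needed beyond the standard definition.

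First I would rewrite the binomial coefficient as a product:
\begin{equation*}
  \binom{k+r-1}{k} = \frac{(k+r-1)!}{k!\,(r-1)!} = \prod_{i=1}^{k}\frac{r+i-1}{i}.
\end{equation*}
Next I would verify the pointwise bound $\frac{r+i-1}{i} \leq r+1$ for all integers $i\geq 1$ and $r\geq 1$. Clearing denominators, this is equivalent to $r+i-1 \leq i(r+1) = ir+i$, i.e.\ $r(i-1) \geq -1$, which holds trivially whenever $r\geq 0$ and $i\geq 1$. Multiplying these $k$ inequalities gives
\begin{equation*}
  \binom{k+r-1}{k} \leq \prod_{i=1}^{k}(r+1) = (1+r)^k,
\end{equation*}
which is the first desired inequality.

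For the second inequality $(1+r)^k \leq (2r)^k$, I would simply note that raising to the $k$-th power is monotone on non-negative reals, so it reduces to $1+r \leq 2r$, i.e.\ $r\geq 1$. This assumption is satisfied throughout the paper, since the $r$ to which this lemma is applied (in Lemmas~\ref{le:diagonal} and~\ref{le:CCCCu=0}) is the number of independently sampled symplectic matrices in a product, hence a positive integer.

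There is essentially no obstacle in this argument; the only minor subtlety is to choose the per-factor bound $\frac{r+i-1}{i}\leq r+1$ rather than a loose Stirling-type estimate of the form $\binom{k+r-1}{k}\leq \bigl(\tfrac{e(k+r-1)}{k}\bigr)^{k}$, which would yield a constant and extra $k$-dependence that we cannot afford inside the subsequent rank-decay bounds. With the clean factor-wise estimate, the proof is two lines.
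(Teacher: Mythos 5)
Your proof is correct, and it takes a more direct route than the paper's. Both arguments begin from the same factorization $\binom{k+r-1}{k}=\prod_{i=1}^{k}\frac{r+i-1}{i}$, but you then bound each factor pointwise by $r+1$ (a one-line algebraic check), whereas the paper first bounds each factor by $1+\frac{r}{i}$ and then passes through the AM--GM inequality to obtain the intermediate estimate $\bigl(1+\tfrac{r H_k}{k}\bigr)^{k}$ (with $H_k$ the $k$-th harmonic number) before relaxing $H_k\leq k$ to reach $(1+r)^k$. Your version dispenses with that detour entirely and is arguably cleaner; the paper's intermediate harmonic-number bound is strictly sharper but is never exploited downstream, so for the lemma as stated the two routes yield identical conclusions. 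You also make the assumption $r\geq 1$ for the final step $(1+r)^k\leq(2r)^k$ explicit, which the paper leaves implicit; this is indeed satisfied in every application (Lemmas~\ref{le:diagonal} and~\ref{le:CCCCu=0}, where $r$ counts independently sampled matrices). One small remark: the paper's displayed factor $\frac{r+k-i}{i}$ is only pointwise dominated by $1+\frac{r}{i}$ for $i\geq k/2$, so it should be read with the indexing reversed to $\frac{r+i-1}{i}$, exactly as you wrote it.
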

\begin{proof}
We start with the bound 
\begin{equation}
    \binom{k+r-1}{k} = \prod_{i=1}^{k} \frac{r+k-i}{i} < \prod_{i=1}^{k} \left(1 + \frac{r}{i} \right)  \ .
\end{equation}
This follows from:
\begin{align}\label{first}
 \prod_{i=1}^k \frac{r+k-i}{i} = \frac{1}{k!} \prod_{i=1}^k (r+k-i) = \frac{1}{k!}(r+k-1)(r+k-2) \ldots r
\end{align}
\begin{align}\label{second}
 \prod_{i=1}^k \frac{r+i}{i} = \frac{1}{k!} \prod_{i=1}^k (r+i) = \frac{1}{k!}(r+k)(r+k-1) \ldots (r+1)
\end{align}
The order of the factors in the product in \eqref{second} has been inverted. It is easy to see by inspection that \eqref{first} lower bound \eqref{second}.
Further bounding we get:
\begin{align}
    \binom{k+r-1}{k}  <  \prod_{i=1}^{k} \left(1 + \frac{r}{i} \right) \leq (1+r)^k \leq (2 r)^k \ . 
\end{align}
\end{proof}

\begin{Lemma}
\label{S block permutation}
For any given $S\in \S_{2n}$ written in block form 
\begin{equation}
    S
    =
    \left( \begin{array}{cc}
     A & B \\
     C & D 
\end{array}\right) ,
\end{equation}
according to the decomposition $\Z2^{4n} = \Z2^{2n} \oplus \Z2^{2n}$, then
\begin{equation}
\begin{pmatrix}
     B & A \\
     D & C 
\end{pmatrix}
\ , \quad
\begin{pmatrix}
     C & D \\
     A & B 
\end{pmatrix}
\ , \text{ and} \quad
\begin{pmatrix}
     D & C \\
     B & A 
\end{pmatrix}
\ ,
\end{equation}
are all also symplectic matrices.
\end{Lemma}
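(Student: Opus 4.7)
The plan is to exhibit each of the three matrices in the statement as a product of $S$ with a fixed ``swap'' matrix $P \in \S_{2n}$, and then conclude by the group property of $\S_{2n}$. Concretely, I would define
\begin{equation}
P \;=\; \begin{pmatrix} 0 & \unity_{2n} \\ \unity_{2n} & 0 \end{pmatrix}
\end{equation}
in the block decomposition $\Z2^{4n} = \Z2^{2n} \oplus \Z2^{2n}$. A direct block multiplication then gives
\begin{equation}
SP = \begin{pmatrix} B & A \\ D & C \end{pmatrix}, \qquad PS = \begin{pmatrix} C & D \\ A & B \end{pmatrix}, \qquad PSP = \begin{pmatrix} D & C \\ B & A \end{pmatrix},
\end{equation}
so the three matrices in the lemma are exactly $SP$, $PS$, and $PSP$.

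The only step that requires checking is that $P$ itself lies in $\S_{2n}$. In the chosen coordinate ordering $\u = (q_1, p_1, \ldots, q_{2n}, p_{2n})$, the symplectic form $J_{4n}$ splits as $J_{4n} = J_{2n} \oplus J_{2n}$ with respect to the same decomposition $\Z2^{4n} = \Z2^{2n} \oplus \Z2^{2n}$, because each half of the phase space carries its own full copy of the single-site form $J_{2n}$. A short block calculation then yields $P^T J_{4n} P = J_{4n}$, i.e., $P \in \S_{2n}$.

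Once $P \in \S_{2n}$ is established, closure of the symplectic group under multiplication immediately gives $SP, PS, PSP \in \S_{2n}$, proving the lemma. Conceptually the content is simply that swapping the two halves of the phase space is itself a symplectic operation (it is the $N$-qubit analogue of the SWAP gate acting between the two sites), so there is no real obstacle. The only point to be careful about is that the block-diagonal form of $J_{4n}$ in the chosen ordering genuinely makes $P$ symplectic; if one had decomposed $\Z2^{4n}$ in a way that mixed $q$'s and $p$'s across the two blocks, the analogous statement would fail, so it is worth spelling out that the two ``sites'' in the decomposition are grouped as consecutive qubits.
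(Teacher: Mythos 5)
Your proof is correct and takes essentially the same approach as the paper: the paper introduces the same swap matrix $M=\begin{pmatrix}0&\unity\\ \unity&0\end{pmatrix}$ and obtains the three permuted blocks as $MS$, $SM$, $MSM$, concluding by closure of $\S_{2n}$. You additionally spell out the block products and verify $P^TJP=J$ explicitly (using $J_{4n}=J_{2n}\oplus J_{2n}$), which the paper leaves implicit; that is a harmless expansion, not a different route.
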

\begin{proof}
Using the symplectic matrix
\begin{equation}
 M = 
    \begin{pmatrix}
     0 & \mathbbm{1} \\
     \mathbbm{1} & 0 
\end{pmatrix}\ ,
\end{equation}
we show, using the result for the product of symplectic matrices, that the three permuted versions of $S$ are also symplectic matrices via $MS$, $SM$, and $MSM$. 
\end{proof}

\begin{Lemma}
\label{eq:C=0iffB=0}
For any given $S\in \S_{2n}$ written in block form 
\begin{equation}
    S
    =
    \left( \begin{array}{cc}
     A & B \\
     C & D 
\end{array}\right) ,
\end{equation}
according to the decomposition $\Z2^{4n} = \Z2^{2n} \oplus \Z2^{2n}$, the following two properties hold:
\begin{align}
    B = 0 & \iff C = 0 \ ,  \\
    A = 0 & \iff D = 0 \ .
\end{align}
\end{Lemma}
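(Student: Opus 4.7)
The plan is to exploit the defining relation $S^T J S = J$ for symplectic matrices, written out in block form. Since $S \in \S_{2n}$ acts on the decomposition $\Z2^{4n} = \Z2^{2n} \oplus \Z2^{2n}$, the symplectic form $J$ on $\Z2^{4n}$ factors as $J_{2n} \oplus J_{2n}$, and the equation $S^T J S = J$ becomes the four block identities
\begin{align}
A^T J_{2n} A + C^T J_{2n} C &= J_{2n}, \nonumber \\
B^T J_{2n} B + D^T J_{2n} D &= J_{2n}, \nonumber \\
A^T J_{2n} B + C^T J_{2n} D &= 0, \nonumber \\
B^T J_{2n} A + D^T J_{2n} C &= 0. \nonumber
\end{align}

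First I would establish the forward direction of the first equivalence. Assume $C=0$. Then the first identity reduces to $A^T J_{2n} A = J_{2n}$, which forces $A$ to be invertible (since $J_{2n}$ is invertible). Substituting $C=0$ into the third identity gives $A^T J_{2n} B = 0$; invertibility of both $A$ and $J_{2n}$ then forces $B=0$. The converse ($B=0 \Rightarrow C=0$) proceeds symmetrically: the second identity yields $D^T J_{2n} D = J_{2n}$, so $D$ is invertible, and the fourth identity $D^T J_{2n} C = 0$ forces $C=0$.

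The second equivalence is handled by the exact same argument with the roles of the off-diagonal and diagonal blocks swapped. Assuming $A=0$, the first identity yields $C^T J_{2n} C = J_{2n}$, so $C$ is invertible; then the third identity $C^T J_{2n} D = 0$ forces $D=0$. The converse is obtained symmetrically from the second and fourth identities.

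No step is expected to pose a serious obstacle, since this is a routine consequence of the symplectic relations. The only point requiring some care is correctly identifying the block-diagonal structure of $J$ under the $\Z2^{2n}\oplus\Z2^{2n}$ decomposition, and noting that each $J_{2n}$ is its own inverse (over $\Z2$), so invertibility of a product $X^T J_{2n} Y = 0$ with $X$ invertible immediately yields $Y=0$.
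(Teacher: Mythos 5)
Your proof is correct. It proceeds differently from the paper's: you write out $S^T J S = J$ directly in block form, using that $J$ restricts block-diagonally as $J_{2n}\oplus J_{2n}$, and then read off the four implications by substitution, noting that $X^T J_{2n} X = J_{2n}$ forces $X$ to be invertible and that an invertible $X$ in $X^T J_{2n} Y = 0$ forces $Y = 0$. The paper instead argues column by column via the generating algorithm of Lemma~\ref{lemma:algorithm}: with $C=0$ the first $2n$ columns of $S$ make $A$ symplectic, and since the cross-columns of any symplectic basis must pair to zero one gets $\langle a_i, b_j\rangle = 0$ for all $i,j$, hence $B=0$; the remaining three implications are then obtained by Lemma~\ref{S block permutation}. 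Your matrix-identity approach is more self-contained (it does not rely on the algorithmic lemma or the permutation lemma), and it treats all four implications uniformly. The underlying algebraic content is the same.
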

\begin{proof}
First, we consider the single case where $C = 0$. 
Following the algorithm for generating a symplectic matrix in Lemma \ref{lemma:algorithm}, we see that $A$ must be ($2n\times2n$) symplectic matrix.
Hence, any choice for the columns of $B$ will have symplectic form of one with at least one column of the matrix $A$.
Therefore, to fulfil the symplectic constraints for the entire matrix $S$, the corresponding column of D must have symplectic form of one with a column of C. 
However, this is not possible since $C = 0$, therefore $B = 0$.
Finally, by Lemma \ref{S block permutation} this argument applies to each block.
\end{proof}

\bibliography{bibliography}

%merlin.mbs apsrev4-1.bst 2010-07-25 4.21a (PWD, AO, DPC) hacked
%Control: key (0)
%Control: author (8) initials jnrlst
%Control: editor formatted (1) identically to author
%Control: production of article title (-1) disabled
%Control: page (0) single
%Control: year (1) truncated
%Control: production of eprint (0) enabled
\begin{thebibliography}{69}%
\makeatletter
\providecommand \@ifxundefined [1]{%
 \@ifx{#1\undefined}
}%
\providecommand \@ifnum [1]{%
 \ifnum #1\expandafter \@firstoftwo
 \else \expandafter \@secondoftwo
 \fi
}%
\providecommand \@ifx [1]{%
 \ifx #1\expandafter \@firstoftwo
 \else \expandafter \@secondoftwo
 \fi
}%
\providecommand \natexlab [1]{#1}%
\providecommand \enquote  [1]{``#1''}%
\providecommand \bibnamefont  [1]{#1}%
\providecommand \bibfnamefont [1]{#1}%
\providecommand \citenamefont [1]{#1}%
\providecommand \href@noop [0]{\@secondoftwo}%
\providecommand \href [0]{\begingroup \@sanitize@url \@href}%
\providecommand \@href[1]{\@@startlink{#1}\@@href}%
\providecommand \@@href[1]{\endgroup#1\@@endlink}%
\providecommand \@sanitize@url [0]{\catcode `\\12\catcode `\$12\catcode
  `\&12\catcode `\#12\catcode `\^12\catcode `\_12\catcode `\%12\relax}%
\providecommand \@@startlink[1]{}%
\providecommand \@@endlink[0]{}%
\providecommand \url  [0]{\begingroup\@sanitize@url \@url }%
\providecommand \@url [1]{\endgroup\@href {#1}{\urlprefix }}%
\providecommand \urlprefix  [0]{URL }%
\providecommand \Eprint [0]{\href }%
\providecommand \doibase [0]{http://dx.doi.org/}%
\providecommand \selectlanguage [0]{\@gobble}%
\providecommand \bibinfo  [0]{\@secondoftwo}%
\providecommand \bibfield  [0]{\@secondoftwo}%
\providecommand \translation [1]{[#1]}%
\providecommand \BibitemOpen [0]{}%
\providecommand \bibitemStop [0]{}%
\providecommand \bibitemNoStop [0]{.\EOS\space}%
\providecommand \EOS [0]{\spacefactor3000\relax}%
\providecommand \BibitemShut  [1]{\csname bibitem#1\endcsname}%
\let\auto@bib@innerbib\@empty
%</preamble>
\bibitem [{\citenamefont {Giannoni~(Editor)}\ \emph {et~al.}(1991)\citenamefont
  {Giannoni~(Editor)}, \citenamefont {Voros~(Editor)},\ and\ \citenamefont
  {{Zinn-Justin (Editor)}}}]{LesHouches_1989}%
  \BibitemOpen
  \bibfield  {author} {\bibinfo {author} {\bibfnamefont {M.~J.}\ \bibnamefont
  {Giannoni~(Editor)}}, \bibinfo {author} {\bibfnamefont {A.}~\bibnamefont
  {Voros~(Editor)}}, \ and\ \bibinfo {author} {\bibfnamefont {J.}~\bibnamefont
  {{Zinn-Justin (Editor)}}},\ }\href@noop {} {\emph {\bibinfo {title} {Chaos
  and Quantum Physics: {{Proceedings}} of the Les Houches Summer School
  1989}}}\ (\bibinfo  {publisher} {{Elsevier}},\ \bibinfo {year}
  {1991})\BibitemShut {NoStop}%
\bibitem [{\citenamefont {Short}\ and\ \citenamefont
  {Farrelly}(2012)}]{Short_2012}%
  \BibitemOpen
  \bibfield  {author} {\bibinfo {author} {\bibfnamefont {A.~J.}\ \bibnamefont
  {Short}}\ and\ \bibinfo {author} {\bibfnamefont {T.~C.}\ \bibnamefont
  {Farrelly}},\ }\href {\doibase 10.1088/1367-2630/14/1/013063} {\bibfield
  {journal} {\bibinfo  {journal} {New Journal of Physics}\ }\textbf {\bibinfo
  {volume} {14}},\ \bibinfo {pages} {013063} (\bibinfo {year}
  {2012})}\BibitemShut {NoStop}%
\bibitem [{\citenamefont {Rigol}\ \emph {et~al.}(2008)\citenamefont {Rigol},
  \citenamefont {Dunjko},\ and\ \citenamefont {Olshanii}}]{Rigol_2008}%
  \BibitemOpen
  \bibfield  {author} {\bibinfo {author} {\bibfnamefont {M.}~\bibnamefont
  {Rigol}}, \bibinfo {author} {\bibfnamefont {V.}~\bibnamefont {Dunjko}}, \
  and\ \bibinfo {author} {\bibfnamefont {M.}~\bibnamefont {Olshanii}},\ }\href
  {\doibase 10.1038/nature06838} {\bibfield  {journal} {\bibinfo  {journal}
  {Nature}\ }\textbf {\bibinfo {volume} {452}},\ \bibinfo {pages} {854}
  (\bibinfo {year} {2008})}\BibitemShut {NoStop}%
\bibitem [{\citenamefont {D'Alessio}\ \emph {et~al.}(2016)\citenamefont
  {D'Alessio}, \citenamefont {Kafri}, \citenamefont {Polkovnikov},\ and\
  \citenamefont {Rigol}}]{D'Alessio_review_2016}%
  \BibitemOpen
  \bibfield  {author} {\bibinfo {author} {\bibfnamefont {L.}~\bibnamefont
  {D'Alessio}}, \bibinfo {author} {\bibfnamefont {Y.}~\bibnamefont {Kafri}},
  \bibinfo {author} {\bibfnamefont {A.}~\bibnamefont {Polkovnikov}}, \ and\
  \bibinfo {author} {\bibfnamefont {M.}~\bibnamefont {Rigol}},\ }\href
  {\doibase 10.1080/00018732.2016.1198134} {\bibfield  {journal} {\bibinfo
  {journal} {Advances in Physics}\ }\textbf {\bibinfo {volume} {65}},\ \bibinfo
  {pages} {239} (\bibinfo {year} {2016})}\BibitemShut {NoStop}%
\bibitem [{\citenamefont {Deutsch}\ \emph {et~al.}(2013)\citenamefont
  {Deutsch}, \citenamefont {Li},\ and\ \citenamefont {Sharma}}]{Deutsch_2013}%
  \BibitemOpen
  \bibfield  {author} {\bibinfo {author} {\bibfnamefont {J.~M.}\ \bibnamefont
  {Deutsch}}, \bibinfo {author} {\bibfnamefont {H.}~\bibnamefont {Li}}, \ and\
  \bibinfo {author} {\bibfnamefont {A.}~\bibnamefont {Sharma}},\ }\href
  {\doibase 10.1103/PhysRevE.87.042135} {\bibfield  {journal} {\bibinfo
  {journal} {Physical Review E}\ }\textbf {\bibinfo {volume} {87}},\ \bibinfo
  {pages} {042135} (\bibinfo {year} {2013})}\BibitemShut {NoStop}%
\bibitem [{\citenamefont {Turner}\ \emph {et~al.}(2018)\citenamefont {Turner},
  \citenamefont {Michailidis}, \citenamefont {Abanin}, \citenamefont {Serbyn},\
  and\ \citenamefont {Papic}}]{Turner_2018}%
  \BibitemOpen
  \bibfield  {author} {\bibinfo {author} {\bibfnamefont {C.~J.}\ \bibnamefont
  {Turner}}, \bibinfo {author} {\bibfnamefont {A.~A.}\ \bibnamefont
  {Michailidis}}, \bibinfo {author} {\bibfnamefont {D.~A.}\ \bibnamefont
  {Abanin}}, \bibinfo {author} {\bibfnamefont {M.}~\bibnamefont {Serbyn}}, \
  and\ \bibinfo {author} {\bibfnamefont {Z.}~\bibnamefont {Papic}},\ }\href
  {\doibase 10.1038/s41567-018-0137-5} {\bibfield  {journal} {\bibinfo
  {journal} {Nature Physics}\ }\textbf {\bibinfo {volume} {14}},\ \bibinfo
  {pages} {745} (\bibinfo {year} {2018})}\BibitemShut {NoStop}%
\bibitem [{\citenamefont {Moudgalya}\ \emph {et~al.}(2021)\citenamefont
  {Moudgalya}, \citenamefont {Bernevig},\ and\ \citenamefont
  {Regnault}}]{Bernevig_2021}%
  \BibitemOpen
  \bibfield  {author} {\bibinfo {author} {\bibfnamefont {S.}~\bibnamefont
  {Moudgalya}}, \bibinfo {author} {\bibfnamefont {B.~A.}\ \bibnamefont
  {Bernevig}}, \ and\ \bibinfo {author} {\bibfnamefont {N.}~\bibnamefont
  {Regnault}},\ }\href@noop {} {\enquote {\bibinfo {title} {Quantum many-body
  scars and hilbert space fragmentation: A review of exact results},}\ }
  (\bibinfo {year} {2021}),\ \Eprint {http://arxiv.org/abs/2109.00548}
  {arXiv:2109.00548 [cond-mat.str-el]} \BibitemShut {NoStop}%
\bibitem [{\citenamefont {Essler}\ and\ \citenamefont
  {Fagotti}(2016)}]{Essler_review_2016}%
  \BibitemOpen
  \bibfield  {author} {\bibinfo {author} {\bibfnamefont {F.~H.~L.}\
  \bibnamefont {Essler}}\ and\ \bibinfo {author} {\bibfnamefont
  {M.}~\bibnamefont {Fagotti}},\ }\href {\doibase
  10.1088/1742-5468/2016/06/064002} {\bibfield  {journal} {\bibinfo  {journal}
  {Journal of Statistical Mechanics: Theory and Experiment}\ }\textbf {\bibinfo
  {volume} {2016}},\ \bibinfo {pages} {064002} (\bibinfo {year}
  {2016})}\BibitemShut {NoStop}%
\bibitem [{\citenamefont {Imbrie}(2016)}]{Imbrie_2016}%
  \BibitemOpen
  \bibfield  {author} {\bibinfo {author} {\bibfnamefont {J.~Z.}\ \bibnamefont
  {Imbrie}},\ }\href {\doibase 10.1007/s10955-016-1508-x} {\bibfield  {journal}
  {\bibinfo  {journal} {Journal of Statistical Physics}\ }\textbf {\bibinfo
  {volume} {163}},\ \bibinfo {pages} {998} (\bibinfo {year}
  {2016})}\BibitemShut {NoStop}%
\bibitem [{\citenamefont {Maldacena}(1999)}]{Maldacena_1999}%
  \BibitemOpen
  \bibfield  {author} {\bibinfo {author} {\bibfnamefont {J.}~\bibnamefont
  {Maldacena}},\ }\href {\doibase 10.1023/A:1026654312961} {\bibfield
  {journal} {\bibinfo  {journal} {International Journal of Theoretical
  Physics}\ }\textbf {\bibinfo {volume} {38}},\ \bibinfo {pages} {1113}
  (\bibinfo {year} {1999})}\BibitemShut {NoStop}%
\bibitem [{\citenamefont {Hayden}\ and\ \citenamefont
  {Preskill}(2007)}]{Hayden_2007}%
  \BibitemOpen
  \bibfield  {author} {\bibinfo {author} {\bibfnamefont {P.}~\bibnamefont
  {Hayden}}\ and\ \bibinfo {author} {\bibfnamefont {J.}~\bibnamefont
  {Preskill}},\ }\href {\doibase 10.1088/1126-6708/2007/09/120} {\bibfield
  {journal} {\bibinfo  {journal} {Journal of High Energy Physics}\ }\textbf
  {\bibinfo {volume} {2007}},\ \bibinfo {pages} {120} (\bibinfo {year}
  {2007})}\BibitemShut {NoStop}%
\bibitem [{\citenamefont {Maldacena}\ and\ \citenamefont
  {Susskind}(2013)}]{Maldacena_Susskind_2013}%
  \BibitemOpen
  \bibfield  {author} {\bibinfo {author} {\bibfnamefont {J.}~\bibnamefont
  {Maldacena}}\ and\ \bibinfo {author} {\bibfnamefont {L.}~\bibnamefont
  {Susskind}},\ }\href {\doibase 10.1002/prop.201300020} {\bibfield  {journal}
  {\bibinfo  {journal} {Fortschritte der Physik}\ }\textbf {\bibinfo {volume}
  {61}},\ \bibinfo {pages} {781} (\bibinfo {year} {2013})}\BibitemShut
  {NoStop}%
\bibitem [{\citenamefont {Prosen}(2007)}]{Prosen_2007}%
  \BibitemOpen
  \bibfield  {author} {\bibinfo {author} {\bibfnamefont {T.}~\bibnamefont
  {Prosen}},\ }\href {\doibase 10.1088/1751-8113/40/28/s02} {\bibfield
  {journal} {\bibinfo  {journal} {Journal of Physics A: Mathematical and
  Theoretical}\ }\textbf {\bibinfo {volume} {40}},\ \bibinfo {pages} {7881}
  (\bibinfo {year} {2007})}\BibitemShut {NoStop}%
\bibitem [{\citenamefont {Caux}\ and\ \citenamefont
  {Mossel}(2011)}]{Caux_2011}%
  \BibitemOpen
  \bibfield  {author} {\bibinfo {author} {\bibfnamefont {J.-S.}\ \bibnamefont
  {Caux}}\ and\ \bibinfo {author} {\bibfnamefont {J.}~\bibnamefont {Mossel}},\
  }\href {\doibase 10.1088/1742-5468/2011/02/p02023} {\bibfield  {journal}
  {\bibinfo  {journal} {Journal of Statistical Mechanics: Theory and
  Experiment}\ }\textbf {\bibinfo {volume} {2011}},\ \bibinfo {pages} {P02023}
  (\bibinfo {year} {2011})}\BibitemShut {NoStop}%
\bibitem [{\citenamefont {Scaramazza}\ \emph {et~al.}(2016)\citenamefont
  {Scaramazza}, \citenamefont {Shastry},\ and\ \citenamefont
  {Yuzbashyan}}]{Yuzbashyan_2016}%
  \BibitemOpen
  \bibfield  {author} {\bibinfo {author} {\bibfnamefont {J.~A.}\ \bibnamefont
  {Scaramazza}}, \bibinfo {author} {\bibfnamefont {B.~S.}\ \bibnamefont
  {Shastry}}, \ and\ \bibinfo {author} {\bibfnamefont {E.~A.}\ \bibnamefont
  {Yuzbashyan}},\ }\href {\doibase 10.1103/PhysRevE.94.032106} {\bibfield
  {journal} {\bibinfo  {journal} {Physical Review E}\ }\textbf {\bibinfo
  {volume} {94}},\ \bibinfo {pages} {032106} (\bibinfo {year}
  {2016})}\BibitemShut {NoStop}%
\bibitem [{\citenamefont {Haake}(2010)}]{Haake_book}%
  \BibitemOpen
  \bibfield  {author} {\bibinfo {author} {\bibfnamefont {F.}~\bibnamefont
  {Haake}},\ }\href@noop {} {\emph {\bibinfo {title} {Quantum Signatures of
  Chaos}}},\ \bibinfo {edition} {3rd}\ ed.\ (\bibinfo  {publisher}
  {{Springer}},\ \bibinfo {year} {2010})\BibitemShut {NoStop}%
\bibitem [{\citenamefont {Simon}(1996)}]{Simon_book_repr}%
  \BibitemOpen
  \bibfield  {author} {\bibinfo {author} {\bibfnamefont {B.}~\bibnamefont
  {Simon}},\ }\href@noop {} {\emph {\bibinfo {title} {Representations of Finite
  and Compact Groups}}}\ (\bibinfo  {publisher} {{AMS}},\ \bibinfo {year}
  {1996})\BibitemShut {NoStop}%
\bibitem [{\citenamefont {Kos}\ \emph {et~al.}(2018)\citenamefont {Kos},
  \citenamefont {Ljubotina},\ and\ \citenamefont {Prosen}}]{Prosen_2018_1}%
  \BibitemOpen
  \bibfield  {author} {\bibinfo {author} {\bibfnamefont {P.}~\bibnamefont
  {Kos}}, \bibinfo {author} {\bibfnamefont {M.}~\bibnamefont {Ljubotina}}, \
  and\ \bibinfo {author} {\bibfnamefont {T.}~\bibnamefont {Prosen}},\ }\href
  {\doibase 10.1103/PhysRevX.8.021062} {\bibfield  {journal} {\bibinfo
  {journal} {Physical Review X}\ }\textbf {\bibinfo {volume} {8}},\ \bibinfo
  {pages} {021062} (\bibinfo {year} {2018})}\BibitemShut {NoStop}%
\bibitem [{\citenamefont {Mondal}\ and\ \citenamefont
  {Shukla}(2019)}]{Mondal_2019}%
  \BibitemOpen
  \bibfield  {author} {\bibinfo {author} {\bibfnamefont {T.}~\bibnamefont
  {Mondal}}\ and\ \bibinfo {author} {\bibfnamefont {P.}~\bibnamefont
  {Shukla}},\ }\href {\doibase 10.1103/PhysRevE.99.022124} {\bibfield
  {journal} {\bibinfo  {journal} {Phys. Rev. E}\ }\textbf {\bibinfo {volume}
  {99}},\ \bibinfo {pages} {022124} (\bibinfo {year} {2019})}\BibitemShut
  {NoStop}%
\bibitem [{\citenamefont {Maldacena}\ \emph {et~al.}(2016)\citenamefont
  {Maldacena}, \citenamefont {Shenker},\ and\ \citenamefont
  {Stanford}}]{Maldacena_Shenker_Stanford}%
  \BibitemOpen
  \bibfield  {author} {\bibinfo {author} {\bibfnamefont {J.}~\bibnamefont
  {Maldacena}}, \bibinfo {author} {\bibfnamefont {S.~H.}\ \bibnamefont
  {Shenker}}, \ and\ \bibinfo {author} {\bibfnamefont {D.}~\bibnamefont
  {Stanford}},\ }\href {\doibase 10.1007/JHEP08(2016)106} {\bibfield  {journal}
  {\bibinfo  {journal} {Journal of High Energy Physics}\ }\textbf {\bibinfo
  {volume} {2016}},\ \bibinfo {pages} {106} (\bibinfo {year}
  {2016})}\BibitemShut {NoStop}%
\bibitem [{\citenamefont {Gu}\ and\ \citenamefont
  {Kitaev}(2019)}]{Kitaev_2019}%
  \BibitemOpen
  \bibfield  {author} {\bibinfo {author} {\bibfnamefont {Y.}~\bibnamefont
  {Gu}}\ and\ \bibinfo {author} {\bibfnamefont {A.}~\bibnamefont {Kitaev}},\
  }\href {\doibase 10.1007/JHEP02(2019)075} {\bibfield  {journal} {\bibinfo
  {journal} {Journal of High Energy Physics}\ }\textbf {\bibinfo {volume}
  {2019}},\ \bibinfo {pages} {75} (\bibinfo {year} {2019})}\BibitemShut
  {NoStop}%
\bibitem [{\citenamefont {Chan}\ \emph {et~al.}(2018)\citenamefont {Chan},
  \citenamefont {De~Luca},\ and\ \citenamefont {Chalker}}]{Chalker_PRX_2019}%
  \BibitemOpen
  \bibfield  {author} {\bibinfo {author} {\bibfnamefont {A.}~\bibnamefont
  {Chan}}, \bibinfo {author} {\bibfnamefont {A.}~\bibnamefont {De~Luca}}, \
  and\ \bibinfo {author} {\bibfnamefont {J.~T.}\ \bibnamefont {Chalker}},\
  }\href {\doibase 10.1103/PhysRevX.8.041019} {\bibfield  {journal} {\bibinfo
  {journal} {Physical Review X}\ }\textbf {\bibinfo {volume} {8}},\ \bibinfo
  {pages} {041019} (\bibinfo {year} {2018})}\BibitemShut {NoStop}%
\bibitem [{\citenamefont {Bertini}\ \emph {et~al.}(2019)\citenamefont
  {Bertini}, \citenamefont {Kos},\ and\ \citenamefont {Prosen}}]{Prosen_2019}%
  \BibitemOpen
  \bibfield  {author} {\bibinfo {author} {\bibfnamefont {B.}~\bibnamefont
  {Bertini}}, \bibinfo {author} {\bibfnamefont {P.}~\bibnamefont {Kos}}, \ and\
  \bibinfo {author} {\bibfnamefont {T.}~\bibnamefont {Prosen}},\ }\href
  {\doibase 10.1103/PhysRevX.9.021033} {\bibfield  {journal} {\bibinfo
  {journal} {Physical Review X}\ }\textbf {\bibinfo {volume} {9}},\ \bibinfo
  {pages} {021033} (\bibinfo {year} {2019})}\BibitemShut {NoStop}%
\bibitem [{\citenamefont {Bertini}\ \emph {et~al.}(2020)\citenamefont
  {Bertini}, \citenamefont {Kos},\ and\ \citenamefont {Prosen}}]{Bertini_2019}%
  \BibitemOpen
  \bibfield  {author} {\bibinfo {author} {\bibfnamefont {B.}~\bibnamefont
  {Bertini}}, \bibinfo {author} {\bibfnamefont {P.}~\bibnamefont {Kos}}, \ and\
  \bibinfo {author} {\bibfnamefont {T.}~\bibnamefont {Prosen}},\ }\href
  {\doibase 10.21468/SciPostPhys.8.4.067} {\bibfield  {journal} {\bibinfo
  {journal} {SciPost Phys.}\ }\textbf {\bibinfo {volume} {8}},\ \bibinfo
  {pages} {67} (\bibinfo {year} {2020})}\BibitemShut {NoStop}%
\bibitem [{\citenamefont {Chamon}\ \emph {et~al.}(2014)\citenamefont {Chamon},
  \citenamefont {Hamma},\ and\ \citenamefont {Mucciolo}}]{Chamon_2014}%
  \BibitemOpen
  \bibfield  {author} {\bibinfo {author} {\bibfnamefont {C.}~\bibnamefont
  {Chamon}}, \bibinfo {author} {\bibfnamefont {A.}~\bibnamefont {Hamma}}, \
  and\ \bibinfo {author} {\bibfnamefont {E.~R.}\ \bibnamefont {Mucciolo}},\
  }\href {\doibase 10.1103/PhysRevLett.112.240501} {\bibfield  {journal}
  {\bibinfo  {journal} {Physical Review Letters}\ }\textbf {\bibinfo {volume}
  {112}},\ \bibinfo {pages} {240501} (\bibinfo {year} {2014})}\BibitemShut
  {NoStop}%
\bibitem [{\citenamefont {Zhou}\ \emph {et~al.}(2020)\citenamefont {Zhou},
  \citenamefont {Yang}, \citenamefont {Hamma},\ and\ \citenamefont
  {Chamon}}]{Zhou_2019}%
  \BibitemOpen
  \bibfield  {author} {\bibinfo {author} {\bibfnamefont {S.}~\bibnamefont
  {Zhou}}, \bibinfo {author} {\bibfnamefont {Z.-C.}\ \bibnamefont {Yang}},
  \bibinfo {author} {\bibfnamefont {A.}~\bibnamefont {Hamma}}, \ and\ \bibinfo
  {author} {\bibfnamefont {C.}~\bibnamefont {Chamon}},\ }\href {\doibase
  10.21468/SciPostPhys.9.6.087} {\bibfield  {journal} {\bibinfo  {journal}
  {SciPost Phys.}\ }\textbf {\bibinfo {volume} {9}},\ \bibinfo {pages} {87}
  (\bibinfo {year} {2020})}\BibitemShut {NoStop}%
\bibitem [{\citenamefont {Yan}\ \emph {et~al.}(2020)\citenamefont {Yan},
  \citenamefont {Cincio},\ and\ \citenamefont {Zurek}}]{Yan_2019}%
  \BibitemOpen
  \bibfield  {author} {\bibinfo {author} {\bibfnamefont {B.}~\bibnamefont
  {Yan}}, \bibinfo {author} {\bibfnamefont {L.}~\bibnamefont {Cincio}}, \ and\
  \bibinfo {author} {\bibfnamefont {W.~H.}\ \bibnamefont {Zurek}},\ }\href
  {\doibase 10.1103/PhysRevLett.124.160603} {\bibfield  {journal} {\bibinfo
  {journal} {Phys. Rev. Lett.}\ }\textbf {\bibinfo {volume} {124}},\ \bibinfo
  {pages} {160603} (\bibinfo {year} {2020})}\BibitemShut {NoStop}%
\bibitem [{\citenamefont {Gross}\ \emph {et~al.}(2007)\citenamefont {Gross},
  \citenamefont {Audenaert},\ and\ \citenamefont {Eisert}}]{Eisert_2007}%
  \BibitemOpen
  \bibfield  {author} {\bibinfo {author} {\bibfnamefont {D.}~\bibnamefont
  {Gross}}, \bibinfo {author} {\bibfnamefont {K.}~\bibnamefont {Audenaert}}, \
  and\ \bibinfo {author} {\bibfnamefont {J.}~\bibnamefont {Eisert}},\ }\href
  {\doibase 10.1063/1.2716992} {\bibfield  {journal} {\bibinfo  {journal}
  {Journal of Mathematical Physics}\ }\textbf {\bibinfo {volume} {48}},\
  \bibinfo {pages} {052104} (\bibinfo {year} {2007})}\BibitemShut {NoStop}%
\bibitem [{\citenamefont {Koenig}\ and\ \citenamefont
  {Smolin}(2014)}]{Koenig_2014}%
  \BibitemOpen
  \bibfield  {author} {\bibinfo {author} {\bibfnamefont {R.}~\bibnamefont
  {Koenig}}\ and\ \bibinfo {author} {\bibfnamefont {J.~A.}\ \bibnamefont
  {Smolin}},\ }\href {\doibase 10.1063/1.4903507} {\bibfield  {journal}
  {\bibinfo  {journal} {Journal of Mathematical Physics}\ }\textbf {\bibinfo
  {volume} {55}},\ \bibinfo {pages} {122202} (\bibinfo {year}
  {2014})}\BibitemShut {NoStop}%
\bibitem [{\citenamefont {Bertini}\ \emph {et~al.}(2018)\citenamefont
  {Bertini}, \citenamefont {Kos},\ and\ \citenamefont
  {Prosen}}]{Prosen_2018_2}%
  \BibitemOpen
  \bibfield  {author} {\bibinfo {author} {\bibfnamefont {B.}~\bibnamefont
  {Bertini}}, \bibinfo {author} {\bibfnamefont {P.}~\bibnamefont {Kos}}, \ and\
  \bibinfo {author} {\bibfnamefont {T.}~\bibnamefont {Prosen}},\ }\href
  {\doibase 10.1103/PhysRevLett.121.264101} {\bibfield  {journal} {\bibinfo
  {journal} {Physical Review Letters}\ }\textbf {\bibinfo {volume} {121}},\
  \bibinfo {pages} {264101} (\bibinfo {year} {2018})}\BibitemShut {NoStop}%
\bibitem [{\citenamefont {Bukov}\ \emph {et~al.}(2015)\citenamefont {Bukov},
  \citenamefont {D'Alessio},\ and\ \citenamefont
  {Polkovnikov}}]{Bukov_review_2015}%
  \BibitemOpen
  \bibfield  {author} {\bibinfo {author} {\bibfnamefont {M.}~\bibnamefont
  {Bukov}}, \bibinfo {author} {\bibfnamefont {L.}~\bibnamefont {D'Alessio}}, \
  and\ \bibinfo {author} {\bibfnamefont {A.}~\bibnamefont {Polkovnikov}},\
  }\href {\doibase 10.1080/00018732.2015.1055918} {\bibfield  {journal}
  {\bibinfo  {journal} {Advances in Physics}\ }\textbf {\bibinfo {volume}
  {64}},\ \bibinfo {pages} {139} (\bibinfo {year} {2015})}\BibitemShut
  {NoStop}%
\bibitem [{\citenamefont {Schlingemann}\ \emph {et~al.}(2008)\citenamefont
  {Schlingemann}, \citenamefont {Vogts},\ and\ \citenamefont
  {Werner}}]{Schlingemann_2008}%
  \BibitemOpen
  \bibfield  {author} {\bibinfo {author} {\bibfnamefont {D.-M.}\ \bibnamefont
  {Schlingemann}}, \bibinfo {author} {\bibfnamefont {H.}~\bibnamefont {Vogts}},
  \ and\ \bibinfo {author} {\bibfnamefont {R.~F.}\ \bibnamefont {Werner}},\
  }\href {\doibase 10.1063/1.3005565} {\bibfield  {journal} {\bibinfo
  {journal} {Journal of Mathematical Physics}\ }\textbf {\bibinfo {volume}
  {49}},\ \bibinfo {pages} {112104} (\bibinfo {year} {2008})}\BibitemShut
  {NoStop}%
\bibitem [{\citenamefont {Farrelly}(2020)}]{Farrelly_2019}%
  \BibitemOpen
  \bibfield  {author} {\bibinfo {author} {\bibfnamefont {T.}~\bibnamefont
  {Farrelly}},\ }\href {\doibase https://doi.org/10.22331/q-2020-11-30-368}
  {\bibfield  {journal} {\bibinfo  {journal} {Quantum}\ }\textbf {\bibinfo
  {volume} {4}},\ \bibinfo {pages} {368} (\bibinfo {year} {2020})},\ \Eprint
  {http://arxiv.org/abs/1904.13318} {arXiv:1904.13318 [quant-ph]} \BibitemShut
  {NoStop}%
\bibitem [{\citenamefont {S{\"u}nderhauf}\ \emph {et~al.}(2018)\citenamefont
  {S{\"u}nderhauf}, \citenamefont {{P{\'e}rez-Garc{\'i}a}}, \citenamefont
  {Huse}, \citenamefont {Schuch},\ and\ \citenamefont
  {Cirac}}]{Sunderhauf_2018}%
  \BibitemOpen
  \bibfield  {author} {\bibinfo {author} {\bibfnamefont {C.}~\bibnamefont
  {S{\"u}nderhauf}}, \bibinfo {author} {\bibfnamefont {D.}~\bibnamefont
  {{P{\'e}rez-Garc{\'i}a}}}, \bibinfo {author} {\bibfnamefont {D.~A.}\
  \bibnamefont {Huse}}, \bibinfo {author} {\bibfnamefont {N.}~\bibnamefont
  {Schuch}}, \ and\ \bibinfo {author} {\bibfnamefont {J.~I.}\ \bibnamefont
  {Cirac}},\ }\href {\doibase 10.1103/PhysRevB.98.134204} {\bibfield  {journal}
  {\bibinfo  {journal} {Physical Review B}\ }\textbf {\bibinfo {volume} {98}},\
  \bibinfo {pages} {134204} (\bibinfo {year} {2018})}\BibitemShut {NoStop}%
\bibitem [{\citenamefont {Chandran}\ and\ \citenamefont
  {Laumann}(2015)}]{Chandran_2015}%
  \BibitemOpen
  \bibfield  {author} {\bibinfo {author} {\bibfnamefont {A.}~\bibnamefont
  {Chandran}}\ and\ \bibinfo {author} {\bibfnamefont {C.~R.}\ \bibnamefont
  {Laumann}},\ }\href {\doibase 10.1103/PhysRevB.92.024301} {\bibfield
  {journal} {\bibinfo  {journal} {Physical Review B}\ }\textbf {\bibinfo
  {volume} {92}},\ \bibinfo {pages} {024301} (\bibinfo {year}
  {2015})}\BibitemShut {NoStop}%
\bibitem [{\citenamefont {Harrow}\ and\ \citenamefont
  {Low}(2009)}]{Harrow_2009}%
  \BibitemOpen
  \bibfield  {author} {\bibinfo {author} {\bibfnamefont {A.~W.}\ \bibnamefont
  {Harrow}}\ and\ \bibinfo {author} {\bibfnamefont {R.~A.}\ \bibnamefont
  {Low}},\ }\href {\doibase 10.1007/s00220-009-0873-6} {\bibfield  {journal}
  {\bibinfo  {journal} {Communications in Mathematical Physics}\ }\textbf
  {\bibinfo {volume} {291}},\ \bibinfo {pages} {257} (\bibinfo {year}
  {2009})}\BibitemShut {NoStop}%
\bibitem [{\citenamefont {Harrow}\ and\ \citenamefont
  {Mehraban}(2018)}]{Harrow_2018}%
  \BibitemOpen
  \bibfield  {author} {\bibinfo {author} {\bibfnamefont {A.}~\bibnamefont
  {Harrow}}\ and\ \bibinfo {author} {\bibfnamefont {S.}~\bibnamefont
  {Mehraban}},\ }\href@noop {} {\enquote {\bibinfo {title} {Approximate unitary
  $t$-designs by short random quantum circuits using nearest-neighbor and
  long-range gates},}\ } (\bibinfo {year} {2018}),\ \Eprint
  {http://arxiv.org/abs/1809.06957} {arXiv:1809.06957 [quant-ph]} \BibitemShut
  {NoStop}%
\bibitem [{\citenamefont {Brand{\~a}o}\ \emph
  {et~al.}(2016{\natexlab{a}})\citenamefont {Brand{\~a}o}, \citenamefont
  {Harrow},\ and\ \citenamefont {Horodecki}}]{Brandao_2016}%
  \BibitemOpen
  \bibfield  {author} {\bibinfo {author} {\bibfnamefont {F.~G. S.~L.}\
  \bibnamefont {Brand{\~a}o}}, \bibinfo {author} {\bibfnamefont {A.~W.}\
  \bibnamefont {Harrow}}, \ and\ \bibinfo {author} {\bibfnamefont
  {M.}~\bibnamefont {Horodecki}},\ }\href {\doibase 10.1007/s00220-016-2706-8}
  {\bibfield  {journal} {\bibinfo  {journal} {Communications in Mathematical
  Physics}\ }\textbf {\bibinfo {volume} {346}},\ \bibinfo {pages} {397}
  (\bibinfo {year} {2016}{\natexlab{a}})}\BibitemShut {NoStop}%
\bibitem [{\citenamefont {Brand{\~a}o}\ \emph
  {et~al.}(2016{\natexlab{b}})\citenamefont {Brand{\~a}o}, \citenamefont
  {Harrow},\ and\ \citenamefont {Horodecki}}]{Brandao_2016a}%
  \BibitemOpen
  \bibfield  {author} {\bibinfo {author} {\bibfnamefont {F.~G. S.~L.}\
  \bibnamefont {Brand{\~a}o}}, \bibinfo {author} {\bibfnamefont {A.~W.}\
  \bibnamefont {Harrow}}, \ and\ \bibinfo {author} {\bibfnamefont
  {M.}~\bibnamefont {Horodecki}},\ }\href {\doibase
  10.1103/PhysRevLett.116.170502} {\bibfield  {journal} {\bibinfo  {journal}
  {Physical Review Letters}\ }\textbf {\bibinfo {volume} {116}},\ \bibinfo
  {pages} {170502} (\bibinfo {year} {2016}{\natexlab{b}})}\BibitemShut
  {NoStop}%
\bibitem [{\citenamefont {Hunter-Jones}(2019)}]{Hunter_Jones_2019}%
  \BibitemOpen
  \bibfield  {author} {\bibinfo {author} {\bibfnamefont {N.}~\bibnamefont
  {Hunter-Jones}},\ }\href@noop {} {\enquote {\bibinfo {title} {Unitary designs
  from statistical mechanics in random quantum circuits},}\ } (\bibinfo {year}
  {2019}),\ \Eprint {http://arxiv.org/abs/1905.12053} {arXiv:1905.12053
  [quant-ph]} \BibitemShut {NoStop}%
\bibitem [{\citenamefont {Nakata}\ \emph {et~al.}(2017)\citenamefont {Nakata},
  \citenamefont {Hirche}, \citenamefont {Koashi},\ and\ \citenamefont
  {Winter}}]{Nakata_2017}%
  \BibitemOpen
  \bibfield  {author} {\bibinfo {author} {\bibfnamefont {Y.}~\bibnamefont
  {Nakata}}, \bibinfo {author} {\bibfnamefont {C.}~\bibnamefont {Hirche}},
  \bibinfo {author} {\bibfnamefont {M.}~\bibnamefont {Koashi}}, \ and\ \bibinfo
  {author} {\bibfnamefont {A.}~\bibnamefont {Winter}},\ }\href {\doibase
  10.1103/PhysRevX.7.021006} {\bibfield  {journal} {\bibinfo  {journal}
  {Physical Review X}\ }\textbf {\bibinfo {volume} {7}},\ \bibinfo {pages}
  {021006} (\bibinfo {year} {2017})}\BibitemShut {NoStop}%
\bibitem [{\citenamefont {Brand\~ao}\ \emph {et~al.}(2021)\citenamefont
  {Brand\~ao}, \citenamefont {Chemissany}, \citenamefont {Hunter-Jones},
  \citenamefont {Kueng},\ and\ \citenamefont {Preskill}}]{Brandao_2019}%
  \BibitemOpen
  \bibfield  {author} {\bibinfo {author} {\bibfnamefont {F.~G.}\ \bibnamefont
  {Brand\~ao}}, \bibinfo {author} {\bibfnamefont {W.}~\bibnamefont
  {Chemissany}}, \bibinfo {author} {\bibfnamefont {N.}~\bibnamefont
  {Hunter-Jones}}, \bibinfo {author} {\bibfnamefont {R.}~\bibnamefont {Kueng}},
  \ and\ \bibinfo {author} {\bibfnamefont {J.}~\bibnamefont {Preskill}},\
  }\href {\doibase 10.1103/PRXQuantum.2.030316} {\bibfield  {journal} {\bibinfo
   {journal} {PRX Quantum}\ }\textbf {\bibinfo {volume} {2}},\ \bibinfo {pages}
  {030316} (\bibinfo {year} {2021})}\BibitemShut {NoStop}%
\bibitem [{\citenamefont {Webb}(2016)}]{Webb_2016}%
  \BibitemOpen
  \bibfield  {author} {\bibinfo {author} {\bibfnamefont {Z.}~\bibnamefont
  {Webb}},\ }\href {https://arxiv.org/abs/1510.02769} {\bibfield  {journal}
  {\bibinfo  {journal} {Quantum Information \& Computation}\ }\textbf {\bibinfo
  {volume} {16}},\ \bibinfo {pages} {1379} (\bibinfo {year} {2016})},\ \Eprint
  {http://arxiv.org/abs/1510.02769} {arXiv:1510.02769 [quant-ph]} \BibitemShut
  {NoStop}%
\bibitem [{\citenamefont {Emerson}\ \emph {et~al.}(2003)\citenamefont
  {Emerson}, \citenamefont {Weinstein}, \citenamefont {Saraceno}, \citenamefont
  {Lloyd},\ and\ \citenamefont {Cory}}]{Emerson_2003}%
  \BibitemOpen
  \bibfield  {author} {\bibinfo {author} {\bibfnamefont {J.}~\bibnamefont
  {Emerson}}, \bibinfo {author} {\bibfnamefont {Y.~S.}\ \bibnamefont
  {Weinstein}}, \bibinfo {author} {\bibfnamefont {M.}~\bibnamefont {Saraceno}},
  \bibinfo {author} {\bibfnamefont {S.}~\bibnamefont {Lloyd}}, \ and\ \bibinfo
  {author} {\bibfnamefont {D.~G.}\ \bibnamefont {Cory}},\ }\href {\doibase
  10.1126/science.1090790} {\bibfield  {journal} {\bibinfo  {journal}
  {Science}\ }\textbf {\bibinfo {volume} {302}},\ \bibinfo {pages} {2098}
  (\bibinfo {year} {2003})}\BibitemShut {NoStop}%
\bibitem [{\citenamefont {Bennett}\ \emph {et~al.}(1996)\citenamefont
  {Bennett}, \citenamefont {DiVincenzo}, \citenamefont {Smolin},\ and\
  \citenamefont {Wootters}}]{Bennett_1996a}%
  \BibitemOpen
  \bibfield  {author} {\bibinfo {author} {\bibfnamefont {C.~H.}\ \bibnamefont
  {Bennett}}, \bibinfo {author} {\bibfnamefont {D.~P.}\ \bibnamefont
  {DiVincenzo}}, \bibinfo {author} {\bibfnamefont {J.~A.}\ \bibnamefont
  {Smolin}}, \ and\ \bibinfo {author} {\bibfnamefont {W.~K.}\ \bibnamefont
  {Wootters}},\ }\href {\doibase 10.1103/PhysRevA.54.3824} {\bibfield
  {journal} {\bibinfo  {journal} {Physical Review A}\ }\textbf {\bibinfo
  {volume} {54}},\ \bibinfo {pages} {3824} (\bibinfo {year}
  {1996})}\BibitemShut {NoStop}%
\bibitem [{\citenamefont {Abeyesinghe}\ \emph {et~al.}(2009)\citenamefont
  {Abeyesinghe}, \citenamefont {Devetak}, \citenamefont {Hayden},\ and\
  \citenamefont {Winter}}]{Abeyesinghe_2009}%
  \BibitemOpen
  \bibfield  {author} {\bibinfo {author} {\bibfnamefont {A.}~\bibnamefont
  {Abeyesinghe}}, \bibinfo {author} {\bibfnamefont {I.}~\bibnamefont
  {Devetak}}, \bibinfo {author} {\bibfnamefont {P.}~\bibnamefont {Hayden}}, \
  and\ \bibinfo {author} {\bibfnamefont {A.}~\bibnamefont {Winter}},\ }\href
  {\doibase 10.1098/rspa.2009.0202} {\bibfield  {journal} {\bibinfo  {journal}
  {Proceedings of the Royal Society A: Mathematical, Physical and Engineering
  Sciences}\ }\textbf {\bibinfo {volume} {465}},\ \bibinfo {pages} {2537}
  (\bibinfo {year} {2009})}\BibitemShut {NoStop}%
\bibitem [{\citenamefont {Magesan}\ \emph {et~al.}(2011)\citenamefont
  {Magesan}, \citenamefont {Gambetta},\ and\ \citenamefont
  {Emerson}}]{Magesan_2011}%
  \BibitemOpen
  \bibfield  {author} {\bibinfo {author} {\bibfnamefont {E.}~\bibnamefont
  {Magesan}}, \bibinfo {author} {\bibfnamefont {J.~M.}\ \bibnamefont
  {Gambetta}}, \ and\ \bibinfo {author} {\bibfnamefont {J.}~\bibnamefont
  {Emerson}},\ }\href {\doibase 10.1103/PhysRevLett.106.180504} {\bibfield
  {journal} {\bibinfo  {journal} {Physical Review Letters}\ }\textbf {\bibinfo
  {volume} {106}},\ \bibinfo {pages} {180504} (\bibinfo {year}
  {2011})}\BibitemShut {NoStop}%
\bibitem [{\citenamefont {Scott}(2008)}]{Scott_2008}%
  \BibitemOpen
  \bibfield  {author} {\bibinfo {author} {\bibfnamefont {A.~J.}\ \bibnamefont
  {Scott}},\ }\href {\doibase 10.1088/1751-8113/41/5/055308} {\bibfield
  {journal} {\bibinfo  {journal} {Journal of Physics A: Mathematical and
  Theoretical}\ }\textbf {\bibinfo {volume} {41}},\ \bibinfo {pages} {055308}
  (\bibinfo {year} {2008})}\BibitemShut {NoStop}%
\bibitem [{\citenamefont {Brown}\ and\ \citenamefont
  {Fawzi}(2015)}]{Brown_2015}%
  \BibitemOpen
  \bibfield  {author} {\bibinfo {author} {\bibfnamefont {W.}~\bibnamefont
  {Brown}}\ and\ \bibinfo {author} {\bibfnamefont {O.}~\bibnamefont {Fawzi}},\
  }\href {\doibase 10.1007/s00220-015-2470-1} {\bibfield  {journal} {\bibinfo
  {journal} {Communications in Mathematical Physics}\ }\textbf {\bibinfo
  {volume} {340}},\ \bibinfo {pages} {867} (\bibinfo {year}
  {2015})}\BibitemShut {NoStop}%
\bibitem [{\citenamefont {DiVincenzo}\ \emph {et~al.}(2002)\citenamefont
  {DiVincenzo}, \citenamefont {Leung},\ and\ \citenamefont
  {Terhal}}]{DiVincenzo_2002}%
  \BibitemOpen
  \bibfield  {author} {\bibinfo {author} {\bibfnamefont {D.}~\bibnamefont
  {DiVincenzo}}, \bibinfo {author} {\bibfnamefont {D.}~\bibnamefont {Leung}}, \
  and\ \bibinfo {author} {\bibfnamefont {B.}~\bibnamefont {Terhal}},\ }\href
  {\doibase 10.1109/18.985948} {\bibfield  {journal} {\bibinfo  {journal} {IEEE
  Transactions on Information Theory}\ }\textbf {\bibinfo {volume} {48}},\
  \bibinfo {pages} {580} (\bibinfo {year} {2002})}\BibitemShut {NoStop}%
\bibitem [{\citenamefont {Arute}\ \emph {et~al.}(2019)\citenamefont {Arute},
  \citenamefont {Arya}, \citenamefont {Babbush}, \citenamefont {Bacon},
  \citenamefont {Bardin}, \citenamefont {Barends}, \citenamefont {Biswas},
  \citenamefont {Boixo}, \citenamefont {Brandao}, \citenamefont {Buell},
  \citenamefont {Burkett}, \citenamefont {Chen}, \citenamefont {Chen},
  \citenamefont {Chiaro}, \citenamefont {Collins}, \citenamefont {Courtney},
  \citenamefont {Dunsworth}, \citenamefont {Farhi}, \citenamefont {Foxen},
  \citenamefont {Fowler}, \citenamefont {Gidney}, \citenamefont {Giustina},
  \citenamefont {Graff}, \citenamefont {Guerin}, \citenamefont {Habegger},
  \citenamefont {Harrigan}, \citenamefont {Hartmann}, \citenamefont {Ho},
  \citenamefont {Hoffmann}, \citenamefont {Huang}, \citenamefont {Humble},
  \citenamefont {Isakov}, \citenamefont {Jeffrey}, \citenamefont {Jiang},
  \citenamefont {Kafri}, \citenamefont {Kechedzhi}, \citenamefont {Kelly},
  \citenamefont {Klimov}, \citenamefont {Knysh}, \citenamefont {Korotkov},
  \citenamefont {Kostritsa}, \citenamefont {Landhuis}, \citenamefont
  {Lindmark}, \citenamefont {Lucero}, \citenamefont {Lyakh}, \citenamefont
  {Mandr{\`a}}, \citenamefont {McClean}, \citenamefont {McEwen}, \citenamefont
  {Megrant}, \citenamefont {Mi}, \citenamefont {Michielsen}, \citenamefont
  {Mohseni}, \citenamefont {Mutus}, \citenamefont {Naaman}, \citenamefont
  {Neeley}, \citenamefont {Neill}, \citenamefont {Niu}, \citenamefont {Ostby},
  \citenamefont {Petukhov}, \citenamefont {Platt}, \citenamefont {Quintana},
  \citenamefont {Rieffel}, \citenamefont {Roushan}, \citenamefont {Rubin},
  \citenamefont {Sank}, \citenamefont {Satzinger}, \citenamefont {Smelyanskiy},
  \citenamefont {Sung}, \citenamefont {Trevithick}, \citenamefont
  {Vainsencher}, \citenamefont {Villalonga}, \citenamefont {White},
  \citenamefont {Yao}, \citenamefont {Yeh}, \citenamefont {Zalcman},
  \citenamefont {Neven},\ and\ \citenamefont {Martinis}}]{Arute_2019}%
  \BibitemOpen
  \bibfield  {author} {\bibinfo {author} {\bibfnamefont {F.}~\bibnamefont
  {Arute}}, \bibinfo {author} {\bibfnamefont {K.}~\bibnamefont {Arya}},
  \bibinfo {author} {\bibfnamefont {R.}~\bibnamefont {Babbush}}, \bibinfo
  {author} {\bibfnamefont {D.}~\bibnamefont {Bacon}}, \bibinfo {author}
  {\bibfnamefont {J.~C.}\ \bibnamefont {Bardin}}, \bibinfo {author}
  {\bibfnamefont {R.}~\bibnamefont {Barends}}, \bibinfo {author} {\bibfnamefont
  {R.}~\bibnamefont {Biswas}}, \bibinfo {author} {\bibfnamefont
  {S.}~\bibnamefont {Boixo}}, \bibinfo {author} {\bibfnamefont {F.~G. S.~L.}\
  \bibnamefont {Brandao}}, \bibinfo {author} {\bibfnamefont {D.~A.}\
  \bibnamefont {Buell}}, \bibinfo {author} {\bibfnamefont {B.}~\bibnamefont
  {Burkett}}, \bibinfo {author} {\bibfnamefont {Y.}~\bibnamefont {Chen}},
  \bibinfo {author} {\bibfnamefont {Z.}~\bibnamefont {Chen}}, \bibinfo {author}
  {\bibfnamefont {B.}~\bibnamefont {Chiaro}}, \bibinfo {author} {\bibfnamefont
  {R.}~\bibnamefont {Collins}}, \bibinfo {author} {\bibfnamefont
  {W.}~\bibnamefont {Courtney}}, \bibinfo {author} {\bibfnamefont
  {A.}~\bibnamefont {Dunsworth}}, \bibinfo {author} {\bibfnamefont
  {E.}~\bibnamefont {Farhi}}, \bibinfo {author} {\bibfnamefont
  {B.}~\bibnamefont {Foxen}}, \bibinfo {author} {\bibfnamefont
  {A.}~\bibnamefont {Fowler}}, \bibinfo {author} {\bibfnamefont
  {C.}~\bibnamefont {Gidney}}, \bibinfo {author} {\bibfnamefont
  {M.}~\bibnamefont {Giustina}}, \bibinfo {author} {\bibfnamefont
  {R.}~\bibnamefont {Graff}}, \bibinfo {author} {\bibfnamefont
  {K.}~\bibnamefont {Guerin}}, \bibinfo {author} {\bibfnamefont
  {S.}~\bibnamefont {Habegger}}, \bibinfo {author} {\bibfnamefont {M.~P.}\
  \bibnamefont {Harrigan}}, \bibinfo {author} {\bibfnamefont {M.~J.}\
  \bibnamefont {Hartmann}}, \bibinfo {author} {\bibfnamefont {A.}~\bibnamefont
  {Ho}}, \bibinfo {author} {\bibfnamefont {M.}~\bibnamefont {Hoffmann}},
  \bibinfo {author} {\bibfnamefont {T.}~\bibnamefont {Huang}}, \bibinfo
  {author} {\bibfnamefont {T.~S.}\ \bibnamefont {Humble}}, \bibinfo {author}
  {\bibfnamefont {S.~V.}\ \bibnamefont {Isakov}}, \bibinfo {author}
  {\bibfnamefont {E.}~\bibnamefont {Jeffrey}}, \bibinfo {author} {\bibfnamefont
  {Z.}~\bibnamefont {Jiang}}, \bibinfo {author} {\bibfnamefont
  {D.}~\bibnamefont {Kafri}}, \bibinfo {author} {\bibfnamefont
  {K.}~\bibnamefont {Kechedzhi}}, \bibinfo {author} {\bibfnamefont
  {J.}~\bibnamefont {Kelly}}, \bibinfo {author} {\bibfnamefont {P.~V.}\
  \bibnamefont {Klimov}}, \bibinfo {author} {\bibfnamefont {S.}~\bibnamefont
  {Knysh}}, \bibinfo {author} {\bibfnamefont {A.}~\bibnamefont {Korotkov}},
  \bibinfo {author} {\bibfnamefont {F.}~\bibnamefont {Kostritsa}}, \bibinfo
  {author} {\bibfnamefont {D.}~\bibnamefont {Landhuis}}, \bibinfo {author}
  {\bibfnamefont {M.}~\bibnamefont {Lindmark}}, \bibinfo {author}
  {\bibfnamefont {E.}~\bibnamefont {Lucero}}, \bibinfo {author} {\bibfnamefont
  {D.}~\bibnamefont {Lyakh}}, \bibinfo {author} {\bibfnamefont
  {S.}~\bibnamefont {Mandr{\`a}}}, \bibinfo {author} {\bibfnamefont {J.~R.}\
  \bibnamefont {McClean}}, \bibinfo {author} {\bibfnamefont {M.}~\bibnamefont
  {McEwen}}, \bibinfo {author} {\bibfnamefont {A.}~\bibnamefont {Megrant}},
  \bibinfo {author} {\bibfnamefont {X.}~\bibnamefont {Mi}}, \bibinfo {author}
  {\bibfnamefont {K.}~\bibnamefont {Michielsen}}, \bibinfo {author}
  {\bibfnamefont {M.}~\bibnamefont {Mohseni}}, \bibinfo {author} {\bibfnamefont
  {J.}~\bibnamefont {Mutus}}, \bibinfo {author} {\bibfnamefont
  {O.}~\bibnamefont {Naaman}}, \bibinfo {author} {\bibfnamefont
  {M.}~\bibnamefont {Neeley}}, \bibinfo {author} {\bibfnamefont
  {C.}~\bibnamefont {Neill}}, \bibinfo {author} {\bibfnamefont {M.~Y.}\
  \bibnamefont {Niu}}, \bibinfo {author} {\bibfnamefont {E.}~\bibnamefont
  {Ostby}}, \bibinfo {author} {\bibfnamefont {A.}~\bibnamefont {Petukhov}},
  \bibinfo {author} {\bibfnamefont {J.~C.}\ \bibnamefont {Platt}}, \bibinfo
  {author} {\bibfnamefont {C.}~\bibnamefont {Quintana}}, \bibinfo {author}
  {\bibfnamefont {E.~G.}\ \bibnamefont {Rieffel}}, \bibinfo {author}
  {\bibfnamefont {P.}~\bibnamefont {Roushan}}, \bibinfo {author} {\bibfnamefont
  {N.~C.}\ \bibnamefont {Rubin}}, \bibinfo {author} {\bibfnamefont
  {D.}~\bibnamefont {Sank}}, \bibinfo {author} {\bibfnamefont {K.~J.}\
  \bibnamefont {Satzinger}}, \bibinfo {author} {\bibfnamefont {V.}~\bibnamefont
  {Smelyanskiy}}, \bibinfo {author} {\bibfnamefont {K.~J.}\ \bibnamefont
  {Sung}}, \bibinfo {author} {\bibfnamefont {M.~D.}\ \bibnamefont
  {Trevithick}}, \bibinfo {author} {\bibfnamefont {A.}~\bibnamefont
  {Vainsencher}}, \bibinfo {author} {\bibfnamefont {B.}~\bibnamefont
  {Villalonga}}, \bibinfo {author} {\bibfnamefont {T.}~\bibnamefont {White}},
  \bibinfo {author} {\bibfnamefont {Z.~J.}\ \bibnamefont {Yao}}, \bibinfo
  {author} {\bibfnamefont {P.}~\bibnamefont {Yeh}}, \bibinfo {author}
  {\bibfnamefont {A.}~\bibnamefont {Zalcman}}, \bibinfo {author} {\bibfnamefont
  {H.}~\bibnamefont {Neven}}, \ and\ \bibinfo {author} {\bibfnamefont {J.~M.}\
  \bibnamefont {Martinis}},\ }\href {\doibase 10.1038/s41586-019-1666-5}
  {\bibfield  {journal} {\bibinfo  {journal} {Nature}\ }\textbf {\bibinfo
  {volume} {574}},\ \bibinfo {pages} {505} (\bibinfo {year}
  {2019})}\BibitemShut {NoStop}%
\bibitem [{\citenamefont {Nandkishore}\ and\ \citenamefont
  {Huse}(2015)}]{Nandkishore_2015}%
  \BibitemOpen
  \bibfield  {author} {\bibinfo {author} {\bibfnamefont {R.}~\bibnamefont
  {Nandkishore}}\ and\ \bibinfo {author} {\bibfnamefont {D.~A.}\ \bibnamefont
  {Huse}},\ }\href {\doibase 10.1146/annurev-conmatphys-031214-014726}
  {\bibfield  {journal} {\bibinfo  {journal} {Annual Review of Condensed Matter
  Physics}\ }\textbf {\bibinfo {volume} {6}},\ \bibinfo {pages} {15} (\bibinfo
  {year} {2015})}\BibitemShut {NoStop}%
\bibitem [{\citenamefont {Aaronson}\ and\ \citenamefont
  {Gottesman}(2004)}]{Aaronson_2004}%
  \BibitemOpen
  \bibfield  {author} {\bibinfo {author} {\bibfnamefont {S.}~\bibnamefont
  {Aaronson}}\ and\ \bibinfo {author} {\bibfnamefont {D.}~\bibnamefont
  {Gottesman}},\ }\href {\doibase 10.1103/PhysRevA.70.052328} {\bibfield
  {journal} {\bibinfo  {journal} {Physical Review A}\ }\textbf {\bibinfo
  {volume} {70}},\ \bibinfo {pages} {052328} (\bibinfo {year}
  {2004})}\BibitemShut {NoStop}%
\bibitem [{\citenamefont {Gottesman}(1998)}]{Gottesman_1998}%
  \BibitemOpen
  \bibfield  {author} {\bibinfo {author} {\bibfnamefont {D.}~\bibnamefont
  {Gottesman}},\ }\href@noop {} {\enquote {\bibinfo {title} {The {{Heisenberg
  Representation}} of {{Quantum Computers}}},}\ } (\bibinfo {year} {1998}),\
  \Eprint {http://arxiv.org/abs/quant-ph/9807006} {arXiv:quant-ph/9807006
  [quant-ph]} \BibitemShut {NoStop}%
\bibitem [{\citenamefont {Bloch}\ \emph {et~al.}(2012)\citenamefont {Bloch},
  \citenamefont {Dalibard},\ and\ \citenamefont {Nascimb{\`e}ne}}]{Bloch_2012}%
  \BibitemOpen
  \bibfield  {author} {\bibinfo {author} {\bibfnamefont {I.}~\bibnamefont
  {Bloch}}, \bibinfo {author} {\bibfnamefont {J.}~\bibnamefont {Dalibard}}, \
  and\ \bibinfo {author} {\bibfnamefont {S.}~\bibnamefont {Nascimb{\`e}ne}},\
  }\href {\doibase 10.1038/nphys2259} {\bibfield  {journal} {\bibinfo
  {journal} {Nature Physics}\ }\textbf {\bibinfo {volume} {8}},\ \bibinfo
  {pages} {267} (\bibinfo {year} {2012})}\BibitemShut {NoStop}%
\bibitem [{\citenamefont {Nielsen}\ and\ \citenamefont
  {Chuang}(2010)}]{nielsen_chuang_2010}%
  \BibitemOpen
  \bibfield  {author} {\bibinfo {author} {\bibfnamefont {M.~A.}\ \bibnamefont
  {Nielsen}}\ and\ \bibinfo {author} {\bibfnamefont {I.~L.}\ \bibnamefont
  {Chuang}},\ }\href {\doibase 10.1017/CBO9780511976667} {\emph {\bibinfo
  {title} {Quantum Computation and Quantum Information: 10th Anniversary
  Edition}}}\ (\bibinfo  {publisher} {Cambridge University Press},\ \bibinfo
  {year} {2010})\BibitemShut {NoStop}%
\bibitem [{\citenamefont {Artin}(1957)}]{Artin_1957}%
  \BibitemOpen
  \bibfield  {author} {\bibinfo {author} {\bibfnamefont {E.}~\bibnamefont
  {Artin}},\ }\href@noop {} {\emph {\bibinfo {title} {Geometric Algebra}}}\
  (\bibinfo  {publisher} {Interscience Publishers},\ \bibinfo {year}
  {1957})\BibitemShut {NoStop}%
\bibitem [{\citenamefont {Huang}\ \emph {et~al.}(2019)\citenamefont {Huang},
  \citenamefont {Brand{\~a}o},\ and\ \citenamefont {Zhang}}]{Huang_2019}%
  \BibitemOpen
  \bibfield  {author} {\bibinfo {author} {\bibfnamefont {Y.}~\bibnamefont
  {Huang}}, \bibinfo {author} {\bibfnamefont {F.~G. S.~L.}\ \bibnamefont
  {Brand{\~a}o}}, \ and\ \bibinfo {author} {\bibfnamefont {Y.-L.}\ \bibnamefont
  {Zhang}},\ }\href {\doibase 10.1103/PhysRevLett.123.010601} {\bibfield
  {journal} {\bibinfo  {journal} {Physical Review Letters}\ }\textbf {\bibinfo
  {volume} {123}},\ \bibinfo {pages} {010601} (\bibinfo {year}
  {2019})}\BibitemShut {NoStop}%
\bibitem [{\citenamefont {G{\"u}tschow}\ \emph {et~al.}(2010)\citenamefont
  {G{\"u}tschow}, \citenamefont {Uphoff}, \citenamefont {Werner},\ and\
  \citenamefont {Zimbor{\'a}s}}]{Gutschow_2010}%
  \BibitemOpen
  \bibfield  {author} {\bibinfo {author} {\bibfnamefont {J.}~\bibnamefont
  {G{\"u}tschow}}, \bibinfo {author} {\bibfnamefont {S.}~\bibnamefont
  {Uphoff}}, \bibinfo {author} {\bibfnamefont {R.~F.}\ \bibnamefont {Werner}},
  \ and\ \bibinfo {author} {\bibfnamefont {Z.}~\bibnamefont {Zimbor{\'a}s}},\
  }\href {\doibase 10.1063/1.3278513} {\bibfield  {journal} {\bibinfo
  {journal} {Journal of Mathematical Physics}\ }\textbf {\bibinfo {volume}
  {51}},\ \bibinfo {pages} {015203} (\bibinfo {year} {2010})}\BibitemShut
  {NoStop}%
\bibitem [{\citenamefont {Zhu}(2017)}]{Zhu_2017}%
  \BibitemOpen
  \bibfield  {author} {\bibinfo {author} {\bibfnamefont {H.}~\bibnamefont
  {Zhu}},\ }\href {\doibase 10.1103/PhysRevA.96.062336} {\bibfield  {journal}
  {\bibinfo  {journal} {Phys. Rev. A}\ }\textbf {\bibinfo {volume} {96}},\
  \bibinfo {pages} {062336} (\bibinfo {year} {2017})}\BibitemShut {NoStop}%
\bibitem [{\citenamefont {Roberts}\ and\ \citenamefont
  {Yoshida}(2017)}]{Roberts_2017}%
  \BibitemOpen
  \bibfield  {author} {\bibinfo {author} {\bibfnamefont {D.~A.}\ \bibnamefont
  {Roberts}}\ and\ \bibinfo {author} {\bibfnamefont {B.}~\bibnamefont
  {Yoshida}},\ }\href {\doibase 10.1007/JHEP04(2017)121} {\bibfield  {journal}
  {\bibinfo  {journal} {Journal of High Energy Physics}\ }\textbf {\bibinfo
  {volume} {2017}},\ \bibinfo {pages} {121} (\bibinfo {year}
  {2017})}\BibitemShut {NoStop}%
\bibitem [{\citenamefont {Zimborás}\ \emph {et~al.}(2020)\citenamefont
  {Zimborás}, \citenamefont {Farrelly}, \citenamefont {Farkas},\ and\
  \citenamefont {Masanes}}]{Zimboras_2020}%
  \BibitemOpen
  \bibfield  {author} {\bibinfo {author} {\bibfnamefont {Z.}~\bibnamefont
  {Zimborás}}, \bibinfo {author} {\bibfnamefont {T.}~\bibnamefont {Farrelly}},
  \bibinfo {author} {\bibfnamefont {S.}~\bibnamefont {Farkas}}, \ and\ \bibinfo
  {author} {\bibfnamefont {L.}~\bibnamefont {Masanes}},\ }\href@noop {}
  {\enquote {\bibinfo {title} {Does causal dynamics imply local
  interactions?}}\ } (\bibinfo {year} {2020}),\ \Eprint
  {http://arxiv.org/abs/2006.10707} {arXiv:2006.10707 [quant-ph]} \BibitemShut
  {NoStop}%
\bibitem [{\citenamefont {Khemani}\ \emph {et~al.}(2018)\citenamefont
  {Khemani}, \citenamefont {Vishwanath},\ and\ \citenamefont
  {Huse}}]{Khemani_2018a}%
  \BibitemOpen
  \bibfield  {author} {\bibinfo {author} {\bibfnamefont {V.}~\bibnamefont
  {Khemani}}, \bibinfo {author} {\bibfnamefont {A.}~\bibnamefont {Vishwanath}},
  \ and\ \bibinfo {author} {\bibfnamefont {D.~A.}\ \bibnamefont {Huse}},\
  }\href {\doibase 10.1103/PhysRevX.8.031057} {\bibfield  {journal} {\bibinfo
  {journal} {Phys. Rev. X}\ }\textbf {\bibinfo {volume} {8}},\ \bibinfo {pages}
  {031057} (\bibinfo {year} {2018})}\BibitemShut {NoStop}%
\bibitem [{\citenamefont {Hunter-Jones}(2018)}]{Hunter_Jones_2018}%
  \BibitemOpen
  \bibfield  {author} {\bibinfo {author} {\bibfnamefont {N.}~\bibnamefont
  {Hunter-Jones}},\ }\href@noop {} {\enquote {\bibinfo {title} {Operator growth
  in random quantum circuits with symmetry},}\ } (\bibinfo {year} {2018}),\
  \Eprint {http://arxiv.org/abs/1812.08219} {arXiv:1812.08219 [quant-ph]}
  \BibitemShut {NoStop}%
\bibitem [{\citenamefont {Page}(1993)}]{Page_1993}%
  \BibitemOpen
  \bibfield  {author} {\bibinfo {author} {\bibfnamefont {D.~N.}\ \bibnamefont
  {Page}},\ }\href {\doibase 10.1103/PhysRevLett.71.1291} {\bibfield  {journal}
  {\bibinfo  {journal} {Physical Review Letters}\ }\textbf {\bibinfo {volume}
  {71}},\ \bibinfo {pages} {1291} (\bibinfo {year} {1993})}\BibitemShut
  {NoStop}%
\bibitem [{\citenamefont {Witten}(1998)}]{Witten_1998}%
  \BibitemOpen
  \bibfield  {author} {\bibinfo {author} {\bibfnamefont {E.}~\bibnamefont
  {Witten}},\ }\href {\doibase 10.4310/ATMP.1998.v2.n2.a2} {\bibfield
  {journal} {\bibinfo  {journal} {Advances in Theoretical and Mathematical
  Physics}\ }\textbf {\bibinfo {volume} {2}},\ \bibinfo {pages} {253 – 291}
  (\bibinfo {year} {1998})}\BibitemShut {NoStop}%
\bibitem [{\citenamefont {Gross}\ \emph {et~al.}(2021)\citenamefont {Gross},
  \citenamefont {Nezami},\ and\ \citenamefont {Walter}}]{Gross_2019}%
  \BibitemOpen
  \bibfield  {author} {\bibinfo {author} {\bibfnamefont {D.}~\bibnamefont
  {Gross}}, \bibinfo {author} {\bibfnamefont {S.}~\bibnamefont {Nezami}}, \
  and\ \bibinfo {author} {\bibfnamefont {M.}~\bibnamefont {Walter}},\ }\href
  {\doibase 10.1007/s00220-021-04118-7} {\bibfield  {journal} {\bibinfo
  {journal} {Communications in Mathematical Physics}\ }\textbf {\bibinfo
  {volume} {385}},\ \bibinfo {pages} {1325–1393} (\bibinfo {year}
  {2021})}\BibitemShut {NoStop}%
\bibitem [{\citenamefont {Calderbank}\ \emph {et~al.}(1998)\citenamefont
  {Calderbank}, \citenamefont {Rains}, \citenamefont {Shor},\ and\
  \citenamefont {Sloane}}]{Calderbank_1998}%
  \BibitemOpen
  \bibfield  {author} {\bibinfo {author} {\bibfnamefont {A.}~\bibnamefont
  {Calderbank}}, \bibinfo {author} {\bibfnamefont {E.}~\bibnamefont {Rains}},
  \bibinfo {author} {\bibfnamefont {P.}~\bibnamefont {Shor}}, \ and\ \bibinfo
  {author} {\bibfnamefont {N.}~\bibnamefont {Sloane}},\ }\href {\doibase
  10.1109/18.681315} {\bibfield  {journal} {\bibinfo  {journal} {IEEE
  Transactions on Information Theory}\ }\textbf {\bibinfo {volume} {44}},\
  \bibinfo {pages} {1369} (\bibinfo {year} {1998})}\BibitemShut {NoStop}%
\bibitem [{\citenamefont {Gross}(2006)}]{Gross_2006a}%
  \BibitemOpen
  \bibfield  {author} {\bibinfo {author} {\bibfnamefont {D.}~\bibnamefont
  {Gross}},\ }\href {\doibase 10.1063/1.2393152} {\bibfield  {journal}
  {\bibinfo  {journal} {Journal of Mathematical Physics}\ }\textbf {\bibinfo
  {volume} {47}},\ \bibinfo {pages} {122107} (\bibinfo {year}
  {2006})}\BibitemShut {NoStop}%
\end{thebibliography}%

\end{document}